 \newcommand{\bs}{\bigskip}
 \newcommand{\ms}{\medskip}
 \newcommand{\n}{\noindent}
 \newcommand{\s}{\smallskip}
 \newcommand{\hs}[1]{\hspace*{ #1 mm}}
 \newcommand{\vs}[1]{\vspace*{ #1 mm}}
 \newcommand{\setempty}{\varnothing}
 \newcommand{\real}{\mathbb{R}}
 \newcommand{\nat}{\mathbb{N}}
 \newcommand{\integer}{\mathbb{Z}}
 \newcommand{\rational}{\mathbb{Q}}
 \newcommand{\complex}{\mathbb{C}}
 \newcommand{\co}{\mathrm{co}\mbox{-}}
 \newcommand{\FF}{{\cal F}}
 \newcommand{\HH}{{\cal H}}
 \newcommand{\GG}{{\cal G}}
 \newcommand{\SSS}{{\cal S}}
 \newcommand{\PP}{{\cal P}}
 \newcommand{\RR}{{\cal R}}
\theoremstyle{plain}
 \newtheorem{theorem}{Theorem}[section]
 \newtheorem{lemma}[theorem]{Lemma}
 \newtheorem{proposition}[theorem]{{\bf Proposition}}
 \newtheorem{corollary}[theorem]{Corollary}
  \newtheorem{definition}[theorem]{Definition}}
\newtheorem{example}[theorem]{Example}}
 \newtheorem{claim}[theorem]{Claim}
 \newenvironment{proofof}[1]{\vspace*{5mm} \par \noindent
         {\bf Proof of #1.\hs{2}}}{\hfill$\Box$ \vspace*{3mm}}
 \newenvironment{proof}{\par \noindent
            {\bf Proof. \hs{2}}}{\hfill$\Box$ \vspace*{3mm}}
 \newcommand{\ceilings}[1]{\lceil #1 \rceil}
 \newcommand{\floors}[1]{\lfloor #1 \rfloor}
 \newcommand{\qubit}[1]{| #1 \rangle}
 \newcommand{\bra}[1]{\langle #1 |}
 \newcommand{\ket}[1]{| #1 \rangle}
 \newcommand{\measure}[2]{\langle #1 | #2 \rangle}
\newcommand{\ignore}[1]{}
 \newcommand{\bfzero}{{\bf 0}}
 \newcommand{\squareqp}{\Box^{\mathrm{QP}}_{1}}
 \newcommand{\hatsquareqp}{\widehat{\Box^{\mathrm{QP}}_{1}}}
 \newcommand{\ilog}{\mathrm{ilog}}
 \newcommand{\iloglog}{\mathrm{iloglog}}
 \newcommand{\bqpolylogtime}{\mathrm{BQPOLYLOGTIME}}
 \newcommand{\dlogtime}{\mathrm{DLOGTIME}}
 \newcommand{\nlogtime}{\mathrm{NLOGTIME}}
 \newcommand{\ppolylogtime}{\mathrm{PPOLYLOGTIME}}
 \newcommand{\bqp}{\mathrm{BQP}}
 \newcommand{\nqp}{\mathrm{NQP}}
 \newcommand{\bqlogtime}{\mathrm{BQLOGTIME}}
\begin{document}

\pagestyle{plain}
\pagenumbering{arabic}
\setcounter{page}{1}
\setcounter{footnote}{0}


\begin{center}
{\Large {\bf Elementary Quantum Recursion Schemes That Capture Quantum Polylogarithmic Time Computability of Quantum Functions}}\footnote{A preliminary report  \cite{Yam22b} has appeared under a slightly different title in the Proceedings of the 28th International Conference on Logic, Language, Information, and Computation (WoLLIC 2022), Ia\c{s}i, Romania, September 20--23, 2022, Lecture Notes in Computer Science, vol. 13468, pp. 88--104, Springer, 2022.}
\bs\s\\

{\sc Tomoyuki Yamakami}\footnote{Present Affiliation: Faculty of Engineering, University of Fukui, 3-9-1 Bunkyo, Fukui 910-8507, Japan} \bs\\
\end{center}
\ms


\begin{abstract}
Quantum computing has been studied over the past four decades based on two computational models of quantum circuits and quantum Turing machines. To capture quantum polynomial-time computability, a new recursion-theoretic approach  was taken lately by Yamakami [J. Symb. Logic 80, pp.~1546--1587, 2020] by way of recursion schematic definition, which constitutes six  initial quantum functions and three construction schemes of composition, branching, and multi-qubit quantum recursion. By taking a similar approach, we look into quantum polylogarithmic-time computability and further explore the expressing power of elementary schemes designed for such quantum computation.
In particular, we introduce an elementary form of the quantum recursion, called the fast quantum recursion, and formulate $EQS$ (elementary quantum schemes) of ``elementary'' quantum functions. This class $EQS$ captures  exactly quantum polylogarithmic-time computability, which forms the complexity class BQPOLYLOGTIME.
We also demonstrate the separation of BQPOLYLOGTIME from NLOGTIME and PPOLYLOGTIME.
As a natural extension of $EQS$, we further consider an algorithmic procedural scheme that implements the well-known divide-and-conquer strategy. This divide-and-conquer scheme helps compute the parity function but the scheme cannot be realized within our system $EQS$.

\ms
\n{\bf Keywords:}  {recursion schematic definition, quantum Turing machine, fast quantum recursion, quantum polylogarithmic-time computability, divide-and-conquer strategy}
\end{abstract}

\sloppy
\section{Background, Motivations, and Challenges}\label{sec:introduction}

We provide a quick overview of this work from its background to challenging open questions to tackle.

\subsection{Recursion Schematic Definitions That Capture Quantum Computability}\label{sec:schematic-def}

Over the four decades, the study of quantum computing has made significant progress in both theory and practice. Quantum computing is one of the most anticipated nature-inspired computing paradigms today because it relies on the principle of quantum mechanics, which is assumed to govern nature. The core of quantum computing is an exquisite  handling of \emph{superpositions}, which are linear combinations of basic quantum states expressing classical strings,  and \emph{entanglement}, which binds multiple quantum bits (or qubits, for short) with no direct contact, of quantum states.
Early models of quantum computation were proposed by  Benioff \cite{Ben80}, Deutsch \cite{Deu85,Deu89}, and Yao \cite{Yao93} as quantum analogues of Turing machines and Boolean circuits. Those models have been used in a theoretical study of quantum computing within the area of computational complexity theory since their introduction. The quantum Turing machine (QTM) model is a natural quantum extension of the classical Turing machine (TM) model, which has successfully served for decades as a basis to theoretical aspects of computer science.
The basic formulation of the QTM model\footnote{Bernstein and Vazirani discussed only single-tape QTMs. The multiple-tape model of QTMs was distinctly discussed in \cite{Yam99,Yam03}. The foundation of QTMs was also studied in \cite{NO02,ON00}.} used today attributes to Bernstein and Vazirani \cite{BV97}.
QTMs provide a blueprint of algorithmic procedures by describing how to transform each superposition of configurations of the machines.
There have been other computational models proposed to capture different aspects of quantum computing, including a (black-box) query model.

A \emph{recursion schematic approach} has been a recent incentive for the full characterization of the notion of quantum polynomial-time computability \cite{Yam20}, which can be seen as a quantum extension of deterministic polynomial-time computability.
The primary purpose of taking such an exceptional approach toward quantum computability stems from a successful development of recursion theory (or recursive function theory) for classical computability. Earlier, Peano, Herbrand, G\"{o}del, and Kleene \cite{Kle36,Kle43} all made significant contributions to paving a straight road to a coherent study of computability and decidability from a purely logical aspect (see, e.g., \cite{Soa96} for further references therein). In this theory, recursive functions are formulated in the following simple ``schematic'' way. (i) We begin with a few initial functions. (ii) We sequentially build more complicated functions by applying a small set of construction schemes to the already-built functions. The description of how to construct a recursive function  actually serves as a blueprint of the function, which resembles like a ``computer program'' of the modern times in such a way that each construction scheme is a command line of a computer program.
Hence, a schematic definition itself may be viewed as a high-level programming language by which we can render a computer program that describes the movement of any recursive function. As the benefit of the use of such a schematic definition, when we express a target function as a series of schemes, the ``size'' of this series can serve as a ``descriptional complexity measure'' of the function. This has lead to a fully enriched study of the descriptive complexity of functions. This schematic approach sharply contrasts the ones based on Turing machines as well as Boolean circuit families to formulate the recursive functions.
A similar approach taken in \cite{Yam20} aims at capturing the polynomial-time computability of \emph{quantum functions},\footnote{This notion should be distinguished from the same terminology used in \cite{Yam03}, where ``quantum functions'' mean mappings from binary strings to acceptance probabilities of QTMs.} each of which maps a finite-dimensional Hilbert space to itself, in place of the aforementioned recursive functions.
Unlike quantum transitions of QTMs, the recursion schemes can provide a  clear view of how quantum transforms act on target quantum states in the Hilbert space.

Two important classes of quantum functions, denoted $\squareqp$ and $\hatsquareqp$, were formulated in \cite{Yam20} from six initial quantum functions and three basic construction schemes. The major difference between $\squareqp$ and $\hatsquareqp$ is the permitted use of \emph{quantum measurement operations}, which make quantum states collapse to classical states with incurred probabilities.
Unfortunately, this is a non-reversible operation.
Shown in \cite{Yam20} are a precise characterization of quantum polynomial-time computability and a quantum analogue of the normal form theorem. A key in his recursion schematic definition is an introduction of the \emph{multi-qubit quantum recursion scheme}, which looks quite different  in its formulation from the corresponding classical recursion scheme. We will give its formal definition in Section \ref{sec:elementary-scheme} as \emph{Scheme T}. This quantum recursion scheme turns out to be so powerful that it has helped us capture the notion of quantum polynomial-time computability.

There are a few but important advantages of using recursion schematic definitions to introduce quantum computing over other quantum computational models, such as QTMs and quantum circuits. The most significant advantage is in fact that there is no use of the \emph{well-formedness requirement} of QTMs and the \emph{uniformity requirement} of quantum circuit families. These natural but cumbersome requirements are necessary to guarantee the unitary nature of QTMs and the algorithmic construction of quantum circuit families. Recursion schematic definitions, on the contrary, avoid such extra  requirements and make it much simpler to design
a quantum algorithm in the form of a quantum function for solving a target combinatorial problem. This is because each basic scheme naturally embodies ``well-formedness''  and ``uniformity'' in its formulation.
More accurately, the scheme produces only  ``well-formed''  and ``uniform''  quantum functions by way of applying the scheme directly to the existing quantum functions.
In fact, the schematic definition of \cite{Yam20} constitutes only six  initial quantum functions together with three basic construction schemes and this fact helps us render a short procedural sequence of how to construct each target quantum function.
A descriptional aspect of the recursion schematic definition of \cite{Yam20} has become a helpful guide to develop even a suitable form of quantum programming language \cite{HPS23}.
It is worth mentioning that such a recursion schematic definition has been  made possible because of an extensional use of the bra and ket notations. See Section \ref{sec:bra-ket-notation} for detailed explanations.

A schematic approach of \cite{Yam20} to quantum computing has exhibited  a great possibility of treating quantum computable functions in such a way that is quite different from the traditional ways with QTMs and quantum circuit families. Recursion schemes further lead us to the descriptional complexity of quantum functions.
Concerning the recursion schematic definitions, numerous questions still remain unanswered. It is important and also imperative to expand the scope of our research to various resource-bounded quantum computing.
In this work, we particularly wish to turn our attention to a ``limited'' form of quantum computing, which can be naturally implemented by small runtime-restricted QTMs as well as families of small depth-bounded quantum circuits, and we further wish to examine how the schematic approach of \cite{Yam20} copes with such restricted quantum computing.

Of practical importance, resource-bounded computability has been studied in the classical setting based on various computational models, including families of Boolean circuits and time-bounded (classical) Turing machines (or TMs, for short).
Among all reasonable resource bounds, we are particularly keen to \emph{(poly)logarithmic time}. The logarithmic-time (abbreviated as logtime) computability may be one of the most restricted but meaningful, practical resource-bounded computabilities and it
was discussed earlier by Buss \cite{Bus87} and Barrington, Immerman, and Straubing \cite{BIS90} in terms of TMs equipped with index tapes to access particular locations of input symbols.
Hereafter, we denote by $\dlogtime$ (resp., $\nlogtime$) the class of decision problems solvable by logtime deterministic (resp., nondeterministic) TMs.
The logtime computability has played a central role in characterizing a uniform notion of constant-depth Boolean circuit families. As shown in \cite{BIS90}, logtime computability is also closely related to the first-order logic with the special predicate $BIT$.
As slightly more general resource-bounded computability,  \emph{polylogarithmic-time} (abbreviated as \emph{polylogtime}) computability has been widely studied in the literature. Numerous polylogtime  algorithms were developed to solve, for example, the matrix chain ordering problem \cite{BRS94} and deterministic graph construction \cite{HLT01}.
There were also studies on data structures that support polylog operations \cite{Mun84} and probabilistic checking of polylog bits of proofs \cite{BFLS91}. We denote by $\ppolylogtime$ the complexity class of decision problems solvable by polylogtime probabilistic TMs with unbounded-error probability.

In the setting of quantum computing, nevertheless, we are able to take a similar approach using resource-bounded QTMs. A quantum analogue of $\dlogtime$, called $\bqlogtime$, was lately discussed in connection to an oracle separation between $\bqp$ and the polynomial hierarchy in \cite{RT22}.
In this work, however, we are interested in polylogtime-bounded quantum computations, and thus we wish to look into the characteristics of quantum polylogtime computability.
Later in Section \ref{sec:polylogtime-QTM}, we formally describe the fundamental model of \emph{polylogtime QTMs} as a quantum analogue of classical polylogtime TMs.
These polylogtime QTMs naturally induce the corresponding bounded-error complexity class $\bqpolylogtime$.
In Section \ref{sec:BQLOGTIME}, $\bqpolylogtime$ is shown to differ in computational power from $\nlogtime$ as well as $\ppolylogtime$.

\subsection{Our Challenges in This Work}\label{sec:challenge}

We intend to extend the scope of the existing research on resource-bounded quantum computability by way of an introduction of a small set of recursion schemes defining runtime-restricted quantum computability.
In particular, we look into quantum polylogtime computability and seek out an appropriate recursion schematic definition to exactly capture such computability.
Since $\squareqp$ and $\hatsquareqp$ precisely capture quantum polynomial-time computability, it is natural to raise a question of how we can capture quantum polylogtime computability in a similar fashion.

As noted earlier, the multi-qubit quantum recursion scheme of \cite{Yam20} is capable of precisely capturing quantum polynomial-time computability.
A basic idea of this quantum recursion scheme is, starting with $n$ qubits,  to modify at each round of recursion the first $k$ qubits and continue to the next round with the rest of the qubits after discarding the first $k$ qubits.
This recursive process slowly consumes the entire $n$ qubits and finally grinds to halt after linearly-many recursive rounds.
To describe limited quantum computability, in contrast, how can we weaken the quantum recursion scheme?
Our attempt in this work is to speed up this recursive process significantly by discarding a bundle of $\ceilings{n/2}$(or $\floors{n/2}$) qubits from the entire $n$ qubits at each round. Such a way of halving the number of qubits at each recursive round makes the recursive process terminate significantly faster; in fact, ending in at most $\ceilings{\log{n}}$ recursive rounds. For this very reason, we intend to call this weakened recursive process the \emph{(code-controlled) fast quantum recursion scheme}.

Although the fast quantum recursion scheme is significantly weaker in power than the quantum recursion scheme of \cite{Yam20}, the scheme still generates numerous interesting and useful quantum functions, as listed in Lemmas \ref{Skip-function}--\ref{AND-OR-function} and \ref{SIZE-function}--\ref{various-schemes}.
More importantly, it is possible to implement the binary search strategy, which has been a
widely-used key programming technique.
We call the set of all quantum functions generated from six initial quantum functions and two   construction schemes together with the fast quantum recursion scheme by $EQS$ (elementary quantum schemes) in  Section \ref{sec:definitions}.
This class $EQS$ turns out to precisely characterize quantum polylogtime  computability (Theorems \ref{QTM-simulation}--\ref{converse-simulation}).

To cope with polylogtime computability, we need an appropriate encoding of various types of ``objects'' (such as numbers, graphs, matrices, tape symbols) into qubits of the same length and also another efficient way of decoding the original ``objects'' from those encoded qubits. For this purpose, we introduce in Section \ref{sec:different-object} a  code-skipping scheme, which recognizes encoded segments to skip them one by one. This scheme in fact plays a key role in performing the fast recursion scheme.

Throughout this work, we try to promote much better understandings of resource-bounded quantum computing in theory and in practice.


\bs
\n{\bf New Materials after the Preliminary Conference Report.}\hs{2}
The current work corrects and significantly alters the preliminary conference report \cite{Yam22b}, particularly in the following points. Five schemes, which constitute $EQS$, in the preliminary report have been modified and reorganized, and thus they look slightly different in their formulations. In particular, Scheme V is introduced to establish a precise characterization of quantum functions computable by polylogtime QTMs.

Beyond the scope of the preliminary report \cite{Yam22b}, this work further looks into another restricted quantum algorithmic procedure, known as the \emph{divide-and-conquer strategy} in Section \ref{sec:divide-and-conquer}. A key idea behind this strategy is to inductively split an given instance into small pieces and then inductively assemble them piece by piece after an appropriate modification.
This new scheme helps us capture the parity function, which requires more than polylogtime quantum computation.
This fact implies that this new procedure is not ``realized'' in the framework of $EQS$.

\section{Preparation: Notions and Notation}\label{sec:preparation}

We will briefly explain basic notions and notation used throughout this work.

\subsection{Numbers, Strings, and Languages}\label{sec:numbers}

The notations $\integer$, $\rational$, and $\real$ respectively denote the sets of all integers, of all rational numbers, and of all real numbers. In particular, we set $\bar{\rational}=\rational\cap [0,1]$.
The \emph{natural numbers} are nonnegative integers and the notation $\nat$ expresses the set of those numbers.
We further write $\nat^{+}$ to denote the set $\nat-\{0\}$.
For two integers $m,n$ with $m\leq n$, $[m,n]_{\integer}$ indicates the \emph{integer interval} $\{m,m+1,m+2,\ldots,n\}$. In particular, we abbreviate $[1,n]_{\integer}$ as $[n]$ when $n\geq1$.
In addition, $\complex$ denotes the set of all complex numbers.
We use the notations $\ceilings{\cdot}$ and $\floors{\cdot}$ for the ceiling function and the floor function, respectively.
In this work, all \emph{polynomials} have nonnegative integer coefficients and all \emph{logarithms} are taken to the base $2$.
We further define $\ilog(n)$ and $\iloglog(n)$ to be $\ceilings{\log{n}}$  and $\ceilings{\log\log{n}}$ for any $n\in\nat$, respectively.
To circumvent any cumbersome description to avoid the special case of $n=0$, we intentionally set $\ilog{0}=\iloglog(0)=0$.
The notation $\imath$ denotes $\sqrt{-1}$ and $e$ does the base of natural logarithms.  Given a complex number $\alpha$, $\alpha^*$ denotes the \emph{complex conjugate} of $\alpha$.
As for a nonempty set of quantum amplitudes, which is a subset of $\complex$, we use the notation $K$.

A nonempty finite set of ``symbols'' (or ``letters'') is called an \emph{alphabet} and a sequence of such symbols from a fixed alphabet $\Sigma$ is called a \emph{string} over $\Sigma$. The total number of occurrences of symbols in a given string $x$ is the \emph{length} of $x$ and is denoted $|x|$. The \emph{empty string} $\lambda$ is a unique string of length $0$.
A collection of those strings over $\Sigma$ is a \emph{language} over $\Sigma$.
Throughout this work, we deal only with \emph{binary strings}, which are sequences of 0s and 1s. Given a number $n\in\nat$, $\Sigma^{n}$ (resp., $\Sigma^{\leq n}$) denotes the set of all strings of length exactly $n$ (resp., at most $n$). Let $\Sigma^* = \bigcup_{n\in\nat}\Sigma^n$ and $\Sigma^{+} = \Sigma^*-\{\lambda\}$.
Given a string $x$ and a bit $b\in\{0,1\}$, the notation $\#_b(x)$ denotes the total number of occurrences of the symbol $b$ in $x$.
Due to the nature of quantum computation, we also discuss ``promise'' decision problems. A pair $(A,R)$ of subsets of $\Sigma^*$ is said to be a  \emph{promise (decision) problem} over $\Sigma$ if $A\cup R \subseteq \Sigma^*$ and $A\cap R =\setempty$. Intuitively, $A$ and $R$ are  respectively composed of accepted strings and rejected strings. When $A\cup R=\Sigma^*$, $(A,R)$ becomes $(A,\Sigma^*-A)$ and this can be identified with the language $A$.

Given a number $n\in\nat$, we need to partition it into two halves. To describe the ``left half'' and the ``right half'' of $n$, we introduce two special functions: $LH(n) = \ceilings{n/2}$ and $RH(n) = \floors{n/2}$.

For a function $f$ and a number $k\in\nat^{+}$, we write $f^k$ for the $k$ consecutive applications of $f$ to an input. For example, $f^{2}(x) = f\circ f(x)=f(f(x))$ and $f^{3}(x) = f\circ f \circ f(x) = f( f ( f(x)))$.
The functions $LH$ and $RH$ satisfy the following property.

\begin{lemma}\label{HALF-value}
Let $k,n\in\nat^{+}$ be any two numbers. (1) If $n$ is in $[2^{k-1}+1,2^k]_{\integer}$ (i.e., $\ceilings{\log{n}}=k$), then $LH^{k}(n)=1$ holds. Moreover, if $k\geq 2$, then $LH^{k-1}(n) =2$. (2) If $n$ is in $[2^{k},2^{k+1}-1]_{\integer}$ (i.e., $\floors{\log{n}}=k$), then $RH^{k}(n)=1$ holds. Moreover, if $k\geq 2$, then $RH^{k-1}(n)\in\{2,3\}$.
\end{lemma}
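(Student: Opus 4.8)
The plan is to prove both parts by a direct inductive argument on $k$, exploiting the fact that $LH$ and $RH$ are monotone nondecreasing and that each application roughly halves its argument. The key observation I would isolate first is the following pair of ``interval-shrinking'' facts: if $n \in [2^{k-1}+1, 2^k]_{\integer}$ with $k \geq 2$, then $LH(n) = \ceilings{n/2} \in [2^{k-2}+1, 2^{k-1}]_{\integer}$, and symmetrically, if $n \in [2^k, 2^{k+1}-1]_{\integer}$ with $k \geq 1$, then $RH(n) = \floors{n/2} \in [2^{k-1}, 2^k - 1]_{\integer}$. These say that one application of $LH$ (resp.\ $RH$) shifts the relevant interval down by exactly one level of the dyadic scale. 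Both inclusions are elementary: for $LH$, the lower endpoint gives $\ceilings{(2^{k-1}+1)/2} = 2^{k-2}+1$ and the upper gives $\ceilings{2^k/2} = 2^{k-1}$, using monotonicity of the ceiling for the intermediate values; the $RH$ case is analogous with the floor.

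For part (1), I would set up induction on $k$. The base case $k=1$ requires $n \in [2,2]_{\integer}$, so $n=2$ and $LH(1)(n) = \ceilings{2/2} = 1$, giving $LH^k(n)=1$ directly (the ``moreover'' clause is vacuous since it assumes $k \geq 2$). For the inductive step with $k \geq 2$, I apply the shrinking fact once to move from $n \in [2^{k-1}+1, 2^k]_{\integer}$ to $LH(n) \in [2^{k-2}+1, 2^{k-1}]_{\integer}$, i.e.\ $\ceilings{\log LH(n)} = k-1$, and then invoke the induction hypothesis to conclude $LH^{k-1}(LH(n)) = LH^k(n) = 1$. For the ``moreover'' claim $LH^{k-1}(n) = 2$, I would track the recursion down to the level where the interval becomes $[2^1+1, 2^2]_{\integer} = [3,4]_{\integer}$ wait — more carefully, after $k-2$ applications the argument lands in $[2^1+1,2^2]_{\integer}$, and one cleaner route is to prove the sharper statement that after $j$ applications (for $0 \leq j \leq k-1$) the value lies in $[2^{k-1-j}+1, 2^{k-j}]_{\integer}$; setting $j=k-1$ yields an interval $[2,2]_{\integer}$, so $LH^{k-1}(n)=2$, and one more application gives $1$.

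Part (2) is entirely parallel, using the $RH$ shrinking fact; the only genuine difference is the endpoint bookkeeping. Here the sharpened inductive claim would be that after $j$ applications of $RH$ (for $0 \leq j \leq k$) the value lies in $[2^{k-j}, 2^{k-j+1}-1]_{\integer}$. Setting $j=k$ gives $[2^0, 2^1 - 1]_{\integer} = [1,1]_{\integer}$, so $RH^k(n)=1$; setting $j = k-1$ gives $[2^1, 2^2-1]_{\integer} = [2,3]_{\integer}$, which is exactly the claimed $RH^{k-1}(n) \in \{2,3\}$. The asymmetry between the $\{2\}$ in part (1) and the $\{2,3\}$ in part (2) is explained precisely by the ceiling-versus-floor distinction at the penultimate interval.

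I do not expect a serious obstacle here; the argument is routine once the shrinking lemma is stated, and the main place to be careful is the endpoint arithmetic of the ceiling and floor at the boundaries of each dyadic interval (for instance checking that $\ceilings{(2^{k-1}+1)/2}$ is exactly $2^{k-2}+1$ rather than off by one, and the corresponding floor computations). The cleanest exposition is to prove the two sharpened ``level-$j$'' invariants first as a single induction and then read off all four assertions as special cases $j \in \{k-1, k\}$, rather than proving the four statements separately.
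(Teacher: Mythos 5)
Your proposal is correct and follows essentially the same route as the paper's proof: the central ``interval-shrinking'' observation that $n\in[2^{k-1}+1,2^k]_{\integer}$ implies $LH(n)\in[2^{k-2}+1,2^{k-1}]_{\integer}$ (and its $RH$ analogue) is exactly the iterated step the paper uses, and your level-$j$ invariants simply make the paper's implicit induction explicit. The endpoint computations and the explanation of the $\{2\}$ versus $\{2,3\}$ asymmetry are all accurate.
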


\begin{proof}
(1) Consider a series: $n$, $LH(n)$, $LH^2(n)$, $\cdots$, $LH^k(n)$.  It follows that, for any number $n > 2$, $n\in [2^{k-1}+1,2^k]_{\integer}$ iff  $LH(n) \in [2^{k-2}+1,2^{k-1}]_{\integer}$. Thus, we obtain $LH^{k-1}(n)\in\{2\}$ and $LH^{k}(n)\in\{1\}$. When $n\geq3$, the last two numbers of the above series must be $2$ and $1$.

(2) In a similar fashion to (1), let us consider a series: $n$, $RH(n)$, $RH^2(n)$, $\ldots$, $RH^k(n)$. It then follows that $n\in[2^k,2^{k+1}-1]_{\integer}$ iff $RH(n)\in [2^{k-1},2^k-1]_{\integer}$. We then obtain $RH^{k-1}(n)\in\{2,3\}$ and $RH^{k}(n)\in\{1\}$.
\end{proof}

We assume the standard \emph{lexicographical order} on $\{0,1\}^*$: $\lambda,0,1,00,01,10,11,000,\cdots$ and, based on this ordering, we assign natural numbers to binary strings as $bin(0)=\lambda$, $bin(1)=0$, $bin(2)=1$, $bin(3)=00$, $bin(4)=01$, $bin(5)=10$, $bin(6)=11$, $bin(7)=000$, etc.
In contrast, for a fixed number $k\in\nat^{+}$, $bin_k(n)$ denotes lexicographically the $n$th string in $\{0,1\}^k$ as long as  $n\in[2^k]$. For instance, we obtain $bin_3(1)=000$, $bin_3(2)=001$, $bin_3(3)=010$, $bin_3(4)=011$, etc. Remember that $bin_k(0)$ is not defined whereas $bin(0)$ means $\lambda$.

\subsection{Quantum States and Hilbert Spaces}

We assume the reader's familiarity with basic quantum information and computation (see, e.g., textbooks \cite{KSV02,NC16}).  A basic concept in quantum computing is a Hilbert space and unitary transformations over it. The \emph{ket notation} $\qubit{\phi}$ expresses a (column) vector in a Hilbert space and its transposed conjugate is expressed as a (row) vector of the dual space and is denoted by the \emph{bra notation} $\bra{\phi}$.
The notation $\measure{\psi}{\phi}$ denotes the \emph{inner product} of two vectors $\qubit{\psi}$ and $\qubit{\phi}$.
We use the generic notation $I$ to denote the identity matrix of an arbitrary dimension. A square complex matrix $U$ is called \emph{unitary} if $U$ satisfies $UU^{\dagger} = U^{\dagger}U = I$, where $U^{\dagger}$ is the transposed conjugate of $U$.

The generic notation $\bfzero$ is used to denote the \emph{null vector}  of an arbitrary dimension.
A \emph{quantum bit} (or a \emph{qubit}, for short) is a linear combination of two basis vectors $\qubit{0}$ and $\qubit{1}$.
The notation $\HH_2$ indicates the Hilbert space spanned by $\qubit{0}$ and $\qubit{1}$. More generally, $\HH_{2^n}$ refers to the Hilbert space spanned by the computational basis $B_n=\{\qubit{s}\mid s\in\{0,1\}^n\}$.
For convenience, we write $\HH_{\infty}$ for the collection of all quantum states in $\HH_{2^n}$ for any $n\in\nat^{+}$. Remember that $\HH_{\infty}$ does not form a Hilbert space.
Given a non-zero quantum state $\qubit{\phi}\in\HH_{\infty}$, its \emph{length} $\ell(\qubit{\phi})$ denotes a unique number $n\in\nat$ for which  $\qubit{\phi}\in\HH_{2^n}$. We stress that the length of $\qubit{\phi}$ is defined only for a quantum state residing in $\HH_{\infty}$.
As a special case, we set $\ell(\bfzero)=0$ for the null vector $\bfzero$ (although $\bfzero$ belongs to $\HH_k$ for any $k\in\nat^{+}$). For convenience, we also set $\ell(\alpha)=0$ for any scalar $\alpha\in\complex$.
For a quantum state $\qubit{\phi}$, if we make a measurement in the computational basis, then each binary string $x$ of length $\ell(\qubit{\phi})$ is observed with probability $|\measure{x}{\phi}|^2$.
Thus, $\qubit{\phi}$ can be expressed as $\ket{0}\measure{0}{\phi} + \ket{1}\measure{1}{\phi}$ and also as $\sum_{x\in\{0,1\}^n} \qubit{x}\otimes \measure{x}{\phi}$, where $\qubit{\phi}\otimes \qubit{\psi}$ is the \emph{tensor product} of $\qubit{\phi}$ and $\qubit{\psi}$.
The \emph{norm} of $\qubit{\phi}$ is defined as $\sqrt{\measure{\phi}{\phi}}$ and is denoted $\|\qubit{\phi}\|$.

A \emph{qustring of length $n$} is a unit-norm quantum state in $\HH_{2^n}$. As a special case, the null vector  is treated as the \emph{qustring of length $0$} (although its norm is $0$). Let $\Phi_{n}$ denote the set of all qustrings of length $n$. Clearly, $\Phi_{n}\subseteq \HH_{2^n}$ holds for all $n\in\nat$. We then set $\Phi_{\infty}$ to be the collection of all qustrings of length $n$ for any $n\in\nat$.

Abusing the aforementioned notations of $LH(n)$ and $RH(n)$, we set $LH(\qubit{\phi}) = \ceilings{\ell(\qubit{\phi})/2}$ and $RH(\qubit{\phi})=\floors{\ell(\qubit{\phi}/2}$ for any quantum state $\qubit{\phi}\in\HH_{\infty}$. It then follows that $\ell(\qubit{\phi})=LH(\qubit{\phi})+RH(\qubit{\phi})$.

We are interested only in functions mapping $\HH_{\infty}$ to $\HH_{\infty}$ and these functions are generally referred to as \emph{quantum functions on $\HH_{\infty}$} to differentiate from ``functions'' working with natural numbers or strings. Such a quantum function $f$ is said to be {\em dimension-preserving} (resp., \emph{norm-preserving}) if  $\ell(\qubit{\phi}) = \ell(f(\qubit{\phi}))$
(resp., $\|\qubit{\phi}\|=\|f(\qubit{\phi})\|$) holds for any quantum state $\qubit{\phi}\in\HH_{\infty}$.

\subsection{Conventions on the Bra and the Ket Notations}\label{sec:bra-ket-notation}

To calculate the outcomes of quantum functions on $\HH_{\infty}$, we must take advantage of making a purely symbolic treatment of the bra and the ket notations together with the tensor product notation $\otimes$.
We generally follow \cite{Yam20} for the conventions on the specific usage of these  notations. These notational conventions in fact help us simplify the description of various qubit operations in later sections. Since we greatly deviate from the standard usage of those notations, hereafter, we explain how to use them throughout this work.

It is important to distinguish between the number $0$ and the null vector $\bfzero$. To deal with the null vector, we conveniently take the following specific conventions concerning the operator ``$\otimes$''.
For any $\qubit{\phi}\in\HH_{\infty}$, (i) $0\otimes \qubit{\phi} = \qubit{\phi}\otimes 0 =\bfzero$ (scalar case), (ii) $\qubit{\phi}\otimes \bfzero = \bfzero\otimes \qubit{\phi}=\bfzero$, and (iii) if $\qubit{\psi}$ is the null vector, then $\measure{\phi}{\psi} = \measure{\psi}{\phi}=\bfzero$.
In the case of (i), we here extend the standard usage of $\otimes$ to even cover ``scalar multiplication'' to simplify later calculations.
Given two binary strings $u,s\in\{0,1\}^+$, if $|s|=|u|$, then $\measure{u}{s}=1$ if $u=s$, and $\measure{u}{s}=0$ otherwise.
On the contrary, when $|u|<|s|$, $\measure{u}{s}$ expresses the quantum state $\qubit{z}$ if $s=uz$ for a certain string $z$, and $0$ otherwise.
By sharp contrast, if $|u|>|s|$, then  $\measure{u}{s}$ always denotes $\bfzero$.
More generally, if $\qubit{\phi}=\sum_{u\in\{0,1\}^n}\sum_{s\in\{0,1\}^m} \alpha_{u,s}\qubit{u}\qubit{s}$ and $u_0\in\{0,1\}^n$, then $\measure{u_0}{\phi}$ expresses the quantum state $\sum_{s\in\{0,1\}^m}\alpha_{u_0,s}\qubit{s}$.
For instance, if $\qubit{\phi}=\frac{1}{\sqrt{2}}(\qubit{00}+\qubit{11})$, then $\measure{0}{\phi}$ equals $\frac{1}{\sqrt{2}}\qubit{0}$ and $\measure{1}{\phi}$ does $\frac{1}{\sqrt{2}}\qubit{1}$.
These notations $\measure{0}{\phi}$ and $\measure{1}{\phi}$  together make it possible to express any qustring $\qubit{\phi}$ as $\ket{0}\otimes \measure{0}{\phi} + \ket{1}\otimes \measure{1}{\phi}$. In a more general case with $\qubit{\xi}=\sum_{u\in\{0,1\}^n} \beta_u\qubit{u}$, $\measure{\xi}{\phi}$ means  $\sum_{u\in\{0,1\}^n} \beta_u^{*} \measure{u}{\phi}$, which equals $\sum_{u\in\{0,1\}^n} \sum_{s\in\{0,1\}^m} \alpha_{u,s}\beta_u^{*}\qubit{s}$.
Moreover, $\qubit{\phi}$ can be expressed as $\sum_{u\in\{0,1\}^n} \beta_u \ket{u}\otimes \measure{u}{\phi}$.
We often abbreviate $\qubit{\phi}\otimes \qubit{\psi}$ as $\qubit{\phi}\qubit{\psi}$. In particular, for strings $s$ and $u$, $\qubit{s}\otimes \qubit{u}$ is further abbreviated as $\qubit{s,u}$ or even $\qubit{su}$.
When $\measure{s}{\phi}$ is just a scalar, say, $\alpha\in\complex$, the notation $\qubit{s}\otimes \measure{s}{\phi}$ (or equivalently, $\qubit{s}\measure{s}{\phi}$) is equal to  $\alpha\qubit{s}$.

We also expand the norm notation $\|\cdot \|$ for vectors to scalars in the following way. For any quantum states $\qubit{\phi}$ and $\qubit{\psi}$ with $\ell(\qubit{\phi}) = \ell(\qubit{\psi})>0$, the notation $\|\measure{\phi}{\psi}\|$ is used to express the absolute value $|\measure{\phi}{\psi}|$. In general, for any scalar $\alpha\in\complex$, $\|\alpha\|$ denotes $|\alpha|$. This notational convention is quite useful in handling the value obtained after making a measurement without worrying about whether the resulting object is a quantum state or a scalar.

Later, we will need to encode (or translate) a series of symbols into an appropriate qustring. In such an encoding, the empty string $\lambda$ is treated quite differently. Notably, we tend to automatically translate $\qubit{\lambda}$ into the null vector $\bfzero$ unless otherwise stated.

\section{A Recursion Schematic Definition of EQS}\label{sec:definitions}

We will present a recursion schematic definition to formulate a class of special quantum functions on
$\HH_{\infty}$ (i.e., from $\HH_{\infty}$ to $\HH_{\infty}$), later called $EQS$. To improve the readability, we first provide a skeleton system of $EQS$ and later we expand it to the full-fledged system.

\subsection{Skeleton EQS}\label{sec:elementary-scheme}

As a starter, we discuss a ``skeleton'' of our recursion schematic definition for ``elementary'' quantum functions, which are collectively called $EQS$ (elementary quantum schemes) in Definition \ref{EQS-definition}, involving six initial quantum functions and two construction schemes. All initial quantum functions were already presented when defining $\squareqp$ and $\hatsquareqp$ in \cite{Yam20}.

Throughout the rest of this work, we fix a nonempty amplitude set $K$  and we often ignore the clear reference to $K$ as long as the choice of $K$ is not important in our discussion.

\begin{definition}\label{def:initial}
The skeleton class $EQS_0$ is composed of all quantum functions  constructed by selecting the initial quantum functions of Scheme I and then inductively applying Schemes II--III a finite number of times. In what follows,  $\qubit{\phi}$ denotes an arbitrary quantum state in $\HH_{\infty}$. When the item 6) of Scheme I is not used, we denote the resulting class by $\widehat{EQS}_0$.

\begin{enumerate}\vs{-2}
  \setlength{\topsep}{-2mm}%
  \setlength{\itemsep}{2mm}%
  \setlength{\parskip}{0cm}%

\item[I.] The \emph{initial quantum functions}.  Let
$\theta\in[0,2\pi)\cap K$ and $a\in\{0,1\}$. \vs{1}

1) $I(\qubit{\phi}) = \qubit{\phi}$. (identity) \vs{1}

2) $PHASE_{\theta}(\qubit{\phi}) = \qubit{0}\!\measure{0}{\phi} + e^{\imath\theta}\qubit{1}\!\measure{1}{\phi}$. (phase shift) \vs{1}

3) $ROT_{\theta}(\qubit{\phi}) = \cos\theta \qubit{\phi}
+ \sin\theta
(\qubit{1}\!\measure{0}{\phi} - \qubit{0}\!\measure{1}{\phi})$. (rotation around $xy$-axis at angle $\theta$) \vs{-1}

4) $NOT(\qubit{\phi}) = \qubit{0}\!\measure{1}{\phi} +
\qubit{1}\!\measure{0}{\phi}$. (negation) \vs{1}

5) $SWAP(\qubit{\phi}) = \left\{ \begin{array}{ll}
 \qubit{\phi} & \mbox{if $\ell(\qubit{\phi})\leq 1$,} \\
 \sum_{a,b\in\{0,1\}} \qubit{ab}\!\measure{ba}{\phi}
 & \mbox{otherwise.}
 \end{array}\right.$ \text{(swapping of 2 qubits)} \vs{1}

6) $MEAS[a](\qubit{\phi}) = \qubit{a}\!\measure{a}{\phi}$. (partial projective measurement)

\item[II.] The \emph{composition scheme}.
From $g$ and $h$, we define $Compo[g,h]$ as
follows: \vs{1}

\n\hs{5}$Compo[g,h](\qubit{\phi}) = g\circ h(\qubit{\phi})$
($=g(h(\qubit{\phi}))$).

\item[III.] The \emph{branching scheme}.\footnote{We remark that $Branch[g,h]$ can be expressed as an appropriate unitary matrix (if $g$ and $h$ are expressed as unitary matrices) and thus it is a legitimate quantum operation to consider.}
From $g$ and $h$, we define $Branch[g,h]$
as: \vs{1}

\n\hs{2}(i) $Branch[g,h](\qubit{\phi}) =
\qubit{\phi}$ \hs{44}if $\ell(\qubit{\phi})\leq 1$, \vs{1} \\
\n\hs{1}(ii) $Branch[g,h](\qubit{\phi}) =
\qubit{0}\otimes g(\measure{0}{\phi}) + \qubit{1}\otimes
h(\measure{1}{\phi})$ \hs{3}otherwise.
\end{enumerate}
\end{definition}

We remark that all quantum functions in Items 1)--4) and 6) in Scheme I directly manipulate only the first qubit of $\qubit{\phi}$ and that $SWAP$ manipulates the first two qubits of $\qubit{\phi}$. All other qubits of $\qubit{\phi}$ are intact.
Notice that the length function $\ell(\qubit{\phi})$ is not included as part of $EQS_0$.
Since $\qubit{\phi}$ is always taken from $\HH_{\infty}$, the value $\ell(\qubit{\phi})$ is uniquely determined from $\qubit{\phi}$.
It is also important ro remark that, for any $\qubit{\phi}$ and $a\in\{0,1\}$, $\ell(MEAS[a](\qubit{\phi})) = \ell(\qubit{\phi})$.

In Schemes II and  III, the constructions of $Compo[g,h]$ and $Branch[g,h]$ need two quantum functions, $g$ and $h$, which are assumed to have been already constructed in their own construction processes.
To refer to these supporting quantum functions used in the scheme, we call them the \emph{ground (quantum) functions} of $Compo[g,h]$ and $Branch[g,h]$.
In the case of a special need to emphasize $K$, then we write $EQS_{K,0}$ and $\widehat{EQS}_{K,0}$ with a clear reference to $K$.

We quickly provide simple examples of how to compute the quantum functions induced by appropriate applications of Schemes I--III.

\begin{example}
It follows that $PHASE_{\pi}(\qubit{1}) = - \qubit{1}$ and $PHASE_{\pi/2}(\frac{1}{\sqrt{2}}(\qubit{0}+\qubit{1})= \frac{1}{\sqrt{2}}(\qubit{0}+\imath\qubit{1})$, $ROT_{\pi/4}(\qubit{0}) = \frac{1}{\sqrt{2}}(\qubit{0}+\qubit{1})$ and $ROT_{\pi/4}(\qubit{1}) = \frac{1}{\sqrt{2}}(\qubit{0}-\qubit{1})$, $NOT(\qubit{a})=\qubit{1-a}$, and $SWAP(\qubit{abc})=\qubit{bac}$ for three bits $a,b,c\in\{0,1\}$.
We also obtain $MEAS[0](ROT_{\pi/4}(\qubit{00})) = MEAS[0](\frac{1}{\sqrt{2}}(\qubit{0}+\qubit{1})\otimes \qubit{0}) = \frac{1}{\sqrt{2}}\qubit{00}$.
As for Scheme III, it is important to note that
$Branch[g,g]$ is not the same as $g$ itself. For example, $Branch[NOT,NOT](\qubit{b}\qubit{\phi})$ equals $\qubit{b}\otimes NOT(\qubit{\phi})$ by Scheme III(ii) whereas $NOT(\qubit{b}\qubit{\phi})$ is just $\qubit{1-b}\qubit{\phi}$, where $b$ denotes any bit.
The EPR pair $\frac{1}{\sqrt{2}}(\qubit{00}+\qubit{11})$ is obtainable from $\qubit{0}\qubit{0}$ as follows. By first applying $ROT_{\pi/4}$ to $\qubit{0}\qubit{0}$, we obtain $\qubit{\phi} = ROT_{\pi/4}(\qubit{0}\qubit{0})$. We then apply $Branch[I,NOT]$ and obtain $Branch[I,NOT](\qubit{\phi}) = Branch[I,NOT](\frac{1}{\sqrt{2}}(\qubit{00}+\qubit{10})) =
\qubit{0}\otimes I(\frac{1}{\sqrt{2}}\ket{0}) + \qubit{1}\otimes
NOT(\frac{1}{\sqrt{2}}\ket{0}) =
\frac{1}{\sqrt{2}}(\qubit{00}+\qubit{11})$.
\end{example}


For later reference, let us review from \cite{Yam20} the \emph{multi-qubit quantum recursion scheme}, which is a centerpiece of schematically characterizing quantum polynomial-time computability. In this work, this special scheme is referred to as {\em Scheme T}.

{\it
\begin{itemize}\vs{-2}
  \setlength{\topsep}{-2mm}%
  \setlength{\itemsep}{1mm}%
  \setlength{\parskip}{0cm}%

\item[{\rm (T)}] The \emph{multi-qubit quantum recursion scheme} {\rm \cite{Yam20}}.
Let $g$, $h$, $p$ be quantum functions and let $k,t\in\nat^{+}$ be numbers. Assume that $p$ is dimension-preserving. For  $\FF_k=\{f_u\}_{u\in\{0,1\}^k}$, we define
$F\equiv kQRec_t[g,h,p|\FF_k]$ as follows:

\vs{2}
\n\hs{4}{\rm (i)} $F(\qubit{\phi}) = g(\qubit{\phi})$
\hs{39} if $\ell(\qubit{\phi})\leq t$, \vs{1} \\
\n\hs{3}{\rm (ii)} $F(\qubit{\phi}) =  h( \sum_{u:|u|=k} \qubit{u} \otimes  f_u (  \measure{u}{\psi_{p,\phi}} ))$ \hs{6}otherwise, \\
where $\qubit{\psi_{p,\phi}} = p(\qubit{\phi})$ and $f_u\in \{F,I\}$ for any $u\in\{0,1\}^k$.
\end{itemize}
}

This quantum recursion scheme together with Schemes I--III given above and another quantum function called $REMOVE$ (removal) defines the function class $\squareqp$, which constitutes all polynomial-time computable quantum functions \cite{Yam20}.
Although we do not consider $REMOVE$ here, its restricted form, called $CodeREMOVE[\cdot]$, will be introduced in Section \ref{sec:different-object}.


Let us present a short list of typical quantum functions ``definable'' within $EQS_0$; that is, those quantum functions are actually constructed from the items 1)--5) of Scheme I and by finitely many applications of Schemes II--III.
As the first simple application of Scheme I--III, we demonstrate how to construct  the special quantum function called $Skip_k[\cdot]$.

\begin{lemma}\label{Skip-function}
Let $g$ denote any quantum function in $EQS_0$ and let $k\in\nat^{+}$. The quantum function $Skip_k[g]$ is defined by $Skip_k[g](\qubit{\phi}) = \qubit{\phi}$ if $\ell(\qubit{\phi})\leq k$, and $\sum_{u:|u|=k} \ket{u}\otimes g(\measure{u}{\phi})$ otherwise.
This $Skip_k[g]$ is definable within $EQS_0$.
\end{lemma}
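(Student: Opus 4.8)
The plan is to define $Skip_k[g]$ by induction on $k$ using only the branching scheme (Scheme III), exploiting the fact that a single application of $Branch[g,g]$ peels off the leading qubit and applies $g$ to the remaining tail. The guiding observation is that iterating branching $k$ times strips off exactly $k$ leading qubits, which is precisely the skipping behavior demanded.

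First, for the base case $k=1$, I would set $Skip_1[g] = Branch[g,g]$. Unfolding Scheme III, this returns $\qubit{\phi}$ when $\ell(\qubit{\phi})\leq 1$ and $\sum_{a\in\{0,1\}}\qubit{a}\otimes g(\measure{a}{\phi}) = \sum_{u:|u|=1}\qubit{u}\otimes g(\measure{u}{\phi})$ when $\ell(\qubit{\phi})\geq 2$, which matches the stated definition of $Skip_1[g]$ verbatim. For the inductive step ($k\geq 2$), I would set $Skip_k[g] = Branch[Skip_{k-1}[g],Skip_{k-1}[g]]$, assuming $Skip_{k-1}[g]$ already realizes the claimed behavior, and verify correctness by splitting into three regimes according to $\ell(\qubit{\phi})$. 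When $\ell(\qubit{\phi})\leq 1$, Scheme III(i) forces the identity. When $2\leq \ell(\qubit{\phi})\leq k$, Scheme III(ii) produces $\sum_{a}\qubit{a}\otimes Skip_{k-1}[g](\measure{a}{\phi})$; since each tail $\measure{a}{\phi}$ has length $\ell(\qubit{\phi})-1\leq k-1$, the induction hypothesis makes $Skip_{k-1}[g]$ act as the identity on it, so the sum collapses back to $\qubit{\phi}$ as required. Finally, when $\ell(\qubit{\phi})>k$, each tail has length $\ell(\qubit{\phi})-1>k-1$, so the induction hypothesis yields $Skip_{k-1}[g](\measure{a}{\phi}) = \sum_{v:|v|=k-1}\qubit{v}\otimes g(\measure{v}{(\measure{a}{\phi})})$; combining the leading bit $a$ with $v$ and invoking the peeling identity $\measure{v}{(\measure{a}{\phi})} = \measure{av}{\phi}$ furnished by the bra/ket conventions of Section \ref{sec:bra-ket-notation}, the double sum reassembles into $\sum_{u:|u|=k}\qubit{u}\otimes g(\measure{u}{\phi})$.

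Since $k$ is a fixed number, this construction applies Scheme III exactly $k$ times to the already-available ground function $g\in EQS_0$, so by the closure of $EQS_0$ under finitely many applications of Schemes II--III we conclude that $Skip_k[g]$ lies in $EQS_0$. The only delicate point I anticipate is the bookkeeping at the length boundaries — in particular confirming that the intermediate regime $2\leq \ell(\qubit{\phi})\leq k$ genuinely recollapses to the identity rather than partially skipping, and that the threshold between ``identity'' and ``skipping'' sits correctly at $\ell(\qubit{\phi})=k$ even though each branching layer only ever tests the local condition $\ell\leq 1$. This is handled uniformly by tracking how $\ell$ decreases by one with each branching layer against the fixed budget $k$, and I expect no genuine obstacle beyond this careful case analysis.
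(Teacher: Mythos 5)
Your proposal is correct and follows essentially the same route as the paper: the base case $Skip_1[g]\equiv Branch[g,g]$ and the inductive step $Branch[Skip_{k-1}[g],Skip_{k-1}[g]]$ are exactly the paper's construction (up to index shift). Your extra verification of the intermediate regime $2\leq\ell(\qubit{\phi})\leq k$, where the function must collapse back to the identity, is a detail the paper's proof leaves implicit, and it checks out.
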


\begin{proof}
We construct the desired quantum function $Skip_k[g]$ inductively for any  $k\in\nat^{+}$. When $k=1$, we set $Skip_1[g] \equiv Branch[g,g]$.
Clearly, when $\ell(\qubit{\phi})>1$, we obtain $Skip_1[g](\qubit{\phi}) = Branch[g,g](\qubit{\phi}) = \sum_{a\in\{0,1\}} \qubit{a}\otimes g(\measure{a}{\phi})$.
For any index $k\geq2$, we define $Skip_{k+1}[g]$ as $Branch[Skip_k[g],Skip_k[g]]$.
It then follows that, if $\ell(\qubit{\phi})>k+1$, then  $Skip_{k+1}[g](\qubit{x}\qubit{\phi}) = \sum_{a\in\{0,1\}} \qubit{a}\otimes Skip_k[g](\measure{a}{\phi}) = \sum_{a\in\{0,1\}} \sum_{s:|s|=k} \qubit{a}\qubit{s}\otimes g(\measure{s}{\psi_{Skip_k[g],\measure{a}{\phi}}}) = \sum_{s':|s'|=k+1} \qubit{s'}\otimes g(\measure{s'}{\phi})$, as requested.
\end{proof}

\begin{lemma}\label{lemma:special}
Fix $\theta\in[0,2\pi)\cap K$ and $i,j,k\in\nat^{+}$  and $i<j$. Let $\qubit{\phi}$ be any quantum state in $\HH_{\infty}$, let $a$, $b$, and $c$ be any bits. The following quantum functions are definable by Schemes I--III and thus in $EQS_0$.
\begin{enumerate}\vs{-2}
  \setlength{\topsep}{-2mm}%
  \setlength{\itemsep}{1mm}%
  \setlength{\parskip}{0cm}%

\item $CNOT(\qubit{\phi}) = \qubit{\phi}$ if $\ell(\qubit{\phi})\leq 1$ and $CNOT(\qubit{\phi}) = \ket{0}\measure{0}{\phi} + \ket{1}\otimes NOT(\measure{1}{\phi})$ otherwise. (controlled NOT)

\item $GPS_{\theta}(\qubit{\phi}) = e^{\imath\theta}\qubit{\phi}$. (global phase shift)

\item $WH(\qubit{\phi}) = \frac{1}{\sqrt{2}}\ket{0}\otimes (\measure{0}{\phi}+\measure{1}{\phi}) + \frac{1}{\sqrt{2}}\ket{1}\otimes (\measure{0}{\phi}-\measure{1}{\phi})$. (Walsh-Hadamard transform)

\item $Z_{1,\theta}(\qubit{\phi}) = e^{\imath\theta}\ket{0}\measure{0}{\phi} + \ket{1}\measure{1}{\phi}$.

\item $zROT_{\theta}(\qubit{\phi}) =  e^{\imath\theta}\ket{0}\measure{0}{\phi} + e^{-\imath\theta} \ket{1}\measure{1}{\phi}$. (rotation around the $z$-axis)

\item $C_{\theta}(\qubit{\phi}) = \qubit{\phi}$ if $\ell(\qubit{\phi})\leq 1$ and $C_{\theta}(\qubit{\phi}) = \ket{0}\measure{0}{\phi} + \ket{1}\otimes ROT_{\theta}(\measure{1}{\phi})$ otherwise. (controlled ROT$_{\theta}$)

\item $CPHASE_{\theta} (\qubit{\phi}) = \qubit{\phi}$ if $\ell(\qubit{\phi})\leq 1$, and $CPHASE_{\theta} (\qubit{\phi}) =  \frac{1}{\sqrt{2}}\sum_{b\in\{0,1\}} (\ket{0}\measure{b}{\phi} + e^{\imath\theta b}\ket{1}\measure{b}{\phi})$ otherwise. (controlled PHASE)

\item $CSWAP(\qubit{a}\qubit{\phi}) = \ket{0}\otimes \measure{0}{a}\ket{\phi} + \ket{1}\otimes SWAP( \measure{1}{a}\ket{\phi} )$ (controlled SWAP)

\item $LengthQ_k(\qubit{b}\qubit{\phi}) = \qubit{b}\qubit{\phi}$ if $\ell(\qubit{\phi})< k$  and $LengthQ_k(\qubit{b}\qubit{\phi}) = \qubit{1-b}\qubit{\phi}$ otherwise. (length query)

\item $SWAP_{i,j}(\qubit{\phi}) = \sum_{a_1\cdots a_j\in\{0,1\}} \ket{a_1\cdots a_{i-1}a_{j}a_{i+1} \cdots a_{j-1}a_{i} a_{j+1}\cdots a_n} \otimes \measure{a_1\cdots a_{i-1} a_i a_{i+1}\cdots a_{j-1} a_j a_{j+1}\cdots a_n}{\phi}$ if $\ell(\qubit{\phi})\geq j$ and $SWAP_{i,j}(\qubit{\phi}) =\qubit{\phi}$ otherwise, where $n=\ell(\qubit{\phi})$.
\end{enumerate}
\end{lemma}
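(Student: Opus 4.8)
The plan is to exhibit, for each of the ten functions, an explicit finite expression built from the initial functions of Scheme~I and the operator $Skip_k[\cdot]$ of Lemma~\ref{Skip-function}, combined through Scheme~II (composition) and Scheme~III (branching), and then to verify each identity by unfolding the definitions, paying particular attention to the length-$\leq 1$ base cases. I would organize the ten functions into three groups according to the technique used.

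The first group consists of the genuinely controlled (two-register) gates, which I read directly off Scheme~III by branching on the first (control) qubit: I expect $CNOT \equiv Branch[I,NOT]$, $C_{\theta} \equiv Branch[I,ROT_{\theta}]$, and $CSWAP \equiv Branch[I,SWAP]$. In each case, unfolding $Branch[g,h](\qubit{\phi}) = \qubit{0}\otimes g(\measure{0}{\phi}) + \qubit{1}\otimes h(\measure{1}{\phi})$ for $\ell(\qubit{\phi})>1$, together with the identity action when $\ell(\qubit{\phi})\leq 1$, reproduces the stated formula verbatim; for $CSWAP$ one additionally invokes the convention $SWAP(\bfzero)=\bfzero$ to match the two summands.

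The second group consists of the single-qubit gates acting on the first qubit, which I would realize purely by composition (Scheme~II) of $PHASE_{\theta}$, $ROT_{\theta}$, and $NOT$. The key observation is that conjugating by $NOT$ interchanges the roles of the $\qubit{0}$- and $\qubit{1}$-amplitudes: a direct computation gives $NOT\circ PHASE_{\theta}\circ NOT = Z_{1,\theta}$, with the phase $e^{\imath\theta}$ now sitting on the $\qubit{0}$ branch. Composing the two one-sided phases then yields the global phase, $GPS_{\theta} \equiv Compo[PHASE_{\theta}, Z_{1,\theta}]$, since every computational-basis state is multiplied by $e^{\imath\theta}$ regardless of its first bit. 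For the Walsh--Hadamard transform I would match $2\times 2$ matrices: $ROT_{\pi/4}$ acts on the first qubit as $\frac{1}{\sqrt{2}}\matrices{1}{-1}{1}{1}$, and left-multiplying by $NOT$ turns this into the Hadamard matrix, so $WH\equiv Compo[NOT,ROT_{\pi/4}]$. The remaining diagonal gates $zROT_{\theta}$ and $CPHASE_{\theta}$ are obtained the same way, by composing $PHASE$, $Z_{1,\cdot}$, and $GPS$ with angles chosen to match the target $2\times 2$ action (for instance $zROT_{\theta} \equiv Compo[Z_{1,2\theta}, GPS_{-\theta}]$), which presupposes that the handful of auxiliary angles one needs ($2\theta$, $-\theta$, $\pi/4$, \ldots) lie in $K$; this is where the standing hypothesis on the amplitude set $K$ is invoked.

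The third group --- the length-sensitive gates $SWAP_{i,j}$ and $LengthQ_k$ --- is where I expect the real work, because Schemes~I--III expose only the coarse test ``$\ell(\qubit{\phi})\leq 1$ versus $\ell(\qubit{\phi})>1$'' hidden inside $Branch$, whereas these functions must switch behaviour exactly at length $j$ (respectively at the threshold set by $k$). For $SWAP_{i,j}$ my plan is to build the adjacent transposition $\sigma_m$ that swaps qubits $m$ and $m+1$ as $\sigma_1\equiv SWAP$ and $\sigma_m \equiv Skip_{m-1}[SWAP]$ for $m\geq 2$ (so that $\sigma_m$ fires exactly when $\ell(\qubit{\phi})>m$), and then to assemble the transposition $(i,j)$ as the palindrome $\sigma_i\sigma_{i+1}\cdots\sigma_{j-1}\cdots\sigma_{i+1}\sigma_i$ via Scheme~II. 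The point of the palindrome is that it makes the length-gating self-correcting: when $\ell(\qubit{\phi})<j$ the central $\sigma_{j-1}$, and indeed every $\sigma_m$ with $m\geq\ell(\qubit{\phi})$, is inert, so the surviving factors telescope pairwise to the identity, matching the required identity action below length $j$; when $\ell(\qubit{\phi})\geq j$ the conjugation identity $\sigma_{m}\,(m{+}1,j)\,\sigma_{m}=(m,j)$ collapses the palindrome to $(i,j)$. Granting $SWAP_{i,j}$, I would then define $LengthQ_k \equiv Compo[SWAP_{1,k+1}, Compo[Skip_k[NOT], SWAP_{1,k+1}]]$: the two swaps move the first qubit to position $k+1$ and back, while $Skip_k[NOT]$ flips it precisely when $\ell(\qubit{\phi})>k$, i.e. when the total length is at least $k+1$, which is exactly the intended length query. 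The main obstacle throughout is verifying this boundary behaviour at the critical length, and the palindrome trick is the device I would rely on to get it right.
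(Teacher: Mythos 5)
Your constructions for the controlled gates coincide with the paper's: $C_{\theta}\equiv Branch[I,ROT_{\theta}]$ and $CSWAP\equiv Branch[I,SWAP]$ are exactly items (6) and (8) of the paper's proof, and $CNOT\equiv Branch[I,NOT]$ is the standard construction the paper imports from \cite{Yam20}. For items (1)--(5) the paper gives no in-line argument at all (it cites \cite[Lemma 3.3]{Yam20}), so your explicit conjugation identities ($Z_{1,\theta}=NOT\circ PHASE_{\theta}\circ NOT$, $GPS_{\theta}=PHASE_{\theta}\circ Z_{1,\theta}$, $WH=NOT\circ ROT_{\pi/4}$) supply detail the paper omits, and they check out against the definitions in Scheme I. Where you genuinely diverge is in items (9) and (10). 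For $SWAP_{i,j}$ the paper uses a cycle-and-undo decomposition $MOVE_{1,j}\circ MOVE_{1,j-1}^{-1}$ followed by $Skip_{i-1}[\cdot]$, whereas you use the palindrome $\sigma_i\cdots\sigma_{j-1}\cdots\sigma_i$ of adjacent transpositions $\sigma_m\equiv Skip_{m-1}[SWAP]$; both are products of the same building blocks, but your telescoping argument handles the boundary regime $\ell(\qubit{\phi})<j$ (where the function must degenerate to the identity) more transparently than the paper does. For $LengthQ_k$ the paper gives a recursive construction ($LengthQ_k\equiv SWAP\circ Skip_1[Branch[LengthQ_{k-1},LengthQ_{k-1}]]\circ SWAP$), while your $SWAP_{1,k+1}\circ Skip_k[NOT]\circ SWAP_{1,k+1}$ is a direct, non-recursive realization; the threshold works out correctly because $Skip_k[NOT]$ fires exactly when the total length $1+\ell(\qubit{\phi})$ exceeds $k$, i.e.\ when $\ell(\qubit{\phi})\geq k$. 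Your remark that the auxiliary angles ($-\theta$, $2\theta$, $\pi/4$) must lie in $K$ is a fair caveat that the paper also leaves implicit.

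The one concrete gap is item (7), $CPHASE_{\theta}$. You file it with $zROT_{\theta}$ as a ``diagonal gate'' to be assembled from $PHASE$, $Z_{1,\cdot}$, and $GPS$, but the stated target $\frac{1}{\sqrt{2}}\sum_{b}(\ket{0}\measure{b}{\phi}+e^{\imath\theta b}\ket{1}\measure{b}{\phi})$ acts on the first qubit as $\frac{1}{\sqrt{2}}\matrices{1}{1}{1}{e^{\imath\theta}}$, which has nonzero off-diagonal entries; no composition of diagonal gates can produce it, so your proposed route fails for this sub-item (any correct construction must involve $ROT$ or $WH$ to create the mixing). The paper itself only cites \cite[Lemma 3.6]{Yam20} here, so you are not missing an argument the paper supplies, but your classification of the gate is wrong and the construction you sketch for it would not succeed.
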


\begin{proof}
(1)--(5) These quantum functions are constructed in \cite[Lemma 3.3]{Yam20}. (6) We set $C_{\theta} \equiv Branch[I,ROT_{\theta}]$.
(7) This was shown in \cite[Lemma 3.6]{Yam20}.
(8) We set $CSWAP$ to be $Branch[I,SWAP]$. This gives the equation  $CSWAP(\qubit{a}\qubit{\phi}) = \qubit{0}\otimes I(\measure{0}{a}\ket{\phi}) + \qubit{1}\otimes SWAP(\measure{1}{a}\ket{\phi})$.

(9) For simplicity, we write $g\circ f$ in place of $Compo[g,f]$. Let $h\equiv Branch[NOT,NOT]\circ SWAP$.
We construct $LengthQ_k$ inductively for any $k\geq1$ as follows. Firstly, when $k=1$,  $LengthQ_1$ is set to be $SWAP\circ Skip_1[SWAP] \circ Skip_1[h]\circ SWAP$. It then follows that $LengthQ_1(\qubit{b}) = \qubit{b}$ and $LengthQ_1(\qubit{b}\qubit{\phi}) = \qubit{1-b}\qubit{\phi}$. Letting $k\geq2$, assume that $LengthQ_{k-1}$ has been already defined. We then define $LengthQ_{k}$ as $SWAP \circ Skip_1[Branch[LengthQ_{k-1},LengthQ_{k-1}]] \circ SWAP$.

(10) Notice that $SWAP_{1,2}$ coincides with $SWAP$ of Scheme I.
For any fixed constant $k\geq1$, we first define $SWAP_{k,k+1}$ to be $Skip_{k-1}[SWAP]$.
For any indices $i,j\geq1$ with $i<j$, we further define $MOVE_{i,j}$ to be $SWAP_{i,i+1}\circ SWAP_{i+1,i+2}\circ \cdots \circ SWAP_{j-1,j}$.
Note that
$MOVE_{1,j-1}^{-1}$ equals $SWAP_{j-2,j-1}\circ SWAP_{j-3,j-2} \circ \cdots \circ SWAP_{1,2}$.
With the use of $MOVE_{1,j}$, we define $SWAP_{1,j}$ as $MOVE_{1,j}\circ MOVE_{1,j-1}^{-1}$. The target quantum function $SWAP_{i,j}$ is finally set to be $Skip_{i-1}[SWAP_{1,j-i}]$.
\end{proof}

\sloppy
\begin{lemma}\label{lemma:special-combi}
Fix $i,j,k\in\nat^{+}$ with $k\geq2$ and $i<j$, and let $\qubit{\phi}$ denote  any quantum state in $\HH_{\infty}$. The following quantum functions are definable using Schemes I--III and thus in $EQS_0$.
\begin{enumerate}\vs{-3}
  \setlength{\topsep}{-2mm}%
  \setlength{\itemsep}{1mm}%
  \setlength{\parskip}{0cm}%

\item $SecSWAP^{(k)}_{i,j}(\qubit{x_1}\qubit{x_2}\cdots \qubit{x_{i-1}}\qubit{x_i}\qubit{x_{i+1}} \cdots \qubit{x_{j-1}}\qubit{x_j}\qubit{\phi})
    = \qubit{x_1}\qubit{x_2}\cdots \qubit{x_{i-1}}\qubit{x_j}\qubit{x_{i+1}} \cdots \qubit{x_{j-1}}\qubit{x_i}\qubit{\phi}$ for any $k$-bit strings $x_1,x_2,\ldots,x_i,\ldots,x_j\in\{0,1\}^k$. (section SWAP)

\item $SecMOVE^{(k)}_{i,j}(\qubit{x_1}\qubit{x_2}\cdots \qubit{x_{i-1}}\qubit{x_i}\qubit{x_{i+1}} \cdots \qubit{x_j}\qubit{\phi})
    = \qubit{x_1}\qubit{x_2}\cdots \qubit{x_{i-1}}\qubit{x_{i+1}} \cdots \qubit{x_j}\qubit{x_i} \qubit{\phi}$ for any $k$-bit strings $x_1,x_2,\ldots,x_i,\ldots,x_j\in\{0,1\}^k$. (section move)
\end{enumerate}
\end{lemma}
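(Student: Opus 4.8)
The plan is to build both functions as block-level analogues of the single-qubit functions $SWAP_{i,j}$ and $MOVE_{i,j}$ from Lemma \ref{lemma:special}(10), using only Scheme II (composition) applied to objects already shown to lie in $EQS_0$. The crucial observation is that the $k$-bit block occupying the $m$-th section sits at qubit positions $(m-1)k+1,\ldots,mk$, so a section swap is nothing but a bundle of ordinary qubit swaps acting on disjoint coordinate pairs.

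For part (1), I would define
\[
SecSWAP^{(k)}_{i,j} \;\equiv\; SWAP_{(i-1)k+1,\,(j-1)k+1}\circ SWAP_{(i-1)k+2,\,(j-1)k+2}\circ \cdots \circ SWAP_{(i-1)k+k,\,(j-1)k+k}.
\]
Since $i<j$, we have $(i-1)k+r < (j-1)k+r$ for every $r\in[k]$, so each factor is a legitimate instance of $SWAP_{\cdot,\cdot}$ from Lemma \ref{lemma:special}(10) and hence lies in $EQS_0$; composing them keeps us in $EQS_0$ by Scheme II. The $k$ index pairs $\{(i-1)k+r,(j-1)k+r\}$ are pairwise disjoint, so the factors commute and their product simultaneously exchanges the contents of positions $(i-1)k+r$ and $(j-1)k+r$ for all $r$, that is, it replaces $x_i$ by $x_j$ and $x_j$ by $x_i$ while leaving every other qubit fixed, which is exactly the stated action. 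When the input is too short (fewer than $jk$ qubits), each factor reverts to the identity by the very definition of $SWAP_{\cdot,\cdot}$, so the short-input case is handled automatically.

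For part (2), I would mimic the construction of $MOVE_{i,j}$ in the proof of Lemma \ref{lemma:special}(10), but with adjacent section swaps in place of adjacent qubit swaps, namely
\[
SecMOVE^{(k)}_{i,j} \;\equiv\; SecSWAP^{(k)}_{j-1,j}\circ SecSWAP^{(k)}_{j-2,j-1}\circ \cdots \circ SecSWAP^{(k)}_{i,i+1}.
\]
Reading the composition right-to-left (the rightmost factor acts first), $SecSWAP^{(k)}_{i,i+1}$ first moves $x_i$ into section $i+1$, then $SecSWAP^{(k)}_{i+1,i+2}$ moves it into section $i+2$, and so on, so that after the final factor $x_i$ has bubbled all the way to section $j$ while $x_{i+1},\ldots,x_j$ have each shifted one section to the left; this is precisely the declared output. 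Each factor is an adjacent section swap, which is in $EQS_0$ by part (1), and the composite stays in $EQS_0$ by Scheme II.

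The only delicate point, and the step where an error is most likely, is the index bookkeeping: one must confirm that the disjoint pairs in part (1) really realize the block exchange and, in part (2), that the order of the adjacent section swaps yields a left-rotation of the window of sections $[i,j]$ rather than a right-rotation. Tracing the trajectory of the single block $x_i$ through the composition, as sketched above, pins down the correct order; once that is fixed, the remaining verification is a routine unwinding of the definitions of $SWAP_{\cdot,\cdot}$, $Skip_k[\cdot]$ (Lemma \ref{Skip-function}), and $Compo[\cdot,\cdot]$.
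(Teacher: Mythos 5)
Your construction is exactly the paper's: part (1) is the same composition of the $k$ disjoint qubit swaps $SWAP_{(i-1)k+r,\,(j-1)k+r}$ for $r\in[k]$, and part (2) is the same chain of adjacent section swaps $SecSWAP^{(k)}_{j-1,j}\circ\cdots\circ SecSWAP^{(k)}_{i,i+1}$. The correctness verification you add (disjointness of the index pairs, tracing the trajectory of $x_i$) is sound and merely makes explicit what the paper leaves to the reader.
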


\begin{proof}
(1) Since $x_1,x_2,\ldots,x_i,\ldots,x_j$ are $k$-bit strings, we obtain  $|x_1x_2\cdots x_{i-1}|=(i-1)k$ and $|x_1x_2\cdots x_{j-1}|=(j-1)k$. For convenience, we write $i_k$ for $(i-1)k$ and $j_k$ for $(j-1)k$.
We then define the desired quantum function $SecSWAP^{(k)}_{i,j}$ to be $SWAP_{i_k+1,j_k+1} \circ SWAP_{i_k+2,j_k+2}\circ \cdots \circ SWAP_{i_k+k,j_k+k}$.

(2) The quantum function $SecMOVE^{(k)}_{i,j}$ is set to be $SecSWAP^{(k)}_{j-1,j}\circ \cdots \circ SecSWAP^{(k)}_{i+1,i+2}\circ SecSWAP^{(k)}_{i,i+1}$.
\end{proof}


Hereafter, let us construct more quantum functions in $EQS_0$.

\begin{lemma}\label{COPY-function}
Consider $COPY_1$ that satisfies $COPY_1(\qubit{a}\qubit{\phi}) = \sum_{s\in\{0,1\}} \ket{a\oplus s}\qubit{s} \measure{s}{\phi}$ for any quantum state $\qubit{\phi}\in\HH_{\infty}$ and any bit $a$, where $\oplus$ denotes the bitwise XOR.
More generally, for each fixed constant $k\geq2$,
let $COPY_k(\qubit{x}\qubit{\phi}) = \sum_{z\in\{0,1\}^k} \qubit{x\oplus z}\qubit{z} \measure{z}{\phi}$  for any quantum state $\qubit{\phi}\in\HH_{\infty}$ and any $k$-bit string $x$.
These quantum functions are all definable using Schemes I--III.
\end{lemma}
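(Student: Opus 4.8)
The plan is to realize both functions purely as compositions and conjugations of the controlled-NOT gate from Lemma~\ref{lemma:special}(1) together with the swap operators $SWAP_{i,j}$ from Lemma~\ref{lemma:special}(10), all of which already live in $EQS_0$.

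First I would handle $COPY_1$. The key observation is that $COPY_1$ is exactly a controlled-NOT whose \emph{control} is the second qubit and whose \emph{target} is the first, whereas the quantum function $CNOT$ of Lemma~\ref{lemma:special}(1) uses the first qubit as control and the second as target. Since conjugation by $SWAP$ exchanges the roles of the first two qubits, I set $COPY_1\equiv SWAP\circ CNOT\circ SWAP$. To verify this, take any $\qubit{a}\qubit{\phi}$ with $\ell(\qubit{a}\qubit{\phi})\geq 2$ and let $s$ range over the second bit. Applying $SWAP$ gives $\sum_{s}\qubit{s}\qubit{a}\measure{s}{\phi}$, then $CNOT$ (controlled by the first qubit $s$) replaces the second qubit $a$ by $a\oplus s$, yielding $\sum_{s}\qubit{s}\qubit{a\oplus s}\measure{s}{\phi}$, and the final $SWAP$ produces $\sum_{s}\qubit{a\oplus s}\qubit{s}\measure{s}{\phi}$, exactly as required. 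For inputs of length $\leq 1$ each of the three factors is the identity, consistent with the conventions of Section~\ref{sec:bra-ket-notation}.

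For $COPY_k$ with a fixed $k\geq 2$, the idea is to decompose the block-wise XOR into $k$ single-qubit XORs, where the $i$th stores in target position $i$ the value $x_i\oplus z_i$ using control position $k+i$ and leaves every other qubit intact. Each such operation $D_i$ is obtained from $COPY_1$ by routing positions $i$ and $k+i$ to the front, applying $COPY_1$, and routing them back: I set $D_1\equiv SWAP_{2,k+1}\circ COPY_1\circ SWAP_{2,k+1}$ and, for $2\leq i\leq k$, $D_i\equiv SWAP_{1,i}\circ SWAP_{2,k+i}\circ COPY_1\circ SWAP_{2,k+i}\circ SWAP_{1,i}$. All the index pairs used satisfy the requirement $i<j$ of Lemma~\ref{lemma:special}(10) precisely because $k\geq 2$, so each $D_i$, and hence every composition of them, lies in $EQS_0$ by Scheme~II. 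Tracing $D_i$ on a basis state confirms that the scratch positions $1$ and $2$ are restored at the end, so that the net action of $D_i$ modifies only position $i$, replacing its content $x_i$ by $x_i\oplus z_i$ while reading only positions $i$ and $k+i$.

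Finally I would set $COPY_k\equiv D_k\circ D_{k-1}\circ\cdots\circ D_1$ and verify correctness on computational basis states $\qubit{x_1\cdots x_k}\qubit{z_1\cdots z_k}\qubit{\psi}$, the general case following by linearity of all the schemes involved. The crucial point is that, for distinct $i,j\in[k]$, the index sets $\{i,k+i\}$ and $\{j,k+j\}$ are pairwise disjoint, so the net actions of $D_i$ and $D_j$ commute; consequently the composition simultaneously replaces each $x_i$ by $x_i\oplus z_i$ while leaving every $z_i$ and the tail $\qubit{\psi}$ unchanged, which is precisely $\sum_{z\in\{0,1\}^k}\qubit{x\oplus z}\qubit{z}\measure{z}{\phi}$. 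I expect the main obstacle to be not any single identity but the positional bookkeeping: one must track carefully how the successive $SWAP_{i,j}$'s relocate each qubit so that $COPY_1$ always acts on the intended control/target pair, and one must address the degenerate short-input case, where $\ell(\qubit{\phi})<k$ forces $\measure{z}{\phi}=\bfzero$ by the conventions of Section~\ref{sec:bra-ket-notation}; for such inputs the stated equation is verified directly from those conventions (and, if exact agreement is desired outside the regime $\ell(\qubit{\phi})\geq k$, a length guard built from $LengthQ_k$ of Lemma~\ref{lemma:special}(9) can be prepended).
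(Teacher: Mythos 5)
Your proof is correct. The $COPY_1$ part is literally the paper's construction ($SWAP\circ CNOT\circ SWAP$ with the same verification), but for $COPY_k$ you take a genuinely different route. The paper proceeds by induction on $k$: it first applies the cyclic shift $SWAP_{[2,k]}\equiv SWAP_{2,3}\circ SWAP_{3,4}\circ\cdots\circ SWAP_{k-1,k}$ to bring $z_1$ next to $x_1$, applies $COPY_1$, then handles the remaining $k-1$ pairs in one stroke via $Skip_2[COPY_{k-1}]$, and finally undoes the shift with $SWAP_{[2,k]}^{-1}$. You instead give a flat, non-inductive decomposition $COPY_k\equiv D_k\circ\cdots\circ D_1$, where each $D_i$ is $COPY_1$ conjugated by the swaps $SWAP_{1,i}$ and $SWAP_{2,k+i}$ that route the pair $(x_i,z_i)$ to the front and back. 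Both constructions stay within Schemes I--III and use the same primitives; yours trades the induction (and the use of $Skip_2$) for more explicit positional bookkeeping, which you correctly identify as the main burden, and your observation that the pairs $\{i,k+i\}$ are disjoint (so the $D_i$ act on disjoint control/target pairs and compose cleanly) is exactly the point that makes the flat version go through. Your extra care about the degenerate case $\ell(\qubit{\phi})<k$ and the optional $LengthQ_k$ guard goes beyond what the paper records, which simply works in the regime where the stated equation is meaningful.
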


\begin{proof}
We inductively define $COPY_k$ for any index $k\geq1$. In the case of $k=1$, we set  $COPY_1$ to be $SWAP\circ CNOT\circ SWAP$. It then follows that $COPY_1(\qubit{a}\qubit{\phi}) = SWAP\circ CNOT\circ SWAP (\qubit{a}\qubit{\phi}) = SWAP\circ CNOT (\sum_{s\in\{0,1\}} \ket{s}\ket{a}\otimes \measure{s}{\phi}) = \sum_{s\in\{0,1\}} SWAP(\ket{s}\ket{a\oplus s}\otimes \measure{s}{\phi}) = \sum_{s\in\{0,1\}} \ket{a\oplus s}\measure{s}{\phi}$.

Let $k\geq2$. Assume by induction hypothesis that $COPY_{k-1}$ has been already defined. The quantum function $COPY_k(\qubit{x}\qubit{\phi})$ is obtained by taking the following process. To $\qubit{x}\qubit{\phi}$,
we first apply $SWAP_{[2,k]} \equiv SWAP_{2,3}\circ SWAP_{3,4}\circ \cdots \circ SWAP_{k-1,k}$. Next, we apply $COPY_1$ and then $Skip_2[COPY_{k-1}]$. Finally, we apply $SWAP_{[2,k]}^{-1}$. It is not difficult to check that the obtained quantum function matches $COPY_k$.
\end{proof}


Let us construct the basic quantum functions $g_{AND}$ and $g_{OR}$, which ``mimic'' the behaviors of the two-bit operations $AND$ and $OR$, using only  Schemes I--III.

\begin{lemma}\label{AND-OR-function}
There exist two quantum functions $g_{AND}$ and $g_{OR}$ that satisfy the following. Let $x,y\in\{0,1\}$ and $b\in\{0,1\}$.
(1) $AND(x,y)=b$ iff $\|\measure{b}{\psi_{0xy}^{(AND)}}\| =1$, where $\qubit{\psi_{0xy}^{(AND)}} = g_{AND}(\qubit{0}\qubit{x}\qubit{y})$.
(2) $OR(x,y)=b$ iff $\|\measure{b}{\psi_{0xy}^{(OR)}}\| =1$, where $\qubit{\psi_{0xy}^{(OR)}} = g_{OR}(\qubit{0}\qubit{x}\qubit{y})$. These quantum functions are defined by Schemes I--III.
\end{lemma}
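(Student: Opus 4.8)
The plan is to realize $g_{AND}$ as a doubly-controlled NOT (Toffoli) gate whose two control qubits carry $x$ and $y$ and whose target is the \emph{first} qubit. Concretely, I aim to build a quantum function $T_{23\to1}$ with $T_{23\to1}(\qubit{0}\qubit{x}\qubit{y}) = \qubit{x\wedge y}\qubit{x}\qubit{y}$ for all bits $x,y$, and set $g_{AND}\equiv T_{23\to1}$. Once this is done, the claimed equivalence is immediate: writing $\qubit{\psi_{0xy}^{(AND)}}=\qubit{x\wedge y}\qubit{x}\qubit{y}$ we get $\measure{b}{\psi_{0xy}^{(AND)}}=\measure{b}{x\wedge y}\,\qubit{x}\qubit{y}$, so $\|\measure{b}{\psi_{0xy}^{(AND)}}\|=1$ exactly when $b=x\wedge y=AND(x,y)$ and $\|\measure{b}{\psi_{0xy}^{(AND)}}\|=0$ otherwise. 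It is convenient to obtain $T_{23\to1}$ from a ``standard'' Toffoli $T_{12\to3}$ (controls on qubits $1,2$, target on qubit $3$) by conjugation: since $SWAP_{1,3}\qubit{0xy}=\qubit{yx0}$, one checks that $SWAP_{1,3}\circ T_{12\to3}\circ SWAP_{1,3}$ sends $\qubit{0xy}$ to $\qubit{(x\wedge y)}\qubit{x}\qubit{y}$, where $SWAP_{1,3}$ is the quantum function of Lemma~\ref{lemma:special}(10).

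The heart of the construction is a single-qubit square root of NOT. From $X=WH\circ PHASE_{\pi}\circ WH$ one sees that $V\equiv WH\circ PHASE_{\pi/2}\circ WH$ satisfies $V\circ V=NOT$ exactly (since $PHASE_{\pi/2}\circ PHASE_{\pi/2}=PHASE_{\pi}$ and $WH\circ WH=I$), and dually $V^{\dagger}\equiv WH\circ PHASE_{3\pi/2}\circ WH$. Here $WH$ is available by Lemma~\ref{lemma:special}(3), and the required phase angles lie in $K$ as soon as $WH\in EQS_{0}$: the rotation underlying $WH$ already places $\pi/4$ in $K$, and phases compose additively ($PHASE_{\alpha}\circ PHASE_{\beta}=PHASE_{\alpha+\beta}$), so we may take $PHASE_{\pi/2}=(PHASE_{\pi/4})^{2}$ and $PHASE_{3\pi/2}=(PHASE_{\pi/4})^{6}$. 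From $V$ I then build the three controlled gates needed below, all in $EQS_{0}$. The adjacent controlled-$V$ with control on qubit $2$ and target on qubit $3$ is $Skip_{1}[Branch[I,V]]$ (Lemma~\ref{Skip-function}), and likewise $CV^{\dagger}_{2\to3}=Skip_{1}[Branch[I,V^{\dagger}]]$; the controlled-$V$ with control on qubit $1$ and target on the non-adjacent qubit $3$ is $CV_{1\to3}\equiv SWAP_{2,3}\circ Branch[I,V]\circ SWAP_{2,3}$, which first brings qubit $3$ into position $2$, applies $V$ there conditioned on qubit $1$, and swaps back.

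With $CNOT=CNOT_{1\to2}$ from Lemma~\ref{lemma:special}(1), I assemble the standard five-gate Toffoli decomposition $T_{12\to3}=CV_{1\to3}\circ CNOT_{1\to2}\circ CV^{\dagger}_{2\to3}\circ CNOT_{1\to2}\circ CV_{2\to3}$. Correctness is checked by a case analysis on the control bits $(a,b)$ of qubits $1,2$: the target accumulates $V^{\dagger}\circ V=I$ when exactly one control is $1$, accumulates nothing when both are $0$, and accumulates $V\circ V=NOT$ precisely when $a=b=1$, which is exactly CCNOT on every basis input. Finally $g_{OR}$ follows by De Morgan's law $OR(x,y)=\neg(\neg x\wedge\neg y)$: I flip qubits $2,3$ by $Skip_{1}[NOT]$ and $Skip_{2}[NOT]$, apply $g_{AND}$, and flip qubit $1$ by $NOT$, so that $g_{OR}\equiv NOT\circ g_{AND}\circ Skip_{2}[NOT]\circ Skip_{1}[NOT]$ leaves $\neg(\neg x\wedge\neg y)=x\vee y$ on the first qubit; the measurement computation then matches the one for $g_{AND}$.

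I expect the main obstacle to be the careful bookkeeping of qubit indices through the $SWAP$-conjugations (ensuring each control and target lands on the intended wire, and that $\ell(\qubit{\phi})=3$ keeps every $Branch$ and $Skip$ out of its base case), together with verifying that the five-gate $\sqrt{NOT}$ decomposition really collapses to CCNOT on all basis inputs. The interference cancellation of the $V$ and $V^{\dagger}$ factors in the $(a,b)\in\{(0,1),(1,0)\}$ cases is the one genuinely non-routine point; the rest is mechanical composition within Schemes I--III.
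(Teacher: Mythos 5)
Your construction is correct, but it takes a genuinely different route from the paper. You realize $g_{AND}$ as an honest Toffoli gate via the standard five-gate $\sqrt{NOT}$ decomposition ($V\equiv WH\circ PHASE_{\pi/2}\circ WH$, controlled-$V$'s built from $Branch[I,V]$ conjugated by swaps, and $CNOT$'s), then obtain $g_{OR}$ by De Morgan; the case analysis $V^{a}(V^{\dagger})^{a\oplus b}V^{b}$ does collapse to $CCNOT$ on basis inputs, and the index bookkeeping through the $SWAP$-conjugations checks out. The paper instead avoids any non-classical gate entirely: it sets $g_{OR}\equiv SWAP_{1,3}\circ CSWAP\circ COPY_1$ and $g_{AND}\equiv SWAP_{1,3}\circ SWAP_{2,3}\circ CSWAP\circ COPY_1$, i.e., it first copies $x$ into the leading qubit with $COPY_1$, then uses the controlled-SWAP (Fredkin) to route either $y$ or a constant into a fixed position depending on $x$, and finally permutes that bit to the front. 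The paper's route is purely reversible-classical, so it needs no rotation or phase angles at all and hence works for an arbitrary amplitude set $K$; your route presupposes that $WH$ and $PHASE_{\pi/4}$ are available, i.e., that $\pi/4\in K$ (your reduction $PHASE_{\pi/2}=(PHASE_{\pi/4})^{2}$ is fine, but the dependence on $K$ remains and is worth stating explicitly, since the lemma places no hypothesis on $K$). What your approach buys in exchange is a cleaner output: your $g_{AND}$ is a genuine $CCNOT$ that leaves the control qubits $\qubit{x}\qubit{y}$ intact, whereas the paper's functions scramble the trailing qubits (e.g.\ $g_{AND}(\qubit{0}\qubit{1}\qubit{y})=\qubit{y}\qubit{1}\qubit{1}$), which is harmless for this lemma but less reusable as a subroutine.
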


\begin{proof}
Recall the quantum functions $SWAP_{i,j}$ and $COPY_1$ respectively from Lemmas \ref{lemma:special} and \ref{COPY-function}. We first define $g_{OR}$ to be $SWAP_{1,3}\circ CSWAP\circ COPY_1$. From this definition, for any $x,y\in\{0,1\}$, it follows that $g_{OR}(\qubit{0}\qubit{x}\qubit{y}) = \measure{0}{x}\qubit{y}\qubit{x}\qubit{0}+ \measure{1}{x}\qubit{x}\qubit{y}\qubit{1}$. Thus, we obtain   $g_{OR}(\qubit{0}\qubit{0}\qubit{y}) = \qubit{y}\qubit{0}\qubit{0}$ and $g_{OR}(\qubit{0}\qubit{1}\qubit{y})=\qubit{1}\qubit{y}\qubit{1}$.

We next define $g_{AND}$ to be $SWAP_{1,3}\circ SWAP_{2,3}\circ CSWAP\circ COPY_1$. It then follows that $g_{AND}(\qubit{0}\qubit{x}\qubit{y}) = \measure{0}{x}\qubit{x}\qubit{y}\qubit{0} + \measure{1}{x}\qubit{y}\qubit{x}\qubit{1}$. From this equality, we obtain $g_{AND}(\qubit{0}\qubit{0}\qubit{y})= \qubit{0}\qubit{y}\qubit{0}$ and $g_{AND}(\qubit{0}\qubit{1}\qubit{y}) = \qubit{y}\qubit{1}\qubit{1}$.
\end{proof}


For later use, we define another quantum function, which splits the entire input qubits into two halves and then swaps them. We do not intend to include this quantum function to our system but it will be used to support the description of a new scheme given in Section \ref{sec:different-object}.
We remark that, to construct this quantum function, we need Schemes I--III together with the quantum recursion (Scheme T).  By recalling the left-half function $LH$ and the right-half function $RH$ from Section \ref{sec:numbers}, let us introduce $HalfSWAP$ as:

{\it
\begin{enumerate}\vs{-3}
  \setlength{\topsep}{-2mm}%
  \setlength{\itemsep}{1mm}%
  \setlength{\parskip}{0cm}%

\item[*)] $HalfSWAP(\qubit{\phi}) = \left\{ \begin{array}{ll}
 \qubit{\phi} & \hs{5}\mbox{if $\ell(\qubit{\phi})\leq 1$,} \\
  \sum_{s:|s|=LH(\qubit{\phi})} \measure{s}{\phi}\otimes \ket{s}
 & \hs{5}\mbox{otherwise.}
 \end{array}\right.$
\end{enumerate}
}

The \emph{inverse} of $HalfSWAP$, denoted $HalfSWAP^{-1}$, matches the quantum function obtained from $HalfSWAP$ by replacing $LH(\qubit{\phi})$ in its definition with $RH(\qubit{\phi})$. In the special case where $\ell(\qubit{\phi})$ is even, $HalfSWAP\circ HalfSWAP(\qubit{\phi})$ equals $\qubit{\phi}$ since $LH(\qubit{\phi})=RH(\qubit{\phi})$.

\begin{example}
Consider a quantum state $\qubit{\phi}= \sum_{u:|u|=3}\alpha_u \qubit{u}$, which is $\alpha_{000}\qubit{000} +  \alpha_{001}\qubit{001} + \alpha_{010}\qubit{010} + \alpha_{011}\qubit{011} + \cdots +  \alpha_{111}\qubit{111}$. Notice that $LH(\qubit{\phi})= \ceilings{\ell(\qubit{\phi})/2} = 2$ and $RH(\ell(\qubit{\phi}))=\floors{\ell(\qubit{\phi})/2} = 1$. Since $\measure{s}{\phi}= \sum_{u:|u|=3} \alpha_u \measure{s}{u}$, it then follows that $HalfSWAP(\qubit{\phi}) = \sum_{s:|s|=2}\sum_{u:|u|=3} \alpha_u \measure{s}{u}\otimes \ket{s}$, which equals $\alpha_{000}\qubit{000} +  \alpha_{001}\qubit{100} + \alpha_{010}\qubit{001} + \alpha_{011}\qubit{101} + \cdots +  \alpha_{111}\qubit{111}$.
\end{example}

\subsection{Binary Encoding of Various Types of Objects}\label{sec:different-object}

All quantum functions discussed in Section \ref{sec:elementary-scheme} take only ``single'' quantum states in $\HH_{\infty}$ as their inputs.
For practical applications of these quantum functions, we need to further deal with a problem that consists of various types of ``objects'', such as numbers, graphs, matrices, tape symbols, etc.
Since each QTM uses multiple tapes with their associated tape heads, these tapes hold possibly different qubits and their tape heads move separately. This is an advantage of the QTM model; however, each quantum function takes only one quantum state as its input.
For our purpose, it is imperative to set up an appropriate \emph{binary encoding} to transform those objects into a single quantum state with an appropriate use of designated ``separators''.
In the polynomial-time setting \cite{Yam20}, the scheme of quantum recursion (Scheme T) is powerful enough to handle several inputs altogether as a single encoded quantum state in a way similar to a multiple-tape QTM being simulated by a single-tape QTM with only polynomial overhead.
However, since we aim at capturing quantum polylogtime computability instead, we cannot take the same approach to cope with multiple inputs. We thus need to ponder how to circumvent this difficulty to expand the scope of our quantum functions. As a feasible solution to this problem, we introduce ``extra'' schemes that help us handle intended binary encodings.

In what follows, we attempt to design an encoding (or a translation) of various types of objects into binary strings of the same fixed length so that a series of these encoded objects forms a larger-dimensional quantum state. Each ``segment'' of such a quantum state representing one encoded object is referred to as a  \emph{section}, and this fixed-length encoding makes it possible to work with each section separately.

Let us describe our encoding scheme for eight symbols in $\{0, 1, 2, {\dashv}, B, H, S, T\}$, where $\dashv$, $B$, $H$, $S$, and $T$ respectively stand for an ``ending'', a ``blank'', a ``head'', a ``separator'', and a ``time''.
With the use of three bits, we take the following abbreviations:  $\hat{0}=000$, $\hat{1}=001$, $\hat{B}=010$, $\hat{\dashv}=011$, $\hat{2}=111$, $\hat{H}= 100$, $\hat{S}=110$, and $\hat{T}=101$. Given a binary string $s=s_1s_2\cdots s_k$, the notation $\tilde{s}^{(-)}$ denotes $\hat{s}_1\hat{s}_2\cdots \hat{s}_k$ and $\widetilde{s}$ denotes $\tilde{s}^{(-)}\hat{\dashv}$. For convenience, we also define $\widetilde{\lambda}$ to be $\hat{\dashv}$.
This makes us encode, for example, the number $8$ into $\widetilde{bin(8)} = \widetilde{001} = \hat{0}\hat{0}\hat{1}\hat{\dashv}$ while  $\widetilde{bin_3(2)}$ also equals $\widetilde{001}$.
Given an arbitrary quantum state $\qubit{\phi}$ in $\HH_{\infty}$, we finally define its encoding $\qubit{\tilde{\phi}}$ as $\sum_{s:|s|=\ell(\qubit{\phi})} \qubit{\tilde{s}}\measure{s}{\phi}$. Notice that $\ell(\qubit{\tilde{\phi}}) = 3\ell(\qubit{\phi})+3$.

To mark the end of a series of encoded objects, we use a designated separator, say, $r_0$ in $\{0,1\}^+$. We fix such $r_0$ in the following discussion.
We make each series of encoded objects have length proportional to the section size $|r_0|$, and thus any encoding $x$ satisfies $|x|=k|r_0|$ for an appropriate fixed number $k\in\nat^{+}$.
This helps us partition $x$ section-wise as $x_1x_2\cdots x_k$ with $|x_i|=|r_0|$ for all indices $i\in[k]$.
We also demand that no $x_i$ should match $r_0$. Here, we say that $x$ \emph{section-wise contains no $r_0$} if $x_i\neq r_0$ holds for all indices $i\in[k]$. Let $NON_{r_0} =\{x\in\{0,1\}^+\mid |x|\equiv0 \:(\mathrm{mod}\:|r_0|), \text{ $x$ section-wise contains no $r_0$ }\}$. Similarly, we set $NON_{r_0}(\qubit{\phi}) =\{x\in NON_{r_0} \mid \measure{xr_0}{\phi}\neq\bfzero \}$.
For convenience, when $r_0=\hat{2}$, we tend to omit $r_0$ from $NON_{r_0}$ and $NON_{r_0}(\qubit{\phi})$.

\begin{example}
Fix $r_0\in\{0,1\}^+$. Choose three strings $x_1,x_2,x_3\in\{0,1\}^{|r_0|}$ satisfying $x_i\neq r_0$ for all $i\in[3]$, and consider three quantum states $\qubit{\psi} = \qubit{x_1x_2x_3}\qubit{r_0}\qubit{\phi}$, $\qubit{\psi'}= \qubit{x_1x_2 r_0 x_3}\qubit{\phi}$, and $\qubit{\psi''}=\qubit{x_1r_0x_2x_3}\qubit{
\phi}$ for any $\qubit{\phi}\in \HH_{\infty}$.
We then obtain $NON_{r_0}(\qubit{\psi}) = \{x_1x_2x_3\}$, $NON_{r_0}(\qubit{\psi'}) = \{x_1x_2\}$, and $NON_{r_0}(\qubit{\psi''}) = \{x_1\}$. By contrast, when $\qubit{\psi}$ has the form $(\alpha\qubit{x_1}\qubit{r_0}+ \beta\qubit{x_2}\qubit{r_0})\qubit{\phi}$, $NON_{r_0}(\qubit{\psi})$ equals $\{x_1,x_2\}$.

As another example, if $x=0111$ and $r_0=\hat{2}$, then $\widetilde{x}$ equals $\widetilde{x}^{(-)}\hat{\dashv} = \hat{0}\hat{1}\hat{1}\hat{1}\hat{\dashv}$. Notice that  $|\widetilde{x}|\equiv 0\:(\mathrm{mod}\:|r_0|)$. Thus, we obtain  $NON_{r_0}(\qubit{\widetilde{x}}\qubit{r_0}\qubit{\phi}) = \{\widetilde{x}\}$.
\end{example}


We then need a quantum function that splits an input into sections and apply  ``section-wise'' two pre-determined quantum operations.
We actually introduce two slightly different \emph{code skipping schemes} described below. We do not unconditionally include them to $EQS$ but we use them in a certain restricted situation, which will be discussed later. Such a restriction is in fact necessary because these schemes are too powerful to  use for quantum polylogtime computability.

{\it
\begin{enumerate}\vs{-2}
  \setlength{\topsep}{-2mm}%
  \setlength{\itemsep}{1mm}%
  \setlength{\parskip}{0cm}%

\item[*)] The \emph{code skipping schemes}. From $g$, $h$  and $r_0\in\{0,1\}^+$, we define $CodeSKIP_{+}[r_0,g,h]$ and $CodeSKIP_{-}[r_0,g,h]$ as follows:

\begin{enumerate}\vs{1}
  \setlength{\topsep}{-2mm}%
  \setlength{\itemsep}{2mm}%
  \setlength{\parskip}{0cm}%

\item[{\rm (i)}] $CodeSKIP_{+}[r_0,g,h](\qubit{\phi}) = \left\{ \begin{array}{ll}
 \qubit{\phi} \hs{39}\mbox{if $NON_{r_0}(\qubit{\phi}) = \setempty$,} & \\
  \sum_{x\in NON_{r_0}(\qubit{\phi})} ( g( \qubit{xr_0} ) \otimes h( \measure{xr_0}{\phi} )) \hs{4}\mbox{otherwise.} &
 \end{array}\right.$

\item[{\rm (ii)}] $CodeSKIP_{-}[r_0,g,h](\qubit{\phi}) = \left\{ \begin{array}{ll}
 \qubit{\phi} \hs{39} \mbox{if $NON_{r_0}(\qubit{\phi}) = \setempty$,} & \\
  \sum_{x\in NON_{r_0}(\qubit{\phi})} ( g( \qubit{x} ) \otimes h( \measure{x}{\phi} )) \hs{10} \mbox{otherwise.} & \end{array}\right.$
\end{enumerate}
\end{enumerate}
}

The difference between $CodeSKIP_{+}[r_0,g,h]$ and $CodeSKIP_{-}[r_0,g,h]$ looks subtle but becomes clear in the following example. When $\qubit{\phi}=\qubit{x}\qubit{r_0}\qubit{y}$ with $|x|=|r_0|$ and $x\neq r_0$, it follows that $CodeSKIP_{+}[r_0,g,h](\qubit{\phi}) = g(\qubit{xr_0})\otimes h(\qubit{y})$ but $CodeSKIP_{-}[r_0,g,h](\qubit{\phi}) = g(\qubit{x})\otimes h(\qubit{r_0y})$. These schemes are not  interchangeable in most applications.

A quantum function $g$ is said to be \emph{query-independent} if, in the process of constructing $g$, on any input of the form $\qubit{xr_0}\qubit{\phi}$,  any quantum function that appears in this construction process does not directly access $\qubit{\phi}$ and thus it does not depend on $\qubit{\phi}$. This instantly implies that $g(\qubit{xr_0}\qubit{\phi}) = g(\qubit{xr_0})\otimes \qubit{\phi}$ for any $x$ and $\qubit{\phi}$.
Using this terminology, when $h=I$ in the code skipping schemes, $CodeSKIP_{+}[r_0,g,I]$ and $CodeSKIP_{-}[r_0,g,I]$ are query-independent.

Here, we present a few more examples of $CodeSKIP_{+}$.

\begin{example}
Let $g=ROT_{\pi/4}$ and $h=NOT$.
Let $r_0=0^5$ and let $\qubit{\phi} = \qubit{x_1x_2}\qubit{r_0}\qubit{x_3x_4}\qubit{r_0}\qubit{\psi}$ with binary strings $x_i=bin_5(i)$ for any $i\in[4]$ and a qustring $\qubit{\psi}$. Since  $NON_{r_0}(\qubit{\phi}) = \{x_1x_2\}$, we obtain $CodeSKIP_{+}[r_0,g,h](\qubit{\phi}) = ROT_{\pi/4}(\qubit{x_1x_2r_0})\otimes NOT(\qubit{x_3x_4r_0}\qubit{\psi})$.

Let $\qubit{\phi'} = \alpha\qubit{y_1r_0}\qubit{\psi_1} + \beta \qubit{y_2r_0}\qubit{\psi_2}$ with $|y_1|=|y_2|=|r_0|$ and $y_1,y_2\notin\{r_0\}$. In this case, $NON_{r_0}$ is the set $\{y_1,y_2\}$. It then follows that $CodeSKIP_{+}[r_0,g,h](\qubit{\psi'}) = \alpha g(\qubit{y_1r_0})\otimes h(\qubit{\psi_1}) + \beta g(\qubit{y_2r_0})\otimes h(\qubit{\psi_2})$.
\end{example}

We wish to recall the two useful quantum functions $REMOVE$ (removal) and $REP$ (replacement) introduced in \cite{Yam20}. We introduce the ``code-controlled'' versions of them. Let $r_0\in\{0,1\}^{+}$ be a separator.

{\it
\begin{enumerate}\vs{-3}
  \setlength{\topsep}{-2mm}%
  \setlength{\itemsep}{1mm}%
  \setlength{\parskip}{0cm}%

\item[] \hs{-3}{\rm (i)} $CodeREMOVE[r_0](\qubit{\phi}) = \left\{ \begin{array}{ll}
 \qubit{\phi} \hs{50}\mbox{if $NON_{r_0}(\qubit{\phi}) = \setempty$,} & \\
  \sum_{x\in NON_{r_0}(\qubit{\phi})} \sum_{a\in\{0,1\}} ( \measure{a}{x} \otimes \qubit{ar_0} \otimes \measure{xr_0}{\phi} ) \hs{1}\mbox{otherwise.} &
 \end{array}\right.$

\item[] \hs{-3}{\rm (ii)} $CodeREP[r_0](\qubit{\phi}) = \left\{ \begin{array}{ll}
 \qubit{\phi} \hs{54} \mbox{if $NON_{r_0}(\qubit{\phi}) = \setempty$,} & \\
  \sum_{x\in NON_{r_0}(\qubit{\phi})} \sum_{u:|u|=|x|-1} ( \measure{u}{x} \otimes \qubit{ur_0} \otimes \measure{xr_0}{\phi} ) \hs{1} \mbox{otherwise.} & \end{array}\right.$
\end{enumerate}
}

Notice that the quantum functions $CodeREMOVE[r_0]$ and $CodeREP[r_0]$ are query-independent.

We wish to include a simple example of $CodeREMOVE$ and $CodeREP$.

\begin{example}
Let $\qubit{\phi}=\alpha\qubit{x_1r_0}\qubit{\psi_1} + \beta\qubit{x_2r_0}\qubit{\psi_2} + \gamma\qubit{r_0}\qubit{\psi_3}$ with $\ell(\qubit{x_1r_0}\qubit{\psi_1}) = \ell(\qubit{x_2r_0}\qubit{\psi_2}) = \ell(\qubit{r_0}\qubit{\psi_3})$, $NON_{r_0}(\qubit{\psi}) =\{x_1,x_2\}$, $|\alpha|^2+|\beta|^2+|\gamma|^2=1$, and $\alpha\beta\gamma\neq0$. If $x_1=0y_1$ and $x_2=1y_2$ for two strings $y_1$ and $y_2$, then $CodeREMOVE[r_0](\qubit{\psi})$ equals $\alpha\qubit{y_10r_0}\qubit{\psi_1} + \beta\qubit{y_21r_0}\qubit{\psi_2} + \gamma\qubit{r_0}\qubit{\psi_3}$. If $x_1=z_10$ and $x_2=z_21$, then we obtain $CodeREP[r_0](\qubit{\phi}) = \alpha\qubit{0z_1r_0}\qubit{\psi_1} + \beta\qubit{1z_2r_0}\qubit{\psi_2} + \gamma\qubit{r_0}\qubit{\psi_3}$.
\end{example}

\subsection{Code-Controlled Fast Quantum Recursion Scheme}\label{sec:fast-scheme-IV}

Let us introduce a new scheme, called Scheme IV, which is a variant of the multi-qubit quantum recursion scheme (Scheme T) geared up with the code skipping schemes in Section \ref{sec:different-object}. Recall that, in Scheme T, we inductively discard $k$ qubits from an input quantum state $\qubit{\phi}$ until we consume all qubits except for the last (at most) $t$ qubits.
Unlike Scheme T, since our access to input qubits is quite limited, we need to split the whole input into two separate parts, which play quite different roles as we will see.

Before formally introducing the complexity class $EQS$, for each quantum function $f$ in $EQS_0$, we define its ``code-controlled'' version $f^*$ by setting $f^*(\qubit{xr_0}\qubit{\phi}) = \qubit{xr_0}\qubit{\phi}$ if $|x|\leq 2$, $\ell(\qubit{\phi})\leq 1$, or $|x|>|r_0|\ilog(\ell(\qubit{\phi}))$, and $f^*(\qubit{xr_0}\qubit{\phi}) = f(\qubit{xr_0})\otimes \qubit{\phi}$ otherwise, for any $x\in\{0,1\}^*$ and any $\qubit{\phi}\in\HH_{\infty}$. In the rest of this work, it is convenient to identify $f$ with $f^*$. It is important to note that $f^*$ does not access $\qubit{\phi}$ in $\qubit{xr_0}\qubit{\phi}$ by its definition.

In the fast quantum recursion, on the contrary, we discard a half of the second part of input quantum state $\qubit{xr_0}\otimes \qubit{\phi}$. By halving the input at each step, this recursive process quickly terminates.

\begin{definition}\label{def-EQS}
We introduce the following scheme.
\begin{enumerate}\vs{-3}
  \setlength{\topsep}{-2mm}%
  \setlength{\itemsep}{1mm}%
  \setlength{\parskip}{0cm}%

\item[IV.] The \emph{code-controlled fast quantum recursion scheme}.
Assume that we are given quantum functions $d$, $g$, $h$, a number $t\in\nat^{+}$, and a string  $r_0\in\{0,1\}^+$
(where $d$ is not defined using $MEAS[\cdot ]$  but $d$ and $h$ may be defined using $CodeSKIP_{+}[\cdot]$ and $CodeSKIP_{-}[\cdot]$).
We then define $F\equiv CFQRec_t[r_0,d,g,h| \PP_{|r_0|},\FF_{|r_0|}]$ for $\PP_{|r_0|}=\{p_u\}_{u\in\{0,1\}^{|r_0|}}$ with $p_u\in\{I,HalfSWAP\}$ and $\FF_{|r_0|}=\{f_u\}_{u\in\{0,1\}^{|r_0|}}$ with $f_u\in\{I,F\}$ as follows. For any $x\in\{0,1\}^*$ and any $\qubit{\phi}\in\HH_{\infty}$, let

\s
\n\hs{2}(i) $F(\qubit{xr_0} \qubit{\phi}) = g(\qubit{xr_0} \qubit{\phi})$
\hs{25} if $x=\lambda$, $\ell(\qubit{\phi})\leq t$, or $|x|>|r_0|k$, \vs{1} \\
\n\hs{1}(ii) $F(\qubit{xr_0} \qubit{\phi}) \\
\n\hs{10} =  \sum_{u:|u|=|r_0|} \sum_{v:|v|= \ell(\measure{u}{xr_0})} ( h(  \qubit{u} \qubit{v} ) \otimes p_u^{-1} ( \measure{v}{\zeta_{u,p_u,\phi}^{(x'r_0)}} ) )$ \hs{12}otherwise, \s\\
where $k=\ilog(\ell(\qubit{\phi}))$,
$x\in NON_{r_0}$, $\qubit{\zeta_{u,p_u,\phi}^{(x'r_0)}} = \sum_{s:|s|= m_u(\qubit{\phi})} ( f_u(  \measure{u}{x'r_0}  \otimes  \ket{s} ) \otimes \measure{s}{\psi_{p_u,\phi}} )$,  $d(\qubit{xr_0}) = \qubit{x'r_0}$ with $x'\in NON_{r_0}$, $\qubit{\psi_{p_u,\phi}} = p_{u}(\qubit{\phi})$. Moreover, $m_u(\cdot)$ is determined to be $LH$ if $p_u=I$ and $RH$ if $p_u=HalfSWAP$. Notice that $u\neq r_0$ follows from $x'\in NON_{r_0}$.
In the other case where an input, say, $\qubit{y}$ to $F$  satisfies $NON_{r_0}(\qubit{y})=\setempty$, we automatically set $F(\qubit{y}) =\qubit{y}$.
For readability, the prefix term ``code-controlled'' is often dropped and $\qubit{\zeta_{u,p_u,\phi}^{(x'r_0)}}$ is expressed briefly as $\qubit{\zeta_{u,p_u,\phi}}$ as long as $x'r_0$ is clear from the context.
\end{enumerate}
\end{definition}

The quantum functions $d,g,h$ in the above definition are called \emph{ground (quantum) functions} of $F$. If $g$ is query-independent, Scheme IV is said to be \emph{query-independent}.

Although the description of the conditions (i)--(ii) in Scheme IV refers to a classical string $xr_0$, when we plug in any quantum state $\qubit{\psi}=\sum_{x}\alpha_x\qubit{xr_0}\otimes \qubit{\phi_x}$ to $F$, we  obtain the result $\sum_{x}\alpha_x F(\qubit{xr_0}\qubit{\phi_x})$.

Since $p_u\in\{I,HalfSWAP\}$ and $|x'|=|x|$, it follows that $\ell(\qubit{\psi_{u,p_u,\phi}}) = \ell(\qubit{\phi})$,  $\ell(\qubit{\zeta_{u,p_u,\phi}^{(x'r_0)}}) = \ell(\measure{u}{x'r_0}) + \ell(\qubit{\phi})$, and $\ell(\measure{v}{\zeta_{u,p_u,\phi}^{(x'r_0)}}) = \ell(\qubit{\phi})$.
In Item (ii), since the size $|s|$ is exactly $LH(\qubit{\phi})$ (or $RH(\qubit{\phi})$), $f_u$ takes inputs of length $(\ell(\qubit{x'r_0})-|r_0|)+\ceilings{\ell(\qubit{\phi})/2}$ (or $(\ell(\qubit{x'r_0})-|r_0|)+\floors{\ell(\qubit{\phi})/2}$). Hence,
within $\ceilings{\log{\ell(\qubit{\phi})}}$ recursive steps, the whole  process terminates.

As a special case of Scheme IV, when $r_0=1$, Item (ii) has the following simple form:

\s
\n\hs{10} (ii$'$) \hs{1} $F(\qubit{0^n1} \qubit{\phi}) =
\sum_{v:|v|=n}  ( h( \qubit{0} \otimes \qubit{v} ) \otimes p_0^{-1} ( \measure{v}{\zeta_{0,p_0,\phi}^{(0^n1)}} ))$,
\s

\n where
$\qubit{\zeta_{0,p_0,\phi}^{(0^n1)}} =
\sum_{s:|s|= m(\qubit{\phi})} ( f_0(  \qubit{0^{n-1}1}  \otimes  \ket{s} )
\otimes \measure{s}{\psi_{p_0,\phi}} )$.

At this moment, it is worth remarking the usage of the length function $\ell(\cdot)$. In Scheme IV, the input quantum state $\qubit{xr_0}\qubit{\phi}$ is reduced to $\measure{u}{x'r_0}\qubit{s}$ (with $|s|=m_u(\qubit{\phi})$ and $|u|=|r_0|$) so that we can inductively apply $F$ (when $f_u=F$) to it. The length $\ell(\qubit{\phi})$ then becomes $\ell(\qubit{s})$, and thus the conditional execution of (i) depends on the value $\ell(\qubit{s})$.

To understand Scheme IV better, we provide in Example \ref{example-scheme} a simple example of how to calculate the quantum function $F\equiv CFQRec_t[r_0,d,g,h | \PP_{|r_0|},\FF_{|r_0|}]$ step by step.

\begin{example}\label{example-scheme}In this example, we wish to show
how to calculate the quantum function $F\equiv CFQRec_t[r_0,d,g,h | \PP_{|r_0|},\FF_{|r_0|}]$ defined with the parameters $t=1$, $r_0=1$, $d=I$,  and $g\equiv I$. Furthermore, $h$ is defined as  $h(\qubit{01})=\qubit{11}$ and $h(\qubit{010^m1})=\qubit{10^{m+1}1}$ for any $m\in\nat$.
We also set $p_0=HalfSWAP$, $p_1=I$, $f_0=F$, and $f_1=I$.
Let $x\in\{0\}^*$ and $\qubit{\phi}\in\HH_{\infty}$.
In what follows, we will calculate $F(\qubit{x1}\qubit{\phi})$ in a  ``bottom-up'' fashion.

(1) We start with the base case. If $x=\lambda$, then $F(\qubit{1}\qubit{\phi})=g(\qubit{1}\qubit{\phi}) = \qubit{1}\qubit{\phi}$. If $x=0^m1$ with $m\geq1$ and $\ell(\qubit{\phi})\leq1$, then $F(\qubit{0^m1}\qubit{\phi}) = g(\qubit{0^m1}\qubit{\phi}) = \qubit{0^{m}1}\qubit{\phi}$.

(2) Hereafter, we assume that $x\neq\lambda$ and $\ell(\qubit{\phi})\geq2$. For simplicity, let  $\qubit{\phi}$ denote $\qubit{u}$ for a certain string $u\in\{0,1\}^*$.
When $u=u_1u_2\cdots u_n$ and $1\leq i<j\leq n$, we use the succinct notation $u_{[i,j]}$ to express the string $u_{i}u_{i+1}\cdots u_{j}$.

(a) We first calculate $F(\qubit{01}\qubit{u})$ for $u=u_1u_2u_3u_4$.
Since $\qubit{u}=\qubit{u_{[1,2]}}\qubit{u_{[3,4]}}$, we obtain  $p_0(\qubit{u})=\qubit{u_{[3,4]}}\qubit{u_{[1,2]}}$.
We also obtain $\qubit{\zeta_{0,p_0,u}}= F(\qubit{1}\qubit{u_{[3,4]}})\otimes \qubit{u_{[1,2]}} = \qubit{1} \otimes \qubit{u_{[3,4]}}\qubit{u_{[1,2]}}$ by (1), and thus  $F(\qubit{01}\qubit{u})$ equals $h(\qubit{0}\otimes \qubit{1} ) \otimes p_0^{-1}( \qubit{u_{[3,4]}}\qubit{u_{[1,2]}} )$, which is $\qubit{11}\qubit{u_{[1,2]}}\qubit{u_{3,4]}} = \qubit{11}\qubit{u}$.

(b) Next, we calculate $F(\qubit{001}\qubit{u'})$ for $u'=u_1u_2\cdots u_8$. Note that $\qubit{\zeta_{0,p_0,u'}} = F(\qubit{01}\qubit{u'_{[5,8]}})\otimes \qubit{u'_{[1,4]}} = \qubit{11} \otimes \qubit{u'_{[5,8]}}\qubit{u'_{[1,4]}}$ by (a).
From this, we obtain $F(\qubit{001}\qubit{u'}) = h(\qubit{0}\otimes \qubit{11}) \otimes p_0^{-1}(\qubit{u'_{[5,8]}}\qubit{u'_{[1,4]}}) = \qubit{101}\qubit{u'}$.
\end{example}

\subsection{Power of Scheme IV}

In what follows, we intend to show the usefulness of Scheme IV by applying it to construct a quantum function, which calculates the logarithmic value of (part of) input size.
Formally, for any $\qubit{\phi}\in\HH_{\infty}$ and $m\in\nat^{+}$, we define $SIZE_1$ as $SIZE_1(\qubit{0^m1}\qubit{\phi}) = \qubit{0^k1}\qubit{0^{m-k-1}1} \qubit{\phi}$ if $\ell(\qubit{\phi})\leq 2^{m-1}$, where $k=\ilog(\ell(\qubit{\phi}))$, and $SIZE_1(\qubit{0^m1}\qubit{\phi}) = \qubit{0^m1}\qubit{\phi}$ otherwise.
More generally, for any $r_0\notin\{0\}^*$ with $|r_0|\geq1$,
we define $SIZE_{r_0}$ as  $SIZE_{r_0}(\qubit{0^{m|r_0|}r_0}\qubit{\phi}) = \qubit{0^{k|r_0|}r_0} \qubit{0^{(m-k-1)|r_0|}r_0}\qubit{\phi}$.
The choice of $0^{m|r_0|}$ here is only for simplicity.

For brevity, we intend to use the notation $EQS_0+IV$ to express the set of quantum functions constructed by applying Schemes I--IV.

\begin{lemma}\label{SIZE-function}
Let $r_0\notin \{0\}^*$ with $|r_0|\geq1$. The above quantum function $SIZE_{r_0}$ can be definable within $EQS_0+IV$.
\end{lemma}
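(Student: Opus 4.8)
The plan is to realize $SIZE_{r_0}$ through the code-controlled fast quantum recursion (Scheme IV), exploiting the fact that halving the data register performs exactly $\ilog(\ell(\qubit{\phi}))$ rounds before that register shrinks to a single qubit. Concretely, I would set $t=1$, take $p_u=HalfSWAP$ (so that $m_u=RH$) and $f_u=F$ for every section value $u$, so that each recursive round replaces $\qubit{\phi}$ by its right half and restores it via $p_u^{-1}$ on the way back up. The counter block $0^{m|r_0|}$ then plays the role of a pre-allocated unary tape: one section is stripped per round, and the ground function $h$, applied to $\qubit{u}\qubit{v}$ during the unwinding, propagates a running tally and writes the separator $r_0$ into the section corresponding to the depth at which the recursion bottomed out, in exactly the same spirit as the increment $h(\qubit{01})=\qubit{11}$, $h(\qubit{010^m1})=\qubit{10^{m+1}1}$ of Example~\ref{example-scheme}.

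First I would fix the parameters $r_0$, $t=1$, $d$, and the families $\PP_{|r_0|},\FF_{|r_0|}$ as above. Next I would build the ground function $h$ inside $EQS_0$ from primitives already available---$Skip_k[\cdot]$, the section operations $SecSWAP^{(k)}_{i,j}$ and $SecMOVE^{(k)}_{i,j}$, and $COPY_k$ of Lemmas~\ref{Skip-function}--\ref{COPY-function}---so that $h$ carries out the section-wise bookkeeping that converts the recursive head into the next counter section. I would then prove, by induction on $k=\ilog(\ell(\qubit{\phi}))$ and using Lemma~\ref{HALF-value} to control the iterated behaviour of $RH$, that after $k$ rounds the recursion reaches the base case at $\ell(\qubit{\phi})\le 1$ and that the unwinding deposits $r_0$ precisely in section $k+1$, yielding $\qubit{0^{k|r_0|}r_0}\qubit{0^{(m-k-1)|r_0|}r_0}\qubit{\phi}$; the reconstruction $p_u^{-1}(\measure{v}{\zeta_{u,p_u,\phi}})$ guarantees that $\qubit{\phi}$ is restored intact and that the whole map is dimension-preserving.

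The hard part will be reconciling the termination of the recursion with the guard condition $|x|>|r_0|k$ of clause~(i). Since the working regime of $SIZE_{r_0}$ is exactly $\ell(\qubit{\phi})\le 2^{m-1}$, i.e.\ $m>k$, the guard fires already at the top level, so a single unmodified instantiation with $x=0^{m|r_0|}$ would collapse immediately into the base case $g$ and never recurse. The crux is therefore to arrange the two registers so that the halving recursion is driven by the data length rather than by the counter length---e.g.\ by running the recursion with a control block whose length tracks the shrinking $\ilog$ of the data while the tally is accumulated on the return path---so that the number of effective rounds equals $k$ rather than $\min(m,k)$, all the while keeping $d$ query-independent and the overall transformation length-preserving. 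Verifying that this arrangement reproduces $SIZE_{r_0}$ exactly on the working inputs, and reduces to the identity in the complementary regime $\ell(\qubit{\phi})>2^{m-1}$ (handled by clause~(i) together with $g=I$), is the delicate bookkeeping that the proof must complete.
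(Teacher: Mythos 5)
Your overall strategy---instantiate Scheme IV so that the counter block $0^{m|r_0|}$ loses one section per halving round, and use Lemma~\ref{HALF-value} to argue that the recursion bottoms out after exactly $k=\ilog(\ell(\qubit{\phi}))$ rounds---is the same as the paper's. But the paper's actual mechanism is considerably simpler than what you sketch, and the place where you deviate is exactly where your argument has a hole. The paper takes $F\equiv CFQRec_1[1,I,g,I\,|\,\{p_0,p_1\},\{f_0,f_1\}]$ with $d=I$, $h=I$, $p_0=p_1=I$ (so the recursion descends into the \emph{left} half via $LH$, not $RH$ via $HalfSWAP$ as you propose), $f_0=F$, $f_1=I$, and puts all of the work into the \emph{base-case} ground function $g=LengthQ_1$ of Lemma~\ref{lemma:special}(9): after $k$ rounds the data register is a single qubit, $g$ is applied to $\qubit{0^{m-k}1}\otimes\qubit{s}$, and it flips the leading $0$ to $1$ precisely because the trailing state is nonempty; unwinding with $h=I$ then re-prepends the $k$ stripped zeros, giving $\qubit{0^k1}\qubit{0^{m-k-1}1}\qubit{\phi}$. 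Your alternative---accumulating a ``running tally'' via $h$ on the way back up---is underdetermined and, as described, cannot work: at a given unwinding level $h$ sees only $\qubit{u}\qubit{v}$ and has no access to the total depth $k$, so ``writing $r_0$ into the section corresponding to the depth at which the recursion bottomed out'' must happen at the bottom, not during reassembly. The Example~\ref{example-scheme}-style increment you invoke pushes a marker to the \emph{front} (producing strings of the shape $\qubit{10^{j}1}$), which is not the required output format $\qubit{0^k1}\qubit{0^{m-k-1}1}$. You never actually define $h$, so this central step is missing.

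The second, independent gap is that your final paragraph names the crux and then declines to resolve it. You are right that a literal reading of clause (i)'s guard $|x|>|r_0|k$ would fire at the top level exactly in the working regime $\ell(\qubit{\phi})\le 2^{m-1}$ (i.e.\ $m>k$); this is a fair criticism of the scheme as written, and the paper's own proof simply applies clause (ii) without commenting on that guard. But ``arrange the two registers so that the halving recursion is driven by the data length rather than by the counter length'' is a wish, not a construction, and closing with the remark that verifying the arrangement ``is the delicate bookkeeping that the proof must complete'' concedes that the proof is not complete. As submitted, the proposal identifies the right ingredients (Scheme IV, the counter-versus-data length bookkeeping, Lemma~\ref{HALF-value}) but supplies neither the concrete ground functions nor the inductive verification, so it does not establish the lemma.
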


\begin{proof}
We first prove the lemma for the simple case of $r_0=1$.
Given $m\in\nat^{+}$ and $\qubit{\phi}\in\HH_{\infty}$, let  $\qubit{\xi}=\qubit{0^{m}1}\qubit{\phi}$.
Recall the quantum function $LengthQ_1$ from Lemma \ref{lemma:special} and, for simplicity, write $g$ for $LengthQ_1$.
It then follows that $g(\qubit{1}) = \qubit{1}$ and $g(\qubit{0^{m}1}) = \qubit{10^{m-1}1}$ if $m\geq1$.
We denote by $F$ the quantum function $CFQRec_1[1,I,g,I | \{p_0,p_1\},\{f_0,f_1\}]$ with the parameters $f_{0}=F$, $f_{1}=I$, and $p_0=p_1=I$.

If either $m=0$ or $\ell(\qubit{\phi}) \leq 1$, then $F(\qubit{1}\qubit{\phi})$ equals $g(\qubit{1})\otimes \qubit{\phi} = \qubit{1}\qubit{\phi}$. Hereafter, we assume that $m\geq1$ and  $\ell(\qubit{\phi})\geq2$. By induction hypothesis, we obtain  $F(\qubit{0^{m-1}1}\otimes \qubit{s}) = \qubit{0^{k-1}1}\qubit{0^{m-k-1}1}\qubit{\phi}$ if $\ell(\qubit{s})\leq 2^{m-2}$ and $k-1=\ilog(\ell(\qubit{s}))$.
Starting with $\qubit{\xi} = \qubit{0^m1}\qubit{\phi}$,  $F(\qubit{\xi})$  equals
$\qubit{0}\otimes \sum_{s:|s|=LH(\qubit{\phi})} F( \qubit{0^{m-1}1} \otimes \qubit{s} ) \otimes  \measure{s}{\phi} = \qubit{0}\otimes \qubit{0^{k-1}1}\qubit{0^{m-k-1}1}\qubit{\phi} = \qubit{0^k1}\qubit{0^{m-k-1}1}\qubit{\phi}$.

The general case of $r_0\neq1$ is similarly handled.
The desired quantum function $SIZE_1$ is therefore set to be $F$.
\end{proof}


Given an input of the form $\qubit{xr_0}\qubit{\phi}$, it is possible to apply any quantum function $g$ in $EQS_0$ to the first segment  $\qubit{xr_0}$ with keeping the second segment $\qubit{\phi}$ intact. This can be done by the use of Scheme IV as follows.

\begin{lemma}\label{EQS-zero-property}
For any string $r_0\in\{0,1\}^+$ and any quantum function $g\in \widehat{EQS}_0$ satisfying $g(\qubit{r_0}\otimes \qubit{\phi}) = g(\qubit{r_0})\otimes \qubit{\phi}$,  there exists another quantum function $F$ definable within $EQS_0+IV$ such that $F(\qubit{xr_0}\qubit{\phi}) = g(\qubit{xr_0})\otimes \qubit{\phi}$ for any $x\in NON_{r_0}$ and any $\qubit{\phi}\in\HH_{\infty}$.
\end{lemma}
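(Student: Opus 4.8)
The plan is to observe that the map $\qubit{xr_0}\qubit{\phi}\mapsto g(\qubit{xr_0})\otimes\qubit{\phi}$ is precisely what one obtains by locating the first separator $r_0$, applying $g$ to the initial segment $\qubit{xr_0}$, and leaving the tail $\qubit{\phi}$ untouched, which is the effect of $CodeSKIP_{+}[r_0,g,I]$. Since a code-skipping operation is not an admissible constructor of $EQS$ on its own but is explicitly allowed to occur inside the ground functions $d$ and $h$ of Scheme IV, the idea is to package it into an instance of the fast quantum recursion in which the recursion itself is made to collapse, so that $g$ is applied exactly once and only to the leading segment.

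Concretely, I would set $F\equiv CFQRec_{t}[r_0,I,b,CodeSKIP_{+}[r_0,g,I]\mid \PP_{|r_0|},\FF_{|r_0|}]$ with $d=I$, reconstruction function $h=CodeSKIP_{+}[r_0,g,I]$, $p_u=I$ for every $u$ (so $m_u=LH$ and $\qubit{\psi_{p_u,\phi}}=\qubit{\phi}$), $f_u=I$ for every $u$ (hence no genuine recursion), and a base function $b$ to be fixed below. First I would check branch (ii). Because $f_u=I$, the auxiliary state degenerates to $\qubit{\zeta_{u,I,\phi}}=\measure{u}{xr_0}\otimes\qubit{\phi}$, so that $\measure{v}{\zeta_{u,I,\phi}}=\measure{v}{\measure{u}{xr_0}}\,\qubit{\phi}$ for $|v|=\ell(\measure{u}{xr_0})$. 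Summing over $u$ and $v$ with the resolutions $\sum_{v}\qubit{v}\measure{v}{\measure{u}{xr_0}}=\measure{u}{xr_0}$ and $\sum_{u}\qubit{u}\otimes\measure{u}{xr_0}=\qubit{xr_0}$ and using linearity, branch (ii) yields $F(\qubit{xr_0}\qubit{\phi})=h(\qubit{xr_0})\otimes\qubit{\phi}$; and since $NON_{r_0}(\qubit{xr_0})=\{x\}$ for $x\in NON_{r_0}$, one has $h(\qubit{xr_0})=g(\qubit{xr_0})$, giving exactly $g(\qubit{xr_0})\otimes\qubit{\phi}$. This settles every input for which (ii) fires, i.e. $x\neq\lambda$, $\ell(\qubit{\phi})>t$, and $|x|\le|r_0|\ilog(\ell(\qubit{\phi}))$.

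The main obstacle is fixing the base function $b$, because branch (i) applies $b$ to the entire state $\qubit{xr_0}\qubit{\phi}$ and fires exactly when $\ell(\qubit{\phi})\le t$ or $|x|>|r_0|\ilog(\ell(\qubit{\phi}))$. In those regimes the scheme no longer pre-splits $\qubit{\phi}$, and $CodeSKIP_{+}[\cdot]$ is not allowed in $b$; moreover the hypothesis $g(\qubit{r_0}\otimes\qubit{\phi})=g(\qubit{r_0})\otimes\qubit{\phi}$ only controls $g$ when the leading section equals $r_0$, so a general $g\in\widehat{EQS}_0$ may reach into $\qubit{\phi}$ on inputs with a different leading section, and one cannot simply take $b=g$. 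My plan for $b$ is to first read off the location of the separator via $SIZE_{r_0}$ (Lemma \ref{SIZE-function}), then shift $\qubit{\phi}$ clear of the initial segment by a controlled sequence of index-based swaps $SWAP_{i,j}$ (Lemma \ref{lemma:special}), apply $g$ to $\qubit{xr_0}$, and restore $\qubit{\phi}$ with the inverse swaps; the bounded case $\ell(\qubit{\phi})\le t$ is the degenerate instance where only a constant number of trailing qubits need be relocated.

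I expect the genuinely delicate point to be exactly this relocation: carrying out ``protect $\qubit{\phi}$, apply $g$, restore $\qubit{\phi}$'' uniformly when the separator position is not known a priori, using only $SIZE_{r_0}$ and index-based swaps rather than the forbidden code-skipping, and verifying that it returns $g(\qubit{xr_0})\otimes\qubit{\phi}$ in agreement with branch (ii) on the overlap of the case conditions. Once both branches are shown to return $g(\qubit{xr_0})\otimes\qubit{\phi}$, the lemma follows, and I would close by noting that every ingredient ($I$, $g\in\widehat{EQS}_0$, the single admissible $CodeSKIP_{+}$ inside $h$, and $SIZE_{r_0}$ together with the swaps of Lemmas \ref{lemma:special} and \ref{SIZE-function}) lies in $EQS_0+IV$, so that $F$ does too.
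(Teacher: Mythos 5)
Your branch-(ii) computation is fine as far as it goes, but the construction does not close, and the missing piece is exactly the hard part. By taking $f_u=I$ you switch off the recursion entirely, so every input on which branch (i) fires --- in particular the terminal case $x=\lambda$ that any genuine recursion bottoms out in, as well as $\ell(\qubit{\phi})\leq t$ and $|x|>|r_0|\,\ilog(\ell(\qubit{\phi}))$ --- must be handled by the base function $b$ alone, applied to the \emph{whole} state $\qubit{xr_0}\qubit{\phi}$. Your plan for $b$ (locate the separator with $SIZE_{r_0}$, then shield $\qubit{\phi}$ with index-based swaps) cannot be carried out inside $EQS_0$: the quantum functions $SWAP_{i,j}$ of Lemma \ref{lemma:special}(10) are defined only for \emph{fixed constant} indices $i<j$ (each is a finite composition $Skip_{i-1}[SWAP_{1,j-i}]$), whereas the separator position $|xr_0|$ varies with the input; even in the ``degenerate'' case $\ell(\qubit{\phi})\leq t$ the trailing qubits sit at an unbounded, input-dependent position. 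The paper itself has to step outside $EQS$ (the $midSWAP_k$ construction preceding Scheme DC in Section \ref{sec:divide-and-conquer}) precisely in order to let a swap index depend on the input length, and Theorem \ref{no-admit-DC} shows that this is not recoverable within $EQS$. So the base function you need is not constructible by your method, and without it the argument is incomplete.

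The paper avoids this obstacle by making the recursion genuine rather than degenerate: it sets $F\equiv CFQRec_{1}[r_0,I,g,h\,|\,\PP_{|r_0|},\FF_{|r_0|}]$ with $f_u=F$ and $p_u=I$ for all $u$, base function $g$ itself, and reconstruction function $h\equiv g\circ Skip_{|r_0|}[g^{-1}]$ (using Lemma \ref{inverse-function} to obtain $g^{-1}$ in $\widehat{EQS}_0$). Because the recursion strips one $|r_0|$-block of $x$ per level, the essential base case is $x=\lambda$, where the lemma's hypothesis $g(\qubit{r_0}\otimes\qubit{\phi})=g(\qubit{r_0})\otimes\qubit{\phi}$ applies verbatim and no separator-locating machinery is needed; the inductive step then gives $h(\qubit{a}\otimes g(\qubit{xr_0}))=g\bigl(\qubit{a}\otimes g^{-1}(g(\qubit{xr_0}))\bigr)=g(\qubit{axr_0})$, i.e., $h$ undoes $g$ on the shorter segment and reapplies it on the longer one. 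This also removes your reliance on $CodeSKIP_{+}[r_0,g,I]$ inside $h$, keeping all ground functions in $\widehat{EQS}_0$. To repair your proof you would have to either restrict the claim to inputs on which branch (ii) fires, or rebuild the base case along these recursive lines --- at which point the $CodeSKIP_{+}$ wrapper buys you nothing.
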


\begin{proof}
Let us recall the quantum function $Skip_k[g]$ from Lemma \ref{Skip-function}.
For a given $g\in \widehat{EQS}_0$, we set $h\equiv g\circ Skip_{|r_0|}[g^{-1}]$.
Notice that $h$ belongs to $\widehat{EQS}_0$ because  $g^{-1}$ exists within  $\widehat{EQS}_0$ (as shown in Lemma \ref{inverse-function}).
The desired quantum function $F$ is defined by Scheme IV as $F\equiv CFQRec_{1}[r_0,I,g,h|\PP_{|r_0|},\FF_{|r_0|}]$, where $\PP_{|r_0|} =\{p_u\}_{u\in\{0,1\}^{|r_0|}}$ with $p_u=I$ for all $u$'s and $\FF_{|r_0|} =\{f_u\}_{u\in\{0,1\}^{|r_0|}}$ with $f_u = F$ for all $u$'s.
As a special case, we then obtain $F(\qubit{r_0}\qubit{\phi}) = g(\qubit{r_0}\qubit{\phi}) =  g(\qubit{r_0})\otimes \qubit{\phi}$.

Assume by induction hypothesis that $F(\qubit{xr_0}\qubit{\phi}) = g(\qubit{xr_0})\otimes \qubit{\phi}$.
Let us consider $F(\qubit{axr_0}\qubit{\phi})$ for an arbitrary string $a\in\{0,1\}^{|r_0|}\cap NON_{r_0}$. It then follows that $\qubit{\zeta_{a,I,\phi}^{(xr_0)}} = \sum_{s:|s|=LH(\qubit{\phi})} ( F(\qubit{xr_0}\otimes \qubit{s} ) \otimes \measure{s}{\phi} ) = \sum_{s:|s|=LH(\qubit{\phi})} ( g(\qubit{xr_0})\otimes \ket{s}\measure{s}{\phi} ) = g(\qubit{xr_0})\otimes \qubit{\phi}$.
We therefore conclude that $F(\qubit{axr_0}\qubit{\phi}) = \sum_{v:|v|=|xr_0|} ( h(\qubit{a}\qubit{v}) \otimes \measure{v}{\zeta_{a,I,\phi}^{(xr_0)}}) = \sum_{v:|v|=|xr_0|} ( h(\qubit{a}\qubit{v}) \otimes \measure{v}{\psi_{g,xr_0}}\otimes \qubit{\phi} ) = h(\qubit{a}\qubit{\psi_{g,xr_0}} )\otimes \qubit{\phi} = h(\qubit{a}\otimes g(\qubit{xr_0}))\otimes \qubit{\phi}$
since $|v|=|xr_0|$, where $\qubit{\psi_{g,xr_0}} = g(\qubit{xr_0})$.
Because $h(\qubit{a}\otimes g(\qubit{xr_0}))$ equals $g(\qubit{a}\otimes g^{-1}(g(\qubit{xr_0})) = g(\qubit{axr_0})$, it follows that $F(\qubit{0xr_0}\qubit{\phi}) = g(\qubit{axr_0})\otimes \qubit{\phi}$, as requested.
\end{proof}


As shown in Proposition \ref{limited-recursion}, Scheme IV turns out to be so powerful that it generates quantum functions, which can modify the first segment, $\qubit{xr_0}$, $\ilog(\ell(\qubit{\phi}))$ times for any given input of the form $\qubit{xr_0}\qubit{\phi}$.


Let $h$ be any quantum function defined by Schemes I--III. Consider a quantum function $\hat{h}$ defined inductively as \begin{equation*}
(*) \hs{5} \hat{h}(\qubit{r_0}) = \qubit{r_0} \hs{3}\text{   and   }\hs{3}  \hat{h}(\qubit{xr_0}) =  \sum_{u:|u|=|r_0|}h( \ket{u} \otimes \hat{h}( \measure{u}{xr_0}))
\end{equation*}
for any $x\in NON_{r_0}$ with $x\neq\lambda$.
This recursive construction scheme (*) looks similar to Scheme T but it is not supported in our system $EQS_0+IV$.
Nevertheless, as shown in Proposition \ref{limited-recursion}, whenever the length of an input qustring $\qubit{xr_0}$ is ``short'' enough compared to another supplemental input qustring $\qubit{\phi}$, it may be possible to ``realize'' this scheme within $EQS_0+IV$.

\begin{example}
As a concrete example of the above function $\hat{h}$, we consider  $CodeSKIP_{+}[r_0,g,I]$ (as well as $CodeSKIP_{-}[r_0,g,I]$) for a norm-preserving quantum function $g$. To see this, we set $h$ to be $g\circ Skip_{|r_0|}[g^{-1}]$ and define $\hat{h}$ from $h$ by applying the above scheme (*).
Here, we wish to claim that this quantum function $\hat{h}$ coincides with $CodeSKIP_{+}[r_0,g,I]$.
Initially, we obtain $CodeSKIP_{+}[r_0,g,I](\qubit{r_0}) = \qubit{r_0} = \hat{h}(\qubit{r_0})$ by (*). For any two strings $a,x\in NON_{r_0}\cap \{0,1\}^{+}$ with $|a|=|r_0|$, it follows from (*) that $\hat{h}(\qubit{axr_0}) = \sum_{u:|u|=|r_0|} h(\qubit{u}\otimes \hat{h}(\measure{u}{axr_0})) = h(\qubit{a}\otimes CodeSKIP_{+}[r_0,g,I](\qubit{xr_0})) = h(\qubit{a}\otimes g(\qubit{xr_0})) = g\circ Skip_{|r_0|}[g^{-1}](\qubit{a}\otimes g(\qubit{xr_0})) = g(\qubit{axr_0})$.
\end{example}

The quantum function $\hat{h}$ given by the recursive scheme (*)  may not be constructed by the only use of Schemes I--IV since the $k$-qubit
quantum recursion scheme is required.
For relatively ``short'' inputs, however, it is possible to compute the value of $\hat{h}$ within the existing system $EQS_0+IV$.

\begin{proposition}\label{limited-recursion}
For a quantum function $h$ in $EQS_0$, let $\hat{h}$ satisfy the conditions of the aforementioned recursive scheme (*). There exists a quantum function $F$ definable  within $EQS_0+IV$ such that, for any $(x,\qubit{\phi})$ with $x\in NON_{r_0}$ and $\qubit{\phi}\in\HH_{\infty}$, if $|x|\leq |r_0|\log{\ell(\qubit{\phi})}$, then $F(\qubit{xr_0}\qubit{\phi}) = \hat{h}( \qubit{xr_0}) \otimes \qubit{\phi}$ holds. However, there is no guarantee that $F(\qubit{xr_0}\qubit{\phi})$ matches $\hat{h}(\qubit{xr_0})\otimes \qubit{\phi}$ when $|x|>|r_0|\log\ell(\qubit{\phi})$.
\end{proposition}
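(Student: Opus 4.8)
The plan is to realize the scheme (*) as a single application of Scheme IV whose geometric halving of the supplemental register is synchronized with the one-section peeling that (*) performs on the first register. Concretely, I would set $F\equiv CFQRec_1[r_0,I,I,h|\PP_{|r_0|},\FF_{|r_0|}]$, taking $d=I$ and the base function $g=I$, reusing the very same $h$ as the middle ground function, and choosing $p_u=I$ (so that $m_u=LH$) and $f_u=F$ for every $u\in\{0,1\}^{|r_0|}$. With these choices, clause (ii) of Scheme IV strips the leading $|r_0|$-qubit section $u_0$ from $\qubit{xr_0}$ and recurses on the remainder, while at each round the left half $LH(\qubit{\phi})$ of the second register is carried through. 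The goal is the claim that, whenever $|x|\leq|r_0|\log\ell(\qubit{\phi})$, one has $F(\qubit{xr_0}\qubit{\phi})=\hat{h}(\qubit{xr_0})\otimes\qubit{\phi}$, which I would prove by induction on the number of sections $|x|/|r_0|$, noting that every $EQS_0$ function (hence $h$, and via a routine induction on (*) also $\hat{h}$) is dimension-preserving and linear.

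For the base case $x=\lambda$, clause (i) fires and returns $g(\qubit{r_0}\qubit{\phi})=\qubit{r_0}\qubit{\phi}=\hat{h}(\qubit{r_0})\otimes\qubit{\phi}$, matching the first line of (*). For the inductive step I would first check that the length hypothesis forces clause (ii) to be the active one: since $x\neq\lambda$ forces $|x|\geq|r_0|$, the bound $|x|\leq|r_0|\log\ell(\qubit{\phi})$ gives $\log\ell(\qubit{\phi})\geq1$, hence $\ell(\qubit{\phi})\geq2>t$, and $\log\ell(\qubit{\phi})\leq\ilog(\ell(\qubit{\phi}))=k$ gives $|x|\leq|r_0|k$, so the third disjunct $|x|>|r_0|k$ of clause (i) fails. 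Writing $x=u_0x''$ with $|u_0|=|r_0|$, the bra--ket conventions make $\measure{u}{xr_0}$ the null vector for every $u\neq u_0$ and equal to $\qubit{x''r_0}$ for $u=u_0$, so only the $u=u_0$ term survives in the outer sum of clause (ii), and $\qubit{\zeta_{u_0,I,\phi}^{(xr_0)}}$ reduces to $\sum_{s:|s|=LH(\qubit{\phi})}F(\qubit{x''r_0}\otimes\ket{s})\otimes\measure{s}{\phi}$.

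The heart of the argument is to apply the induction hypothesis to each $F(\qubit{x''r_0}\otimes\ket{s})$, and this is the step I expect to be the main obstacle, since the length bookkeeping must close exactly. We have $|x''|=|x|-|r_0|\leq|r_0|(\log\ell(\qubit{\phi})-1)$, while $\ell(\ket{s})=LH(\qubit{\phi})=\ceilings{\ell(\qubit{\phi})/2}\geq\ell(\qubit{\phi})/2$ yields $\log\ell(\ket{s})\geq\log\ell(\qubit{\phi})-1$; together these give $|x''|\leq|r_0|\log\ell(\ket{s})$, so the hypothesis applies and $F(\qubit{x''r_0}\otimes\ket{s})=\hat{h}(\qubit{x''r_0})\otimes\ket{s}$. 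This inequality $\log\ceilings{n/2}\geq\log n-1$ is precisely what synchronizes the single-section peel of (*) with the halving of Scheme IV, and it is what forces the logarithmic threshold in the statement. Substituting, the identity $\sum_{s:|s|=LH(\qubit{\phi})}\ket{s}\measure{s}{\phi}=\qubit{\phi}$ collapses $\qubit{\zeta_{u_0,I,\phi}^{(xr_0)}}$ to $\hat{h}(\qubit{x''r_0})\otimes\qubit{\phi}$; since $\hat{h}(\qubit{x''r_0})$ then occupies exactly the leading length-$|x|$ block, projecting onto $v$ and invoking the linearity of $h$ reassembles the output as $h(\qubit{u_0}\otimes\hat{h}(\qubit{x''r_0}))\otimes\qubit{\phi}$, which by (*) is $\hat{h}(\qubit{xr_0})\otimes\qubit{\phi}$, as required. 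Finally, for the negative remark I would simply observe that once $|x|>|r_0|\ilog(\ell(\qubit{\phi}))$ the third disjunct of clause (i) triggers and returns $g(\qubit{xr_0}\qubit{\phi})=\qubit{xr_0}\qubit{\phi}$, so the recursion halts before consuming all of $x$ and no longer agrees with $\hat{h}(\qubit{xr_0})\otimes\qubit{\phi}$ in general.
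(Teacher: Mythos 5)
Your proposal is correct and follows essentially the same route as the paper: both define $F$ via Scheme IV with $d=g=I$, $p_u=I$, and $f_u=F$, and induct on the number of $|r_0|$-sections of $x$, collapsing $\qubit{\zeta_{u_0,I,\phi}^{(xr_0)}}$ to $\hat{h}(\qubit{x''r_0})\otimes\qubit{\phi}$ and invoking the recursion (*) for the reassembly. Your version is in fact slightly more careful than the paper's, since you explicitly verify that the length hypothesis $|x''|\leq|r_0|\log\ell(\ket{s})$ is preserved under halving and you address the negative clause of the statement, both of which the paper leaves implicit.
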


\begin{proof}
Assume that $|x|\leq |r_0|\log{\ell(\qubit{\phi})}$. Consider the quantum function $F \equiv  CFQRec_{|r_0|-1}[r_0,I,g,h |\PP_{|r_0|}, \FF_{|r_0|}]$, where $\PP_{|r_0|}=\{p_u\}_{u\in\{0,1\}^{|r_0|}}$ and $\FF_{|r_0|} =\{f_u\}_{u\in\{0,1\}^{|r_0|}}$ with $p_u\equiv I$ and $f_u\equiv F$ for all  $u\in\{0,1\}^{|r_0|}$.
We verify the proposition by induction on the number of applications of $F$.
In the base case, we obtain $F(\qubit{r_0}\qubit{\phi}) = \qubit{r_0} \qubit{\phi} = \hat{h}(\qubit{r_0})\otimes \qubit{\phi}$ since $\hat{h}(\qubit{r_0}) = \qubit{r_0}$ by (*).
Next, we consider any string $ax$ with $a\in\{0,1\}^{|r_0|}\cap NON_{r_0}$ and $x\in NON_{r_0}$. We obtain $F(\qubit{xr_0}\qubit{\phi}) = \hat{h}(\qubit{xr_0})\otimes \qubit{\phi}$ by induction hypothesis.
It then follows that $F(\qubit{axr_0}\qubit{\phi}) =  \sum_{v:|v|=|xr_0|}  ( h( \qubit{a} \otimes \qubit{v}) \otimes \measure{v}{\zeta_{a,I,\phi}^{(xr_0)}} )$, where $\qubit{\zeta_{a,I,\phi}^{(xr_0)}} = \sum_{s:|s|=LH(\qubit{\phi})} (F(\qubit{xr_0}\otimes \qubit{s}) \otimes \measure{s}{\phi})$.
Since $F(\qubit{xr_0}\qubit{\phi}) = \hat{h}(\qubit{xr_0})\otimes \qubit{\phi}$, $\qubit{\zeta_{a,I,\phi}^{(xr_0)}}$ equals $\sum_{s:|s|=LH(\qubit{\phi})} (\hat{h}(\qubit{xr_0})\otimes \qubit{s} \otimes  \measure{s}{\phi}) = \hat{h}(\qubit{xr_0}) \otimes \qubit{\phi}$.
We write $\qubit{\psi_{\hat{h},xr_0}}$ for $\hat{h}(\qubit{xr_0})$. We then obtain $F(\qubit{axr_0}\qubit{\phi}) = \sum_{v:|v|=|xr_0|} ( h(\qubit{a}\otimes \qubit{v})  \otimes \measure{v}{\psi_{\hat{h},xr_0}} ) \otimes \qubit{\phi}$.
This implies that $F(\qubit{axr_0}\qubit{\phi}) = h(\qubit{a}\otimes \hat{h}(\qubit{xr_0})) \otimes \qubit{\phi} = \hat{h}(\qubit{axr_0})\otimes \qubit{\phi}$ by the definition of $\hat{h}$.
\end{proof}

Since $CodeSKIP_{+}[\cdot]$ can be realized by the recursive construction scheme (*), Proposition \ref{limited-recursion} allows us to use $CodeSKIP_{+}[\cdot]$ freely as if it is a quantum function in $EQS_0+IV$.

\begin{corollary}
There exists a quantum function $F$ definable within $EQS_0+IV$ such that $F(\qubit{xr_0}\qubit{\phi}) = CodeSKIP_{+}[r_0,g,I](\qubit{xr_0}\qubit{\phi})$ if $|x|\leq |r_0|\log{\ell(\qubit{\phi})}$.
\end{corollary}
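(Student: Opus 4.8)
The plan is to assemble three facts already established in the excerpt: the Example showing that the scheme $(*)$ applied to a suitable $h$ reproduces $CodeSKIP_{+}[r_0,g,I]$, the statement of Proposition~\ref{limited-recursion}, and the query-independence of $CodeSKIP_{+}[r_0,g,I]$. No genuinely new computation is needed; the work is bookkeeping to match the output form of Proposition~\ref{limited-recursion} with the two-segment action of $CodeSKIP_{+}$. I assume, as in the Example establishing $\hat{h}=CodeSKIP_{+}[r_0,g,I]$, that $g$ is norm-preserving, which is what makes $g^{-1}$ available within $\widehat{EQS}_0$ via Lemma~\ref{inverse-function}.

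First I would set $h\equiv g\circ Skip_{|r_0|}[g^{-1}]$, exactly the choice made in that Example. Since $Skip_{|r_0|}[g^{-1}]$ lies in $EQS_0$ by Lemma~\ref{Skip-function} and composition is Scheme~II, $h$ is a legitimate quantum function in $EQS_0$, so the recursive scheme $(*)$ applied to $h$ yields a well-defined $\hat{h}$. The Example already proves the key identity $\hat{h}(\qubit{xr_0}) = CodeSKIP_{+}[r_0,g,I](\qubit{xr_0})$ for every $x\in NON_{r_0}$, which I would invoke verbatim rather than re-derive.

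Next I would apply Proposition~\ref{limited-recursion} to this very $h$. It produces a quantum function $F$ definable within $EQS_0+IV$ with $F(\qubit{xr_0}\qubit{\phi}) = \hat{h}(\qubit{xr_0})\otimes \qubit{\phi}$ whenever $|x|\leq |r_0|\log{\ell(\qubit{\phi})}$. This $F$ is the candidate claimed by the corollary, and it remains only to rewrite $\hat{h}(\qubit{xr_0})\otimes\qubit{\phi}$ as an application of $CodeSKIP_{+}$ to the whole input.

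The closing step is where query-independence does the work. Because $CodeSKIP_{+}[r_0,g,I]$ has $h=I$ in its second slot, it is query-independent, hence $CodeSKIP_{+}[r_0,g,I](\qubit{xr_0}\qubit{\phi}) = CodeSKIP_{+}[r_0,g,I](\qubit{xr_0})\otimes \qubit{\phi}$. Chaining this with the Example's identity and Proposition~\ref{limited-recursion}, for $|x|\leq |r_0|\log{\ell(\qubit{\phi})}$ I obtain
\[
F(\qubit{xr_0}\qubit{\phi}) = \hat{h}(\qubit{xr_0})\otimes \qubit{\phi} = CodeSKIP_{+}[r_0,g,I](\qubit{xr_0})\otimes \qubit{\phi} = CodeSKIP_{+}[r_0,g,I](\qubit{xr_0}\qubit{\phi}),
\]
which is exactly the asserted equality. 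The only point demanding care is this final matching: Proposition~\ref{limited-recursion} only guarantees agreement of the form $\hat{h}(\qubit{xr_0})\otimes\qubit{\phi}$ on the first segment, whereas $CodeSKIP_{+}$ is defined as a genuine operation on the full input $\qubit{xr_0}\qubit{\phi}$, and it is precisely query-independence that reconciles the two.
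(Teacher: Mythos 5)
Your proposal is correct and follows essentially the same route the paper intends: it instantiates the preceding Example (taking $h\equiv g\circ Skip_{|r_0|}[g^{-1}]$ so that the scheme $(*)$ yields $\hat{h}=CodeSKIP_{+}[r_0,g,I]$) and then invokes Proposition~\ref{limited-recursion} to realize $\hat{h}$ within $EQS_0+IV$ on inputs with $|x|\leq |r_0|\log\ell(\qubit{\phi})$. The only additions are the explicit final bookkeeping via query-independence and the (correctly identified, implicitly assumed) norm-preservation of $g$ needed for $g^{-1}$, both consistent with the paper.
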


\subsection{Elementary Quantum Schemes}\label{sec:LCompo-EQS}

Formally, let us introduce  $\widehat{EQS}$ and  $EQS$. We have already explained Schemes I--IV. Now, we wish to add the final piece of construction scheme, called Scheme V, which intuitively supports successive $\ilog(\ell(\qubit{\phi}))$ applications of $Compo[g,g]$ for a given quantum function $g$ taking $\qubit{xr_0}\qubit{\phi}$ as an input.

\begin{definition}\label{def-LCompo}
We introduce the following scheme.
\begin{enumerate}\vs{-3}
  \setlength{\topsep}{-2mm}%
  \setlength{\itemsep}{1mm}%
  \setlength{\parskip}{0cm}%

\item[V.] The \emph{logarithmically-many composition scheme}.
From $g$, we define $LCompo[g]$ as follows:

\s
\n\hs{2}(i) $LCompo[g](\qubit{xr_0} \qubit{\phi}) = \qubit{xr_0} \otimes \qubit{\phi}$
\hs{8} if $x=\lambda$, $\ell(\qubit{\phi})\leq 1$, or $|x|>|r_0|k$, \vs{1} \\
\n\hs{1}(ii) $LCompo[g](\qubit{xr_0} \qubit{\phi}) =  g^{k}(\qubit{xr_0}\qubit{\phi})$  \hs{7}otherwise, \s\\
where $x\in NON_{r_0}$ and $k=\ilog(\ell(\qubit{\phi}))$.
\end{enumerate}
\end{definition}

When $g$ is query-independent, we say that Scheme V is \emph{query-independent}. We remark that query-independent Scheme V is actually redundant because Proposition \ref{limited-recursion} helps us  realize $LCompo[g]$ by an application of Scheme IV.

\begin{definition}\label{EQS-definition}
The class $EQS$ is the smallest set of (code-controlled) quantum functions that contains the quantum functions of Scheme I  and is closed under Schemes II--V. Similarly, $\widehat{EQS}$ is defined with no use of Item 6) of Scheme I.
\end{definition}

Note that any quantum function $F$ in $EQS$ is constructed by sequential applications of Schemes I--V. Such a finite series is referred to as a \emph{construction history} of $F$. The length of this construction history serves as a \emph{descriptive complexity measure} of $F$. Refer to \cite{Yam20} for more discussions.

\section{Quantum Functions Definable within EQS}\label{sec:definable-EQS}

In Section \ref{sec:LCompo-EQS}, we have introduced the system  $\widehat{EQS}$ as well as $EQS$. In this section, we will study the basic properties of all quantum functions in $\widehat{EQS}$ and $EQS$ by introducing several useful quantum functions and extra schemes, which are definable within $EQS$. One of the most important properties we can show is an implementation of a simple ``binary search strategy'' in $EQS$.

\subsection{Basic Properties of $\widehat{EQS}$}

The only difference between  $\widehat{EQS}$ and $EQS$ is the free use of item 6) (quantum measurement) of Scheme I. Since $\widehat{EQS}$ does not involve quantum measurement, we naturally expect that  $\widehat{EQS}$ enjoys the unitary nature of quantum computation described in the following three  lemmas, Lemmas \ref{preserving-property}--\ref{inverse-function}.

\begin{lemma}\label{preserving-property}
Any quantum function in $\widehat{EQS}$ is dimension-preserving and norm-preserving.
\end{lemma}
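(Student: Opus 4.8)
The plan is to proceed by structural induction on the construction history of a quantum function $F\in\widehat{EQS}$, and in fact to prove the slightly stronger assertion that every such $F$ restricts, on each $\HH_{2^n}$, to a linear isometry into $\HH_{2^n}$ (equivalently, since $\HH_{2^n}$ is finite-dimensional, to a unitary). Dimension-preservation is then just the statement that $F$ maps $\HH_{2^n}$ into $\HH_{2^n}$, and norm-preservation is the isometry property specialized to a single vector. The reason for carrying the stronger inner-product-preserving hypothesis is that it is what survives the branching and recursion steps, where the output must be assembled from contributions living in orthogonal subspaces and the vanishing of cross terms has to be certified. For the base cases, each of the five initial functions $I$, $PHASE_{\theta}$, $ROT_{\theta}$, $NOT$, $SWAP$ (the item 6) function $MEAS[a]$ being excluded from $\widehat{EQS}$) acts as a fixed unitary on the first one or two qubits tensored with the identity on the remaining qubits, so all the required properties are immediate.

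For Scheme II, $Compo[g,h]=g\circ h$ is a composition of two isometries of $\HH_{2^n}$ and hence an isometry. For Scheme III, when $\ell(\qubit{\phi})\le 1$ the map is the identity; otherwise $Branch[g,h](\qubit{\phi})=\qubit{0}\otimes g(\measure{0}{\phi})+\qubit{1}\otimes h(\measure{1}{\phi})$, and since the two summands lie in the mutually orthogonal subspaces $\qubit{0}\otimes\HH_{2^{n-1}}$ and $\qubit{1}\otimes\HH_{2^{n-1}}$, the squared norm splits as $\|g(\measure{0}{\phi})\|^2+\|h(\measure{1}{\phi})\|^2$; the induction hypothesis turns this into $\|\measure{0}{\phi}\|^2+\|\measure{1}{\phi}\|^2=\|\qubit{\phi}\|^2$, and the analogous two-input computation yields inner-product preservation. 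Dimension-preservation follows from $\ell(\measure{a}{\phi})=\ell(\qubit{\phi})-1$.

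The main work, and the place I expect the genuine difficulty, is Scheme IV. Here I would run a secondary induction on $\ell(\qubit{\phi})$, which is legitimate because each recursive call halves the length of the second register through $m_u\in\{LH,RH\}$, so the recursion bottoms out within $\ceilings{\log\ell(\qubit{\phi})}$ steps into case (i), whose output $g(\qubit{xr_0}\qubit{\phi})$ is an isometry image by the primary hypothesis. Dimension-preservation in case (ii) is exactly the chain of length identities recorded after Definition \ref{def-EQS}, namely $\ell(\qubit{\zeta_{u,p_u,\phi}^{(x'r_0)}})=\ell(\measure{u}{x'r_0})+\ell(\qubit{\phi})$ and $\ell(\measure{v}{\zeta_{u,p_u,\phi}^{(x'r_0)}})=\ell(\qubit{\phi})$, which sum to $\ell(\qubit{xr_0}\qubit{\phi})$. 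For norm-preservation I would telescope: since $d(\qubit{xr_0})=\qubit{x'r_0}$ with $x'\in NON_{r_0}$ and $x'\ne\lambda$, the conventions $0\otimes\qubit{s}=\bfzero$ and $f_u(\bfzero)=\bfzero$ annihilate every term except the unique $u^{*}$ matching the first section of $x'r_0$; then, using that $h$ is an isometry, the states $h(\qubit{u^{*}}\qubit{v})$ are orthonormal across $v$, so all cross terms vanish and $\|F(\qubit{xr_0}\qubit{\phi})\|^2=\sum_{v}\|p_{u^{*}}^{-1}(\measure{v}{\zeta_{u^{*},p_{u^{*}},\phi}^{(x'r_0)}})\|^2=\|\qubit{\zeta_{u^{*},p_{u^{*}},\phi}^{(x'r_0)}}\|^2$. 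A second application of the same idea, now using that $f_{u^{*}}\in\{I,F\}$ is an isometry (by the secondary hypothesis when $f_{u^{*}}=F$) and that $p_{u^{*}}$ is norm-preserving, collapses this to $\|p_{u^{*}}(\qubit{\phi})\|^2=\|\qubit{\phi}\|^2=\|\qubit{xr_0}\qubit{\phi}\|^2$.

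Finally, Scheme V sets $LCompo[g]$ equal to the identity on one family of code-sectors and to $g^{k}$, with $k=\ilog(\ell(\qubit{\phi}))$ varying from sector to sector, on the complementary family; each $g^{k}$ is a composition of isometries and the identity is trivially one, so norm is preserved within each individual sector. The delicate point, which I regard as the real obstacle for both Schemes IV and V, is to verify that this conditional, sector-dependent choice is collectively injective, that is, that the images of the case-(i) sectors and of the various case-(ii) sectors remain mutually orthogonal rather than colliding (any collision would immediately break norm-preservation on a difference of basis states). This is where I would invoke the code-controlled nature of the functions of $\widehat{EQS}$: because every scheme respects the partition of $\qubit{xr_0}\qubit{\phi}$ into the code block $\qubit{xr_0}$ and the payload $\qubit{\phi}$, the quantities $\ell(\qubit{\phi})$ and $|x|$ that determine the branching condition are left invariant by the operations applied, so $g^{k}$ maps each case-(ii) sector back into itself and never into the range of the identity branch. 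Granting this invariance, the orthogonal decomposition of $\HH_{2^n}$ into code-sectors is carried to an orthogonal decomposition of the same shape with each block mapped isometrically, and the isometry of $LCompo[g]$ (and the clean reassembly of Scheme IV across sectors) follows. Establishing this sector-invariance rigorously from the code-controlled conventions is the step I expect to demand the most care.
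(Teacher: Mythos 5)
Your proposal is correct and follows essentially the same route as the paper: a structural induction over Schemes I--V, with the length identities recorded after Definition \ref{def-EQS} handling dimension-preservation in Scheme IV and a secondary induction on $\ell(\qubit{\phi})$ handling the recursion. The only differences are cosmetic --- you prove Schemes II--III directly where the paper cites \cite{Yam20}, and you carry the (equivalent, by linearity and polarization) isometry formulation of norm-preservation --- while supplying the orthogonality details that the paper compresses into ``the property of norm-preserving is similarly proven.''
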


\begin{proof}
Let us check Schemes I--V separately to verify the lemma.
Note that Items 1)--5) of Scheme I are clearly dimension-preserving and norm-preserving. The lemma was already shown for Schemes II--III in \cite{Yam20}.
Therefore, it suffices to check Schemes IV and V. Let $F\equiv CFQRec_t[r_0,g,h| \PP_{|r_0|},\FF_{|r_0|}]$ be any quantum function defined by Scheme IV. It is easy to verify that $HalfSWAP$ is dimension-preserving and norm-preserving.
Thus, so are all quantum functions in $\PP_{|r_0|}$. By induction hypothesis, we assume that $g$ and $h$ are dimension-preserving and norm-preserving. In what follows, we argue by way of induction on the input length of $F$ that $F$ is dimension-preserving and norm-preserving. If either $x=\lambda$ or $\ell(\qubit{\phi})\leq t$, then we obtain $\ell(F(\qubit{xr_0}\qubit{\phi})) = \ell(g(\qubit{xr_0}\qubit{\phi})) = \ell(\qubit{xr_0}\qubit{\phi})$ and $\|F(\qubit{xr_0}\qubit{\phi})\| = \|g(\qubit{xr_0}\qubit{\phi})\| = \|\qubit{xr_0}\qubit{\phi}\|$.

Let us consider the case where $x\neq\lambda$ and $\ell(\qubit{\phi})>t$.
We then obtain $\ell(h(\qubit{u}\qubit{v}) \otimes p_u^{-1}(\measure{v}{\zeta_{u,p_u,\phi}})) = \ell(\qubit{u})+\ell(\ket{v}\measure{v}{\zeta_{u,p_u,\phi}}) = \ell(\qubit{xr_0})-|r_0| + \ell(\qubit{\zeta_{u,p_u,\phi}})$.
By induction hypothesis, we obtain $\ell(F(\measure{u}{xr_0}\ket{s})) = \ell(\measure{u}{xr_0}\ket{s})$.
This implies that $\ell(\qubit{\zeta_{u,p_u,\phi}}) = \ell(\sum_{s}F(\measure{u}{xr_0}\ket{s})\otimes \measure{s}{\psi_{p_u,\phi}}) =\ell(\measure{u}{xr_0}\qubit{\phi})$.
It then follows that $\ell(F(\qubit{xr_0}\qubit{\phi})) = \ell( \sum_{u}\sum_{v} h(\qubit{u}\qubit{v}) \otimes p_u^{-1}(\measure{v}{\zeta_{u,p_u,\phi}})) = \ell(\qubit{xr_0}\qubit{\phi})$. The property of norm-preserving is similarly proven.

For Scheme V, it is obvious that, if $g$ is dimension-preserving and norm-preserving, then so is $g^k$ for any number $k\in\nat^{+}$. Thus, $LCompo[g]$ satisfies the lemma.
\end{proof}

We further discuss two useful construction schemes, which are definable within $\widehat{EQS}$.

\begin{lemma}\label{various-schemes}
Let $k\in\nat^{+}$ and let $\GG_k=\{g_u\}_{u\in\{0,1\}^k}$ be a series of $\widehat{EQS}$-functions. The following quantum functions all belong to $\widehat{EQS}$. The lemma also holds even if  $\widehat{EQS}$ is replaced by  $EQS$. Let $\qubit{\phi}$ be any quantum state in $\HH_{\infty}$
and let $x\in\{0,1\}^k$.

\begin{enumerate}\vs{-2}
  \setlength{\topsep}{-2mm}%
  \setlength{\itemsep}{1mm}%
  \setlength{\parskip}{0cm}%

\item $Compo[\GG_k](\qubit{\phi}) = g_{s_1}\circ g_{s_2}\circ \cdots \circ  g_{s_{2^k}}(\qubit{\phi})$. (multiple composition)

\item $Branch_k[\GG_k](\qubit{\phi})= \qubit{\phi}$  if $\ell(\qubit{\phi})<  k$ and $Branch_k[\GG_k](\qubit{\phi})= \sum_{s:|s|=k}\qubit{s}\otimes g_s(\measure{s}{\phi})$ otherwise.
\end{enumerate}
\end{lemma}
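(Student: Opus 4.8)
The plan is to build both functions by straightforward iteration of the binary schemes already at hand, reading off their stated action from the bra--ket conventions of Section~\ref{sec:bra-ket-notation}. Since neither construction invokes item~6) of Scheme~I, the resulting functions will lie in $\widehat{EQS}$, and \emph{a fortiori} in $EQS$; this is what gives the final sentence of the statement.

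For part~(1), fix the order $s_1,s_2,\ldots,s_{2^k}$ of $\{0,1\}^k$ used in the statement and define $Compo[\GG_k]$ by nesting binary compositions, $Compo[\GG_k]\equiv Compo[g_{s_1},Compo[g_{s_2},\cdots Compo[g_{s_{2^k-1}},g_{s_{2^k}}]\cdots]]$. Each bracket is a single application of Scheme~II, and there are exactly $2^k-1$ of them, a fixed finite number for fixed $k$; hence the construction history is finite and $Compo[\GG_k]\in\widehat{EQS}$. Associativity of $\circ$ immediately yields the claimed action $g_{s_1}\circ g_{s_2}\circ\cdots\circ g_{s_{2^k}}$. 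This part is essentially bookkeeping.

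For part~(2), I would argue by induction on $k$ using Scheme~III, mirroring the construction of $Skip_k[g]$ in Lemma~\ref{Skip-function} but with a different ground function in each branch. In the base case $k=1$ set $Branch_1[\GG_1]\equiv Branch[g_0,g_1]$, which reproduces the stated action for $\ell(\qubit{\phi})\neq 1$, the seam $\ell(\qubit{\phi})=1$ being deferred to the next paragraph. For the inductive step, split $\GG_k$ according to the leading bit into $\GG_k^{(0)}=\{g_{0u}\}_{u\in\{0,1\}^{k-1}}$ and $\GG_k^{(1)}=\{g_{1u}\}_{u\in\{0,1\}^{k-1}}$, and define $Branch_k[\GG_k]\equiv Branch[\,Branch_{k-1}[\GG_k^{(0)}],\,Branch_{k-1}[\GG_k^{(1)}]\,]$. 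When $\ell(\qubit{\phi})>1$, Scheme~III peels off the first qubit and applies $Branch_{k-1}[\GG_k^{(b)}]$ to $\measure{b}{\phi}$, whose length is $\ell(\qubit{\phi})-1$; the induction hypothesis together with the identity $\measure{u}{(\measure{b}{\phi})}=\measure{bu}{\phi}$ then recombines the two branches into $\sum_{s:|s|=k}\qubit{s}\otimes g_s(\measure{s}{\phi})$, exactly as required whenever $\ell(\qubit{\phi})\geq k$. Since each level adds one application of Scheme~III to already-constructed $\widehat{EQS}$-functions, $Branch_k[\GG_k]\in\widehat{EQS}$.

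The main point requiring care is the matching of the boundary conditions, not the generic branch. I must check that the nested construction collapses to the identity precisely on the range $\ell(\qubit{\phi})<k$: when $1<\ell(\qubit{\phi})<k$ the inner branches receive inputs of length $\ell(\qubit{\phi})-1<k-1$, so by induction they act as the identity and Scheme~III reassembles $\qubit{0}\otimes\measure{0}{\phi}+\qubit{1}\otimes\measure{1}{\phi}=\qubit{\phi}$, while for $\ell(\qubit{\phi})\leq1$ Scheme~III returns $\qubit{\phi}$ outright. The genuinely delicate interface is $\ell(\qubit{\phi})=k$, where each $\measure{s}{\phi}$ has degenerated into a scalar; here I would appeal to the scalar/tensor conventions of Section~\ref{sec:bra-ket-notation} and to dimension-preservation (Lemma~\ref{preserving-property}) to confirm that the term $\qubit{s}\otimes g_s(\measure{s}{\phi})$ is consistent with the $\ell(\qubit{\phi})<k$ clause, so that the two displayed cases agree at the seam. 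I expect this seam to be the one step where the conventions must be applied most carefully, the rest being a routine induction.
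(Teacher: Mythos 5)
Your proposal is correct, and it is in fact more informative than what the paper prints: the paper's entire proof of this lemma is a citation to \cite[Lemma 3.6]{Yam20} with the remark that the argument carries over to $EQS$ and $\widehat{EQS}$, whereas you actually carry out the construction. Your two constructions are exactly the ones the paper implicitly relies on elsewhere: the nested $Compo[g_{s_1},Compo[g_{s_2},\cdots]]$ for part (1), and for part (2) the recursion $Branch_k[\GG_k]\equiv Branch[\,Branch_{k-1}[\GG_k^{(0)}],\,Branch_{k-1}[\GG_k^{(1)}]\,]$, which is precisely the generalization of the paper's own $Skip_{k+1}[g]\equiv Branch[Skip_k[g],Skip_k[g]]$ from Lemma \ref{Skip-function} (and the paper later observes that $Skip_k[g]=Branch_k[\GG_k]$ with all $g_u=g$). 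Your attention to the seam $\ell(\qubit{\phi})=k$ is well placed: the nested Scheme~III construction collapses to the identity for all $\ell(\qubit{\phi})\leq k$ (the innermost $Branch$ sees an input of length $\leq 1$), while the statement's second clause nominally covers $\ell(\qubit{\phi})=k$; under the scalar conventions of Section \ref{sec:bra-ket-notation} the term $\qubit{s}\otimes g_s(\measure{s}{\phi})$ degenerates to $\measure{s}{\phi}\,\qubit{s}$ there, so the two clauses agree and there is no conflict --- indeed the paper itself is not uniform on this point, stating the boundary as $\ell(\qubit{\phi})\leq k$ in Lemma \ref{Skip-function} but as $\ell(\qubit{\phi})<k$ here. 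In short, no gap; you have supplied the argument the paper outsources to \cite{Yam20}.
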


\begin{proof}
(1)--(2) The above schemes were shown in \cite[Lemma 3.6]{Yam20} to be valid for $\squareqp$ and $\hatsquareqp$, and their argument also work for $EQS$ and $\widehat{EQS}$.
\end{proof}

As a quick example of $Branch_k[\GG_k]$, let us recall the quantum function $Skip_k[g]$ from Lemma \ref{Skip-function}. It is obvious that $Skip_k[g]$ is simply  defined to be  $Branch_k[\{g_u\}_{u\in\{0,1\}^k}]$ with $g_u=g$ for all $u\in\{0,1\}^k$.


For any given quantum function, when there exists another quantum function $g$ satisfying $f\circ g(\qubit{\phi}) = g\circ f(\qubit{\phi}) = \qubit{\phi}$ for any $\qubit{\phi}\in\HH_{\infty}$, we express this $g$ as $f^{-1}$ (or sometimes $f^{\dagger}$) and
call it the \emph{inverse (function)} of $f$.

\begin{lemma}\label{inverse-function}
For any quantum function $g$ in $\widehat{EQS}$, its inverse function $g^{-1}$ exists in $\widehat{EQS}$. Moreover, if $g$ is in $\widehat{EQS}_0$, then $g^{-1}$ is also in $\widehat{EQS}_0$
\end{lemma}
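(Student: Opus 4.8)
The plan is to argue by structural induction on the construction history of $g$, producing for each scheme an explicit inverse that again lies in $\widehat{EQS}$, and observing along the way that when $g$ is built from Schemes I--III only (i.e.\ $g\in\widehat{EQS}_0$) the inverse is built by the same schemes and so stays in $\widehat{EQS}_0$. A convenient simplification comes from Lemma~\ref{preserving-property}: every $g\in\widehat{EQS}$ is dimension- and norm-preserving, hence restricts to a unitary (in particular a bijection) on each $\HH_{2^n}$. Consequently, once I exhibit a candidate $g'\in\widehat{EQS}$ with $g'\circ g=I$ on $\HH_{\infty}$, it is automatically the two-sided inverse, so the whole task reduces to checking the one-sided identity $g'\circ g=I$.

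For the base cases (Scheme I, excluding item~6), $I$, $NOT$, and $SWAP$ are self-inverse, while $PHASE_{\theta}^{-1}=PHASE_{-\theta}$ and $ROT_{\theta}^{-1}=ROT_{-\theta}$; these negated-angle variants are again initial functions once $K$ is closed under $\theta\mapsto 2\pi-\theta$ (equivalently, under conjugation of the amplitude $e^{\imath\theta}$), and all of them reside in $\widehat{EQS}_0$. For the classical-style schemes the inverse identities are the expected ones and are verified directly: $Compo[g,h]^{-1}=Compo[h^{-1},g^{-1}]$; $Branch[g,h]^{-1}=Branch[g^{-1},h^{-1}]$, since applying $Branch[g^{-1},h^{-1}]$ to $\qubit{0}\otimes g(\measure{0}{\phi})+\qubit{1}\otimes h(\measure{1}{\phi})$ restores $\qubit{\phi}$ by the branch-wise cancellations $g^{-1}\circ g=h^{-1}\circ h=I$ (both maps fixing states of length $\leq1$); and $LCompo[g]^{-1}=LCompo[g^{-1}]$, because the number of compositions $k=\ilog(\ell(\qubit{\phi}))$ is fixed by the input and is left unchanged by the dimension-preserving, code-controlled action of $g$, so that $(g^{k})^{-1}=(g^{-1})^{k}$ applies branch for branch. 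In each case the ground inverses exist in $\widehat{EQS}$ by the induction hypothesis, and Schemes II and III keep the construction inside $\widehat{EQS}_0$; this is exactly what yields the ``moreover'' clause, since a $\widehat{EQS}_0$-function uses only Schemes I--III.

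The real obstacle is Scheme IV. Writing $F\equiv CFQRec_t[r_0,d,g,h\mid\PP_{|r_0|},\FF_{|r_0|}]$, the approach is first to recognize, by an inner induction on the number of recursion rounds (equivalently on $\ell(\qubit{\phi})$, which halves each round and bottoms out within $\ceilings{\log\ell(\qubit{\phi})}$ rounds), that on each input $\qubit{xr_0}\qubit{\phi}$ the map $F$ factors as a product of unitary ``round operators''. One outermost round rewrites the control by $d$, rotates the second register by $p_u$, peels a length-$|r_0|$ block off the control and descends on the halved register through $f_u$, and finally reassembles with $h$ on the control while undoing the rotation with $p_u^{-1}$. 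I would then define $F^{-1}$ by reversing the order of the rounds and inverting each, and claim that this reversed process is itself a Scheme~IV instance with ground functions $d^{-1}$, $g^{-1}$, and a position-adjusted $h^{-1}$ (of the same shape as the $h\equiv g\circ Skip_{|r_0|}[g^{-1}]$ used in Lemma~\ref{EQS-zero-property}), the family $\PP_{|r_0|}$ left unchanged: a round with $p_u=HalfSWAP$ already pairs an entry application of $HalfSWAP$ with an exit application of $HalfSWAP^{-1}$, so its time-reversal again pairs $HalfSWAP$ with $HalfSWAP^{-1}$ and fits the same $\{I,HalfSWAP\}$ template (with the associated $LH/RH$ choice of $m_u$ preserved). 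The ground inverses $d^{-1}$ and $h^{-1}$ exist in $\widehat{EQS}$ by the (outer) structural induction hypothesis, even when $d$ and $h$ are themselves defined using $CodeSKIP_{+}[\cdot]$ or $CodeSKIP_{-}[\cdot]$, so $F^{-1}$ lands in $\widehat{EQS}$; no $\widehat{EQS}_0$-claim is needed here, as Scheme~IV is unavailable in $\widehat{EQS}_0$.

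I expect the delicate bookkeeping inside Scheme~IV to be where the real work lies, specifically: (a) checking that reversing the per-round order of the control-rewrites lines up so that $d^{-1}$ fires after, rather than before, the inverse descent; (b) confirming that $h^{-1}$ acts on exactly the $\qubit{u}\qubit{v}$ coordinates on which $h$ acted, so that the assembly and disassembly steps cancel; and (c) matching the termination clause~(i), where $F$ agrees with $g$, against $g^{-1}$, so that the inner induction is anchored at $\ell(\qubit{\phi})\leq t$ by $F^{-1}=g^{-1}$. Once (a)--(c) are settled for one round and for the base case, the inner induction closes and, together with Lemma~\ref{preserving-property}, certifies the constructed $F^{-1}$ as the genuine inverse.
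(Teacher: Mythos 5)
Your treatment of Schemes I, II, III, and V coincides with the paper's (same explicit inverses, same identities), and your observation that Lemma~\ref{preserving-property} reduces the problem to a one-sided identity is a harmless streamlining. The issue is Scheme~IV, which you correctly identify as the real work but then do not actually carry out, and the one concrete claim you do make there is wrong as stated: the time-reversal of $F\equiv CFQRec_t[r_0,d,g,h\mid\PP_{|r_0|},\FF_{|r_0|}]$ is \emph{not} itself a single Scheme~IV instance with ``a position-adjusted $h^{-1}$'' in the $h$-slot. In the template of Scheme~IV the $h$-slot function is applied \emph{last} in each round (on the way back out of the recursion, acting on $\qubit{u}\qubit{v}$ after $f_u$ has returned), whereas inverting $F$ requires $h^{-1}$ to be applied \emph{first} in each round, before the inverse descent. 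Placing $h^{-1}$ in the $h$-slot of a plain Scheme~IV instance fires it at the wrong point, which is exactly your own concern (a)/(b) --- but you leave that concern unresolved rather than showing how to get around it.

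The paper's resolution is a specific packaging you would need to reproduce: set $F^{-1}\equiv G\circ h^{-1}$, where $G\equiv CFQRec_t[r_0,d^{-1},g^{-1},I\mid\PP_{|r_0|},\FF^{-1}_{|r_0|}]$ carries the \emph{identity} in the $h$-slot and $\FF^{-1}_{|r_0|}=\{f_u^{-1}\}_u$ with $f_u^{-1}\in\{I,F^{-1}\}$. Because the recursive ground functions of $G$ are $F^{-1}=G\circ h^{-1}$ themselves, the pre-composed $h^{-1}$ is automatically re-inserted at every inner level of the recursion, which is how the per-round $h$'s get undone in the correct (reversed) order; the base clause is anchored by $F^{-1}\equiv g^{-1}$ when $x=\lambda$, $\ell(\qubit{\phi})\leq t$, or $|x|>|r_0|k$. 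You do get one key point right that the paper emphasizes --- $\PP_{|r_0|}$ is kept and \emph{not} replaced by $\PP_{|r_0|}^{-1}$ --- but without the $G\circ h^{-1}$ factorization (or an equivalent device) your inner induction cannot close, so the Scheme~IV case of your proof is incomplete.
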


\begin{proof}
For Scheme I, it is obvious that $PHASE_{\theta}^{-1}=PHASE_{-\theta}$, $ROT_{\theta}^{-1} = ROT_{-\theta}$, $NOT^{-1}=NOT$, and $SWAP^{-1}=SWAP$. For Scheme II, $Compo[g,h]^{-1}$ equals $Compo[h^{-1},g^{-1}]$.
For Scheme III, it suffices to set $Branch[g,h]^{-1}$ to be $Branch[g^{-1},h^{-1}]$.
For Scheme IV, let $F\equiv CFQRec_t[r_0,d,g,h | \PP_{|r_0|}, \FF_{|r_0|}]$.
If $x=\lambda$, $\ell(\qubit{\phi})\leq t$, or $|x|>|r_0|k$ with $k=\ilog(\ell(\qubit{\phi}))$, then we set $F^{-1}\equiv g^{-1}$. In this case, we obtain $F^{-1}\circ F(\qubit{xr_0}\qubit{\phi}) = F^{-1}(g(\qubit{xr_0}\qubit{\phi})) = g^{-1}\circ g(\qubit{xr_0}\qubit{\phi})=\qubit{xr_0}\qubit{\phi}$. Assuming otherwise, we define $F^{-1}$ to be the consecutive applications of two quantum functions: $h^{-1}$ and $G\equiv CFQRec_t[r_0,d^{-1},g^{-1},I |\PP_{|r_0|}, \FF^{-1}_{|r_0|}]$, where
$\FF^{-1}_{|r_0|} = \{f_u^{-1}\}_{u\in\{0,1\}^{|r_0|}}$.
It is important to note that we do not use $\PP_{|r_0|}^{-1} = \{p_u^{-1}\}_{u\in\{0,1\}^{|r_0|}}$ in place of $\PP_{|r_0|}$ in the definition of $G$.
Let us show that $F^{-1}\circ F(\qubit{xr_0}\qubit{\phi}) = \qubit{xr_0}\qubit{\phi}$.
Assume that $F(\qubit{xr_0}\qubit{\phi})$ has the form
$\sum_{u:|u|=|r_0|} \sum_{v:|v|= \ell(\measure{u}{xr_0})} ( h(  \qubit{u} \qubit{v} ) \otimes p_u^{-1} ( \measure{v}{\zeta_{u,p_u,\phi}^{(x'r_0)}} ) )$.
We first apply $h^{-1}$ to the first $|xr_0|$ qubits and then obtain
$\sum_{u:|u|=|r_0|} \sum_{v:|v|= \ell(\measure{u}{xr_0})}   \qubit{u} \qubit{v}  \otimes p_u^{-1} ( \measure{v}{\zeta_{u,p_u,\phi}^{(x'r_0)}} ) )$. To this quantum state, we further apply $G$.
By an application of $p_u$ to $p_u^{-1}(\measure{v}{\zeta^{(x'r_0)}_{u,p_u,\phi}})$, we obtain
$\sum_{u:|u|=|r_0|} \sum_{v:|v|= \ell(\measure{u}{xr_0})} \qubit{u} \qubit{\zeta_{u,p_u,\phi}^{(x'r_0)}}$.
Notice that $\qubit{\zeta_{u,p_u,\phi}^{(x'r_0)}}$ equals $\sum_{s:|s|=m_u(\qubit{\phi})} ( f_u(\measure{u}{x'r_0} \otimes \qubit{s})\otimes \measure{s}{\psi_{p_u,\phi}})$.
An application of $f_u^{-1}$ to $f_u(\measure{u}{x'r_0}\otimes \qubit{s})$ leads to $\sum_{u:|u|=|r_0|} \measure{u}{x'r_0}\otimes \qubit{\psi_{p_u,\phi}}$. Since $\qubit{x'r_0}= d(\qubit{xr_0})$, by applying $d^{-1}$ and $p_u^{-1}$ (which come from the definition of $G$), we finally obtain $\qubit{xr_0}\otimes \qubit{\phi}$.
For Scheme V, it suffices to set $LCompo[g]^{-1}$ to be $LCompo[g^{-1}]$.

The above argument also proves the second part of the lemma.
\end{proof}

\subsection{Section-Wise Handling of Binary Encoding}

We have discussed in Section \ref{sec:different-object} the binary encodings of various objects usable as part of inputs.
We further explore the characteristics of quantum functions that can handle  these binary encodings.


Given $k$ encoded strings  $\widetilde{x_1},\widetilde{x_2},\ldots,\widetilde{x_k}$, we merge them into a single string of the form $\widetilde{x_1} \widetilde{x_2} \cdots \widetilde{x_k}\hat{2}$, where $\hat{2}$ serves as an endmarker.
Let $k\geq2$ and let $\GG=\{g_i\}_{i\in[k]}$ denote a series of $k$  quantum functions. We then consider a simultaneous application of all quantum functions in $\GG$, $MultiApp_k[\GG]$, given as:
\begin{eqnarray*}
\lefteqn{MultiApp_k[\hat{2},\GG] (\qubit{\widetilde{x_1} \widetilde{x_2} \cdots \widetilde{x_k}\hat{2} } \qubit{\phi})} \hs{40} \\
&=& \hs{2} g_1(\qubit{\widetilde{x_1}}) \otimes g_2(\qubit{\widetilde{x_2}}) \otimes \cdots \otimes  g_k(\qubit{\widetilde{x_k} \hat{2}}) \otimes \qubit{\phi}.
\end{eqnarray*}

\begin{lemma}\label{MultiApp-case}
For any $k\geq2$ and any series $\GG=\{g_i\}_{i\in[k]}$ of $k$ quantum functions, the quantum function $MultiApp_k[\hat{2},\GG]$ is definable from $\GG$ and the code skipping scheme.
\end{lemma}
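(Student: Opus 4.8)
The plan is to realize $MultiApp_k[\hat{2},\GG]$ as a nested stack of $k$ applications of the code skipping scheme $CodeSKIP_{+}$, peeling one encoded block off the front at each level. The crucial observation is that every encoded block has the form $\widetilde{x_i}=\tilde{x_i}^{(-)}\hat{\dashv}$, ending in the separator code $\hat{\dashv}=011$, and — since the symbol codes $\hat{0}=000,\hat{1}=001,\ldots$ are all distinct from $\hat{\dashv}$ — the prefix $\tilde{x_i}^{(-)}$ is section-wise free of $\hat{\dashv}$. Hence applying $CodeSKIP_{+}[\hat{\dashv},g_i,h]$ to $\qubit{\widetilde{x_i}\widetilde{x_{i+1}}\cdots\widetilde{x_k}\hat{2}}\qubit{\phi}$ makes $NON_{\hat{\dashv}}(\cdot)$ recognize exactly the single block $\tilde{x_i}^{(-)}$, producing $g_i(\qubit{\widetilde{x_i}})\otimes h(\qubit{\widetilde{x_{i+1}}\cdots\widetilde{x_k}\hat{2}}\qubit{\phi})$, with $g_i$ receiving the block together with its trailing separator. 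This is precisely why $CodeSKIP_{+}$, rather than $CodeSKIP_{-}$, is the correct primitive, since the statement applies $g_i$ to $\widetilde{x_i}$ (separator included).

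Concretely, I would set
\[
MultiApp_k[\hat{2},\GG] \equiv CodeSKIP_{+}[\hat{\dashv},g_1, CodeSKIP_{+}[\hat{\dashv},g_2, \cdots CodeSKIP_{+}[\hat{\dashv},g_{k-1}, CodeSKIP_{+}[\hat{2},g_k,I]]\cdots]],
\]
so the outer $k-1$ levels use the separator $\hat{\dashv}$ while the innermost level uses the endmarker $\hat{2}$. Switching to $\hat{2}$ at the last level is exactly what forces $g_k$ to act on $\widetilde{x_k}\hat{2}$ (endmarker included), matching the asymmetric role of the final factor: because $\widetilde{x_k}$ is the encoding of a binary string it contains no $\hat{2}=111$ block, so the endmarker is the first occurrence of $\hat{2}$ and $\widetilde{x_k}\in NON_{\hat{2}}(\cdot)$, whence $CodeSKIP_{+}[\hat{2},g_k,I](\qubit{\widetilde{x_k}\hat{2}}\qubit{\phi})=g_k(\qubit{\widetilde{x_k}\hat{2}})\otimes\qubit{\phi}$.

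I would then verify correctness by induction on $k$. In the base case $k=2$, unfolding the two nested applications on $\qubit{\widetilde{x_1}\widetilde{x_2}\hat{2}}\qubit{\phi}$ yields $g_1(\qubit{\widetilde{x_1}})\otimes g_2(\qubit{\widetilde{x_2}\hat{2}})\otimes\qubit{\phi}$ straight from the definition of $CodeSKIP_{+}$. For the inductive step, the outermost $CodeSKIP_{+}[\hat{\dashv},g_1,\cdot]$ strips $\widetilde{x_1}$ and leaves its $h$-component acting on the untouched remainder $\qubit{\widetilde{x_2}\cdots\widetilde{x_k}\hat{2}}\qubit{\phi}$; since $g_1$ and $h$ act on disjoint registers joined by $\otimes$, the transform applied by $g_1$ never interferes with the recursion, and the inner expression is, by the induction hypothesis, exactly $MultiApp_{k-1}[\hat{2},\{g_2,\ldots,g_k\}]$. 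Assembling the two factors gives the claimed identity, and no scheme beyond iterated $CodeSKIP_{+}$ is used, so definability from $\GG$ and the code skipping scheme follows.

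The step I expect to demand the most care is the separator bookkeeping: at each level I must confirm that $NON_{\hat{\dashv}}(\cdot)$ isolates precisely one block — neither a shorter prefix nor one extending past an internal separator — which relies on the code-disjointness above together with the length alignment $|\widetilde{x_i}|\equiv 0 \pmod{3}$ built into the three-bit encoding. A secondary point is the degenerate block $x_i=\lambda$, for which $\widetilde{x_i}=\hat{\dashv}$ and $\tilde{x_i}^{(-)}$ is empty; here the nonempty-prefix requirement in the definition of $NON_{\hat{\dashv}}$ must be read carefully, and I would either restrict attention to nonempty $x_i$ or note separately that a leading $\hat{\dashv}$ is itself recognized as the block so that $g_i$ is applied to $\qubit{\hat{\dashv}}=\qubit{\widetilde{\lambda}}$ as intended.
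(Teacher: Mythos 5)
Your construction is exactly the paper's: the paper defines $G_{[k,k]}\equiv CodeSKIP_{+}[\hat{2},g_k,I]$ and then $G_{[i,k]}\equiv CodeSKIP_{+}[\hat{\dashv},g_i,G_{[i+1,k]}]$, which unfolds to precisely your nested expression with $\hat{\dashv}$ at the outer $k-1$ levels and $\hat{2}$ at the innermost, verified by the same induction on the blocks. Your added remarks on code-disjointness and the degenerate $x_i=\lambda$ case are sound refinements of points the paper leaves implicit, but the argument is essentially identical.
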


\begin{proof}
Let  $\GG=\{g_i\}_{i\in[k]}$ denote any series of $k$ quantum functions, not necessarily in $EQS$. For brevity, we set $\bar{r}_0=\hat{\dashv}$ and $r_0 = \hat{2}$. The quantum function $MultiApp_k[\hat{2},\GG]$ is constructed from $\GG$ by applying $CodeSKIP_{+}[\cdot]$
in the following inductive way.
Initially, we set  $G_{[k,k]}\equiv CodeSKIP_{+}[r_0,g_k,I]$.
Let $X_k = \widetilde{x_k}$ ($= \widehat{x_k}\hat{\dashv}$).
Since $NON_{r_0}(\qubit{X_k}\qubit{r_0}\qubit{\phi}) = \{X_k\}$, we obtain $G_{[k,k]}(\qubit{X_k}\qubit{r_0}\qubit{\phi}) = g_k(\qubit{X_k}\qubit{r_0}) \otimes \qubit{\phi}$.
Let $i$ be any index in $[k-1]$ and set $X_{i+1} = \widetilde{x_{i+1}}\widetilde{x_{i+2}}\cdots \widetilde{x_k}$.
We assume by induction hypothesis that  $G_{[i+1,k]}(\qubit{X_{i+1}}\qubit{r_0}\qubit{\phi}) = g_{i+1}(\qubit{\widetilde{x_{i+1}}})\otimes g_{i+2}(\qubit{\widetilde{x_{i+2}}}) \otimes \cdots \otimes g_k(\qubit{\widetilde{x_k}}\qubit{r_0})\otimes \qubit{\phi}$.
We then define $G_{[i,k]}\equiv CodeSKIP_{+}[\bar{r}_0,g_i,G_{[i+1,k]}]$.
Since $NON_{\bar{r}_0}(\qubit{X_{i}}\qubit{r_0}\qubit{\phi}) = \{\widehat{x_i}\}$, it follows that $G_{[i,k]}(\qubit{X_{i}}\qubit{r_0}\qubit{\phi}) = g_i(\qubit{\widehat{x_i}\bar{r}_0}) \otimes G_{[i+1,k]}(\qubit{X_{i+1}}\qubit{r_0}\qubit{\phi}) = g_i(\qubit{\widetilde{x_i}}) \otimes g_{i+1}(\qubit{\widetilde{x_{i+1}}})\otimes \cdots \otimes g_k(\qubit{\widetilde{x_k}}\qubit{r_0})\otimes \qubit{\phi}$.

The desired quantum function $MultiApp_k[r_0,\GG]$ therefore equals $G_{[1,k]}$, which is $g_1(\qubit{\widetilde{x_1}}) \otimes g_2(\qubit{\widetilde{x_2}}) \otimes \cdots \otimes  g_k(\qubit{\widetilde{x_k} \hat{2}}) \otimes \qubit{\phi}$.
\end{proof}

By Lemma \ref{MultiApp-case}, we can construct $MultiApp_k[\hat{2},\GG]$ from $\GG = \{g_i\}_{i\in [k]}$ and $CodeSKIP_{+}[r_0,g,h]$.
Since $CodeSKIP_{+}[r_0,g,h]$ stays outside of $EQS$, $MultiApp_k[\hat{2},\GG]$ in general does not belong to  $EQS$.
However, as the next lemma ensures, we can use $MultiApp_k[\hat{2},\GG]$ as if it is an $EQS$ function under certain circumstances.

To describe our result, let us recall the 3-bit encoding of $\hat{S}$, which indicates ``separator''.

\begin{lemma}\label{code-skip-application}
Let $k\in\nat^{+}$ and let $\GG=\{g_i\}_{i\in[k]}$ be any series of $k$ quantum functions in $EQS$.
There exists a quantum function $F$ in $EQS$ such that $F(\qubit{\hat{S} \widetilde{x}_1 \widetilde{x}_2\cdots \widetilde{x}_k \hat{2}}\qubit{\phi}) =  MultiApp_{k}[\hat{2},\GG](\qubit{\hat{S} \widetilde{x}_1 \widetilde{x}_2 \cdots \widetilde{x}_k \hat{2}}\qubit{\phi})$ as long as $|\hat{S} \widetilde{x}_1\widetilde{x}_2\cdots \widetilde{x}_k \hat{2}|\leq |\hat{2}|\log{\ell(\qubit{\phi})}$. However, the equality may not hold if $|\hat{S} \widetilde{x}_1\widetilde{x}_2\cdots \widetilde{x}_k \hat{2}| > |\hat{2}|\log{\ell(\qubit{\phi})}$.
\end{lemma}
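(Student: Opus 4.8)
The plan is to realize $MultiApp_k[\hat{2},\GG]$ inside $EQS$ by replaying the inductive construction of Lemma~\ref{MultiApp-case} and replacing every use of the forbidden code skipping scheme by its $EQS$-realization supplied by Proposition~\ref{limited-recursion}. Recall that $CodeSKIP_{+}[\cdot]$ is not itself an $EQS$ operation; the only handle we have is the corollary to Proposition~\ref{limited-recursion}, which realizes $CodeSKIP_{+}[r_0,g,I]$ within $EQS_0+IV\subseteq EQS$ exactly when the code part obeys $|x|\le|r_0|\log\ell(\qubit{\phi})$. The hypothesis $|\hat{S}\,\widetilde{x}_1\cdots\widetilde{x}_k\hat{2}|\le|\hat{2}|\log\ell(\qubit{\phi})$ is precisely this short-code condition (with $|\hat{2}|=3$), so it is what makes every stage of the realization legitimate; conversely, once the code exceeds this threshold Proposition~\ref{limited-recursion} offers no guarantee, which is why the claimed equality is only promised under the stated bound.

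First I would recall from Lemma~\ref{MultiApp-case} that $MultiApp_k[\hat{2},\GG]=G_{[1,k]}$ is the finite nest $G_{[k,k]}=CodeSKIP_{+}[\hat{2},g_k,I]$ and $G_{[i,k]}=CodeSKIP_{+}[\hat{\dashv},g_i,G_{[i+1,k]}]$ for $1\le i\le k-1$, a composite of exactly $k$ code skips (a constant number, since $k$ is fixed). Because $g_i$ touches only the leading section $\qubit{\widetilde{x}_i}$ while $G_{[i+1,k]}$ touches only the disjoint tail, I would factor each nested step as a ``process the tail first, then the head'' composition $CodeSKIP_{+}[\hat{\dashv},g_i,I]\circ CodeSKIP_{+}[\hat{\dashv},I,G_{[i+1,k]}]$; processing the tail before the head guarantees that when $g_i$ is finally applied its section together with its trailing $\hat{\dashv}$ is still intact, so the separator that the code skip must locate is undisturbed (this uses that each $g_j$ is length-preserving by Lemma~\ref{preserving-property}). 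Unfolding the recursion this way expresses $G_{[1,k]}$ as a fixed-length composition of elementary steps, each of which either applies a single $g_i$ to one section or merely advances past one already-processed section. I would then realize every such step inside $EQS$ via the corollary to Proposition~\ref{limited-recursion} (for the process-one-section steps) and via the analogous short-code realization of the advance steps, set $F$ to be their composition (an $EQS$ function by Scheme~II), and conclude by an induction on $i$ mirroring Lemma~\ref{MultiApp-case} that $F$ outputs $g_1(\qubit{\widetilde{x}_1})\otimes\cdots\otimes g_k(\qubit{\widetilde{x}_k\hat{2}})\otimes\qubit{\phi}$, the leading block $\hat{S}$ being reproduced exactly as $MultiApp_k[\hat{2},\GG]$ treats it on this input and furnishing a non-separator anchor for the code region.

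The main obstacle, I expect, is realizing the advance steps $CodeSKIP_{+}[\hat{\dashv},I,H]$ for a nontrivial tail function $H\in EQS$, since the corollary to Proposition~\ref{limited-recursion} directly covers only the $h=I$ case, and the variable lengths of the sections forbid substituting the fixed-length $Skip_{k}[\cdot]$ of Lemma~\ref{Skip-function}. Overcoming this requires either a small strengthening of Proposition~\ref{limited-recursion} permitting a tail function applied after the code is consumed, or a direct Scheme~IV construction whose ground functions $d,h$ use $CodeSKIP_{\pm}[\cdot]$ (which Definition~\ref{def-EQS} explicitly allows) to locate the first $\hat{\dashv}$ and defer to $H$ on the remainder. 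In either route the delicate bookkeeping is to confirm that the code part handed to each successive step never outgrows $|r_0|\log\ell(\qubit{\phi})$ and that the navigation never disturbs the supplemental register $\qubit{\phi}$; both follow from the global short-code hypothesis together with the query-independence and length-preservation of the pieces involved.
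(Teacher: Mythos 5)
Your proposal stalls exactly where you predict it might: decomposing $MultiApp_k[\hat{2},\GG]$ into $k$ separate left-to-right steps forces you to realize $CodeSKIP_{+}[\hat{\dashv},I,H]$ for a nontrivial tail function $H$, and the only realization tool you invoke --- the corollary to Proposition~\ref{limited-recursion} --- covers only $CodeSKIP_{+}[r_0,g,I]$, i.e., the query-independent case with identity on the tail. Neither of your two proposed fixes is carried out, so the argument as written does not close. The fallback you mention in one line (``a direct Scheme~IV construction whose ground functions use $CodeSKIP_{\pm}[\cdot]$'') is in fact the paper's entire proof, and it makes your decomposition unnecessary: since Definition~\ref{def-EQS} explicitly allows the ground function $h$ of Scheme~IV to be built with $CodeSKIP_{\pm}[\cdot]$, the paper takes $h\equiv Branch_3[\{h_u\}_{u\in\{0,1\}^3}]$ with $h_{\hat{S}}\equiv MultiApp_{k}[\hat{2},\GG]$ (the whole nest of Lemma~\ref{MultiApp-case}, unbroken) and $h_u\equiv I$ otherwise, and sets $F\equiv CFQRec_1[\hat{2},I,I,h|\PP_{3},\FF_{3}]$ with all $p_u=I$ and $f_u=F$. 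The recursion then does no work of its own --- the induction shows $F(\qubit{X_i\hat{2}}\qubit{\phi})=\qubit{X_i\hat{2}}\qubit{\phi}$ at every inner level --- and its only purposes are (a) to impose the guard $|x|\le|r_0|\,\ilog(\ell(\qubit{\phi}))$ of clause (i), which is precisely the length hypothesis of the lemma and the reason the equality may fail beyond that bound, and (b) to fire $h_{\hat{S}}$ exactly once, at the outermost return, where the prefix $\hat{S}$ appears.

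This exposes a second, related misreading: you treat $\hat{S}$ merely as ``a non-separator anchor for the code region,'' but in the paper it is the control symbol that distinguishes the top level of the recursion (where $MultiApp_k[\hat{2},\GG]$ must be applied once to the fully reassembled code $\qubit{X_1\hat{2}}$) from all inner levels (where $h$ must act as the identity so the code is passed back up intact). Without such a trigger, a single Scheme~IV application with a $CodeSKIP$-based $h$ would re-apply the section-wise transformation at every recursive return, which is not what $MultiApp_k[\hat{2},\GG]$ computes. To salvage your step-by-step route you would have to actually prove the strengthening of Proposition~\ref{limited-recursion} you allude to; the paper shows it is simpler to bypass the decomposition altogether.
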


\begin{proof}
Let $k\in\nat^{+}$ and let $\GG= \{g_i\}_{i\in[k]}$ be given as in the premise of the lemma.
Since $|\hat{2}|=3$, we set $h$ to be $Branch_3[\{h_u\}_{u\in\{0,1\}^3}]$ with $h_{\hat{S}} \equiv MultiApp_{k}[\hat{2},\GG]$ and $h_u \equiv I$ for any $u\neq \hat{S}$.
The desired quantum function $F$ is defined to be $CFQRec_1[\hat{2},I,I,h|\PP_{3},\FF_{3}]$, where  $\PP_3=\{p_u\}_{u\in\{0,1\}^3}$ with $p_u=I$ for all $u\in\{0,1\}^3$ and $\FF_3=\{f_u\}_{u\in\{0,1\}^3}$ with $f_u = F$ for all $u\in\{0,1\}^3$.
Because of $|\hat{S} \widetilde{x}_1\widetilde{x}_2\cdots \widetilde{x}_k \hat{2}| \leq |\hat{2}|\log{\ell(\qubit{\phi})}$, in  the following recursive process of computing $F$, it suffices to start with $F(\qubit{\hat{2}}\qubit{\phi'})$ with $\ell(\qubit{\phi'})\geq 1$.

For each index $i\in[k]$, we abbreviate $\widetilde{x_i}\widetilde{x_{i+1}}\cdots \widetilde{x_k}$ as $X_i$.
In what follows, we fix $i\in[k]$ arbitrarily and assume that $F(\qubit{X_i\hat{2}} \qubit{\phi}) = \qubit{X_i\hat{2}} \qubit{\phi}$.
Note that $\qubit{\zeta_{u,I,\phi}} = \sum_{s:|s|=LH(\qubit{\phi})} (F(\measure{u}{X_{i-1}\hat{2}}\qubit{s}) \otimes \measure{s}{\phi})$.
If $u=\widetilde{x_{i-1}}$, then
$\qubit{\zeta_{u,I,\phi}} = \sum_{s:|s|=LH(\qubit{\phi})} ( F(\qubit{X_i\hat{2}}\qubit{s}) \otimes \measure{s}{\phi}) = \sum_{s:|s|=LH(\qubit{\phi})} (\qubit{X_i\hat{2}}\qubit{s} \otimes \measure{s}{\phi}) = \qubit{X_i\hat{2}}\qubit{\phi}$.
From this, it follows that $F(\qubit{X_i\hat{2}} \qubit{\phi}) = h(\qubit{\widetilde{x_{i-1}}}\qubit{X_i\hat{2}}) \otimes \measure{X_i\hat{2}}{\zeta_{\widetilde{x_{i-1}},I,\phi}} = \qubit{X_{i-1}\hat{2}}\qubit{\phi}$.
Finally, we consider the case of $\qubit{\hat{S}}\qubit{X_1\hat{2}}\qubit{\phi}$. We then obtain $F(\qubit{\hat{S}}\qubit{X_1\hat{2}}\qubit{\phi}) = h(\qubit{\hat{S}}\qubit{X_1\hat{2}}) \otimes \measure{X_1\hat{2}}{\zeta_{\hat{S},I,\phi}}$.
Since $\qubit{\zeta_{\hat{S},I,\phi}} =
\qubit{X_1\hat{2}}\otimes \qubit{\phi}$, it follows that $F(\qubit{\hat{S}}\qubit{X_1\hat{2}}\qubit{\phi}) =
h(\qubit{\hat{S}}\qubit{X_1\hat{2}}) \otimes \measure{X_1\hat{2}}{X_1\hat{2}} \qubit{\phi} =
h(\qubit{\hat{S}}\qubit{X_1\hat{2}})\otimes \qubit{\phi} = Branch_3[\{h_u\}_{u\in\{0,1\}^3}](\qubit{\hat{S}}\qubit{X_1\hat{2}})\otimes \qubit{\phi} = MultiApp_k[\hat{2},\GG](\qubit{\hat{S}}\qubit{X_1\hat{2}}) \otimes \qubit{\phi}$.
\end{proof}


In a certain limited case, Lemma \ref{code-skip-application} makes possible  a repeated application of any quantum function within $EQS_0+IV$ (as well as $EQS$). Let us recall the notation $\hat{T}$, which indicates ``time''.

\begin{proposition}
Let $r_0\in\{0,1\}^+$ and let $g$ be any quantum function definable within $EQS_0+IV$ (resp., $EQS$). There exists a quantum function $F$ definable within $EQS_0+IV$ (resp., $EQS$) such that, for any $\qubit{\phi}\in\HH_{\infty}$ and $x\in\{0,1\}^*$, if $\ell(\qubit{\phi})\leq1$, then $F(\qubit{\hat{T}^{\:m(\qubit{\phi})} \hat{S}} \qubit{x\hat{2}}\qubit{\phi}) = \qubit{\hat{T}^{\:m(\qubit{\phi})} \hat{S}} \qubit{x\hat{2}}\qubit{\phi}$, and otherwise,
$F(\qubit{\hat{T}^{\:m(\qubit{\phi})} \hat{S}} \qubit{x\hat{2}}\qubit{\phi}) = \qubit{\hat{T}^{\:m(\qubit{\phi})} \hat{S}} \otimes  g^{m(\qubit{\phi})}(\qubit{x\hat{2}}) \otimes \qubit{\phi}$, where $m(\qubit{\phi}) = \ilog(\ell(\qubit{\phi}))$.
\end{proposition}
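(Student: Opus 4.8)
The plan is to realize the $m(\qubit{\phi})$-fold iteration of $g$ on the payload $\qubit{x\hat{2}}$ by a single application of the fast quantum recursion (Scheme IV), using the unary clock $\qubit{\hat{T}^{\:m(\qubit{\phi})}\hat{S}}$ as an explicit iteration counter and the trailing register $\qubit{\phi}$ as the resource whose length drives the recursion depth. Since $m(\qubit{\phi}) = \ilog(\ell(\qubit{\phi}))$ is exactly the number of halving steps that the fast recursion performs on $\qubit{\phi}$ before it halts, the number of $\hat{T}$-blocks in the clock can be kept synchronized with the natural recursion depth: each recursive level is made to consume one $\hat{T}$-block and to apply one copy of $g$ to the payload, so that after the recursion unwinds $g$ has been applied exactly $m(\qubit{\phi})$ times. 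In the language of Section \ref{sec:LCompo-EQS} this amounts to realizing $LCompo[g]$ for a suitable one-step operator; for the $EQS_0+IV$ version (where Scheme V is unavailable) the reduction of query-independent $LCompo$ to Scheme IV guaranteed by Proposition \ref{limited-recursion} lets me carry out the same argument, and for the $EQS$ version Scheme V is available directly.

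First I would package one iteration as a quantum function $G$ that, on an input of the form $\qubit{\hat{T}^{\:m}\hat{S}}\qubit{x\hat{2}}\qubit{\phi}$, leaves the clock $\qubit{\hat{T}^{\:m}\hat{S}}$ and the register $\qubit{\phi}$ untouched and replaces the payload $\qubit{x\hat{2}}$ by $g(\qubit{x\hat{2}})$. Here $g$ is used through its code-controlled version $g^{*}$ (the $f\mapsto f^{*}$ convention of Section \ref{sec:fast-scheme-IV}), so that $g^{*}(\qubit{x\hat{2}}\qubit{\phi}) = g(\qubit{x\hat{2}})\otimes\qubit{\phi}$ and the register is preserved automatically. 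To make $G$ act past the clock and only on the payload section, I would use the code-skipping machinery, skipping the leading $\hat{T}$-blocks up to the separator $\hat{S}$ and then applying $g^{*}$ to $\qubit{x\hat{2}}$, in the style of Lemmas \ref{MultiApp-case} and \ref{code-skip-application} (with $\hat{S}$ playing the role of the distinguished section marker and all other sections mapped by the identity).

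Next I would set $F \equiv CFQRec_{1}[\hat{2}, I, g, h \mid \PP_{3}, \FF_{3}]$ with $p_u \equiv I$ and $f_u \equiv F$, choosing the ground function $h$ so that peeling one block of the clock triggers one application of the step operator built above, the separator $\hat{S}$ signalling that the counter is exhausted. The verification is an induction on $m(\qubit{\phi})$: the base case $\ell(\qubit{\phi})\leq 1$ is handled by condition (i) of Scheme IV, which returns the input unchanged; in the inductive step the halving of $\qubit{\phi}$ decrements $\ilog(\ell(\qubit{\phi}))$ by one while the recursion removes one $\hat{T}$-block, so that after the unwinding the clock $\qubit{\hat{T}^{\:m(\qubit{\phi})}\hat{S}}$ is regenerated intact, $g$ has been applied $m(\qubit{\phi})$ times to $\qubit{x\hat{2}}$, and $\qubit{\phi}$ is restored by the maps $p_u^{-1}$. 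The internal use of the code-skipping scheme is legitimized by Proposition \ref{limited-recursion} (and its corollary) together with Lemma \ref{code-skip-application}, which guarantee that these schemes behave like genuine $EQS_0+IV$ (resp. $EQS$) functions on inputs whose leading segment is short relative to $\ell(\qubit{\phi})$.

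The hard part will be the bookkeeping that simultaneously keeps the payload $\qubit{x\hat{2}}$ intact and correctly located throughout the recursion and forces exactly $m(\qubit{\phi})$ applications of $g$. Because Scheme IV halves the trailing register at each level, the payload must be kept out of the halved part, which pushes the combined clock-plus-payload segment $\hat{T}^{\:m(\qubit{\phi})}\hat{S}x\hat{2}$ right up to the boundary of the length-admissibility condition $|{\cdot}|\leq |r_0|\ilog(\ell(\qubit{\phi}))$ under which the recursion (and the code-skip locating the payload) is guaranteed to fire. Checking that this boundary is respected at \emph{every} level of the unwinding — that is, that the clock consumption stays in lockstep with the register halving, so the recursion neither stops early nor overshoots and the payload section is never truncated — is where the argument must be carried out with care, and it is the crux on which correctness of the whole construction rests.
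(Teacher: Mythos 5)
Your construction is essentially the paper's own proof: a single application of Scheme IV over the separator $\hat{2}$ with $p_u\equiv I$ and $f_u\equiv F$, whose ground function $h$ uses $CodeSKIP_{+}[\hat{S},I,g]$ keyed on the block $\hat{T}$ (via $Branch_3$) so that each recursion level --- each halving of $\qubit{\phi}$ --- consumes one clock block and applies $g$ once to the payload past $\hat{S}$, verified by the same induction on $\ilog(\ell(\qubit{\phi}))$. The only slip is that the base-case slot of your $CFQRec_{1}[\hat{2},I,g,h\mid \PP_3,\FF_3]$ should hold $I$ rather than $g$ (the paper writes $CFQRec_1[\hat{2},I,I,h\mid\PP_3,\FF_3]$), which is exactly what your own remark that condition (i) ``returns the input unchanged'' already requires.
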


\begin{proof}
Let $F$ denote the desired quantum function. Remember that $|\hat{2}|=|\hat{T}|=|\hat{S}|=3$. We set $\PP_3=\{p_u\}_{u\in\{0,1\}^3}$ with $p_u=I$ for all $u\in\{0,1\}^3$ and set $\GG=\{g_u\}_{u\in\{0,1\}^3}$ with
$g_{\hat{T}}=CodeSKIP_{+}[\hat{S},I,g]$, and $g_u=I$ for all other indices $u\in \{0,1\}^3-\{\hat{T}\}$. We then define $h$ to be $Branch_3[\GG]$. Finally, $F$ is defined as $F\equiv CFQRec_1[\hat{2},I,I,h|\PP_3,\FF_3]$, where  $\FF_3=\{f_u\}_{u\in\{0,1\}^3}$ with $f_u=F$ for all $u\in\{0,1\}^3$.

For any index $i\in[0,m(\qubit{\phi})]_{\integer}$, we set   $z_i=\hat{T}^{m(\qubit{\phi})-i}$.
If $\ell(\qubit{\phi})\leq 1$, then we obtain $F(\qubit{z_i \hat{S}}\qubit{x\hat{2}}\qubit{\phi}) = \qubit{z_i \hat{S}}\qubit{x\hat{2}}\qubit{\phi}$ for any $i\in[0,m(\qubit{\phi})]_{\integer}$.
Next, we assume that $\ell(\qubit{\phi})\geq 2$. Let $i\in[0,m(\qubit{\phi})]_{\integer}$ and set $y=z_i\hat{S}x$.
By induction hypothesis, $F(\qubit{z_{i+1}  \hat{S}}\qubit{x\hat{2}}\qubit{s} = \qubit{z_{i+1} \hat{S}} \otimes g^{m(\qubit{\phi})-i-1}(\qubit{x\hat{2}})\otimes \qubit{\phi}$ holds.
Let us consider the value $F(\qubit{z_i \hat{S}}\qubit{x\hat{2}}\qubit{\phi}$.
Note that $h(\qubit{z_i\hat{S}} \otimes g^{m(\qubit{\phi})-i}(\qubit{x\hat{2}}) ) = CodeSKIP_{+}[\hat{S},I,g](\qubit{z_i\hat{S}} \otimes g^{m(\qubit{\phi})-i}(\qubit{x\hat{2}})) = \qubit{z_i\hat{S}} \otimes g^{m(\qubit{\phi})-i-1}(\qubit{x\hat{2}})$.
It thus follows that $F(\qubit{z_i \hat{S}}\qubit{x\hat{2}}\qubit{\phi} = \sum_{u:|u|=|\hat{2}|} \sum_{v:|v|=\ell(\measure{u}{y\hat{2}})} (h(\qubit{u}\otimes \qubit{v}) \otimes \measure{v}{\zeta_{u,I,\phi}})$,
where $\qubit{\zeta_{u,I,\phi}}$ is calculated as follows. If $u\neq \hat{T}$, then $\qubit{\zeta_{u,I,\phi}} = \bfzero$ holds.
Otherwise,
$\qubit{\zeta_{u,I,\phi}}$ equals $\sum_{s:|s|=LH(\qubit{\phi})} (F(\measure{u}{z_i\hat{S}} \qubit{x\hat{2}}\otimes \qubit{s}) \otimes \measure{s}{\phi}) = \sum_{s:|s|=LH(\qubit{\phi})} (F(\qubit{z_{i+1}\hat{S}} \qubit{x\hat{2}}\otimes \qubit{s}) \otimes \measure{s}{\phi}) =
\sum_{s:|s|=LH(\qubit{\phi})} h(\qubit{z_{i+1}\hat{S}} \otimes g^{m(\qubit{\phi})-i-1}( \qubit{x\hat{2}} ) \otimes \qubit{s}  \measure{s}{\phi}) = \qubit{z_{i+1}\hat{S}} \otimes g^{m(\qubit{\phi})-i-1}(\qubit{x\hat{2}})\otimes \qubit{\phi}$.
Therefore, we obtain $F(\qubit{z_i\hat{S}} \qubit{x\hat{2}}\qubit{\phi}) = \qubit{z_i\hat{S}} \otimes g^{m(\qubit{\phi})-i} (\qubit{x\hat{2}})\otimes \qubit{\phi}$.
In particular, when $i=0$, we conclude that $F(\qubit{\hat{T}^{\:m(\qubit{\phi})} \hat{S}} \qubit{x\hat{2}}\qubit{\phi}) = \qubit{\hat{T}^{\:m(\qubit{\phi})} \hat{S}} \otimes  g^{m(\qubit{\phi})}(\qubit{x\hat{2}}) \otimes \qubit{\phi}$.
\end{proof}

\subsection{Implementing the Binary Search Strategy}

The strength of Scheme IV is further exemplified by an implementation of
a simple \emph{binary search} algorithm.
Given any string $x\in\{0,1\}^k$ with $k\geq1$,  assume that $x$ equals $bin_k(m)$ for a certain number $m\in[2^k]$.
For any $s$ with $|s|=2^k$ and $b\in\{0,1\}$, the quantum function
$BinSearch$ satisfies the following equality: $BinSearch(\qubit{\widetilde{x}}\qubit{\hat{b}} \qubit{\hat{2}} \qubit{s}) = \qubit{\widetilde{x}}\qubit{\widehat{b\oplus s_{(m)}}}  \qubit{\hat{2}} \qubit{s}$, where $s_{(m)}$ is the $m$th bit of $s$.
This quantum  function finds the $m$th bit of $s$ and extracts its bit $s_{(m)}$ from $s$ by way of modifying $\qubit{\hat{b}}$ to $\qubit{\widehat{b\oplus s_{(m)}}}$.

\begin{theorem}\label{bit-search-lemma}
The above quantum function $BinSearch$ is definable within $EQS_0+IV$.
\end{theorem}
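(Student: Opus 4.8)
The plan is to realize $BinSearch$ as a single application of the code-controlled fast quantum recursion (Scheme IV) in which the $k$ encoded index bits of $\widetilde{x}$ drive the halving of $s$, so that after $\ilog(\ell(\qubit{s}))=k$ rounds the surviving block of $s$ is exactly the target qubit $s_{(m)}$, which is then XORed into the stored bit $b$. Concretely, I would take the separator to be $\hat{2}$, regard the first segment as the control word built from $\widetilde{x}\,\hat{b}$ and the second segment as the data $s$, and fix the rotation family by $p_{\hat{0}}=I$ and $p_{\hat{1}}=HalfSWAP$, so that $m_u=LH$ when $p_u=I$ and $m_u=RH$ when $p_u=HalfSWAP$. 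Reading the section $\hat{x_i}=\hat{0}$ keeps the left half of the current block and reading $\hat{x_i}=\hat{1}$ swaps in and keeps the right half; since $x=bin_k(m)$ is the most-significant-first expansion of $m-1$, these choices navigate precisely to position $m$. I would set $f_{\hat{0}}=f_{\hat{1}}=F$ so the recursion continues on the chosen half, and $f_u=I$, $p_u=I$ on every other section type so that the trailing markers $\hat{\dashv},\hat{b}$ are never consumed by a recursive call.

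The base case is forced by the data-size guard $t=1$: after the $k$-th halving the data is the single qubit $s_{(m)}$, while the still-unread part of the control word is $\hat{\dashv}\hat{b}$ sitting in front of $r_0=\hat{2}$. I would let the base ground function $g$ be a genuine $\widehat{EQS}_0$ function — not its code-controlled shadow — obtained from $SWAP_{i,j}$ and $CNOT$ of Lemma~\ref{lemma:special}: it brings the lone data qubit adjacent to the last bit of $\hat{b}$, applies $CNOT$ to replace $b$ by $b\oplus s_{(m)}$, and swaps back, leaving $s_{(m)}$ and every other qubit fixed. Taking $h=I$, the reconstruction line (ii) of Scheme IV then performs a clean unwind: the $p_u^{-1}$ reverse each $I$ or $HalfSWAP$ and reattach the set-aside halves, so $s$ is restored verbatim, while the modified $\hat{b}$ is carried back up through the control word by the identity $h$. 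An induction on $k$, using dimension/norm preservation (Lemma~\ref{preserving-property}) together with the explicit form of $\qubit{\zeta_{u,p_u,\phi}}$, then yields $F(\qubit{\widetilde{x}}\qubit{\hat{b}}\qubit{\hat{2}}\qubit{s})=\qubit{\widetilde{x}}\qubit{\widehat{b\oplus s_{(m)}}}\qubit{\hat{2}}\qubit{s}$.

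The step I expect to be the main obstacle is the bookkeeping that makes the halving correspond exactly to binary navigation while simultaneously respecting Scheme IV's code-length side condition. Because $s$ has the clean power-of-two length $2^k$, the quantity $\ilog(\ell(\qubit{\phi}))$ drops by one at every round and the $k$ index sections are meant to be consumed in lockstep with the $k$ halvings; the delicate point is to certify that the control word $\widetilde{x}\,\hat{b}$ — which carries $k+2$ sections rather than $k$ — enters the base case (i) only through the $\ell(\qubit{\phi})\le t$ guard and never prematurely through the $|x|>|r_0|k$ guard, so that the two extra marker sections are touched only at or after the base. If absorbing those trailing sections inside one recursion proves awkward, the fallback is to isolate $s$ as the clean data first: strip $\hat{\dashv}\hat{b}$ by the section-relocation machinery (Lemmas~\ref{MultiApp-case} and~\ref{code-skip-application}, or explicit $SWAP_{i,j}$), run the halving recursion with $r_0=\hat{\dashv}$ and control word $\hat{x_1}\cdots\hat{x_k}$ of length exactly $|r_0|k$, apply the $CNOT$ once the target has been rotated to the front of the $s$-block, and finally invoke the inverse construction (Lemma~\ref{inverse-function}) to undo the rotation and restore both $s$ and the markers.
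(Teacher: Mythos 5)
Your construction is essentially the paper's own proof: the paper defines $F\equiv CFQRec_1[\hat{2},I,g,I\,|\,\PP_3,\FF_3]$ with $p_{\hat{0}}=I$, $p_{\hat{1}}=HalfSWAP$, $h=I$, $f_u=F$, and a base ground function $g\equiv SWAP_{7,10}\circ Skip_5[COPY_1]\circ SWAP_{7,10}$ that XORs the surviving qubit $s_{(m)}$ into $\hat{b}$, then verifies by induction on the halving of $s$ (left half for $x_i=0$, $HalfSWAP$-selected right half for $x_i=1$) that $F=BinSearch$ --- exactly your first plan. The length-guard concern you flag (the control word carries $k+2$ sections against the bound $|r_0|\ilog(\ell(\qubit{s}))=3k$) is a legitimate observation, but the paper's proof does not address it either and simply terminates the recursion through the $\ell(\qubit{s})\leq 1$ guard, so your fallback is unnecessary for matching the paper's argument.
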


\begin{proof}
We remark that the length of $s$ given to $BinSearch(\qubit{\widetilde{x}}\qubit{\hat{b}} \qubit{\hat{2}} \qubit{s})$ is a power of $2$. In this proof, we intend to use the quantum functions $SecSWAP^{(3)}_{i,j}$ and $COPY_1$ introduced in Lemmas \ref{lemma:special-combi}(1) and \ref{COPY-function}, respectively.
Given any strings $x=x_1x_2\cdots x_k$, $s=s_1s_2\cdots s_{2^k}$, and $b\in\{0,1\}$ associated with the number $m\in[2^k]$ satisfying $x=bin_k(m)$,
we define the quantum function $g$ by setting
$g(\qubit{{\hat{\dashv}}} \qubit{\hat{b}}\qubit{\hat{2}}\qubit{s_m})  = \qubit{{\hat{\dashv}}} \qubit{\widehat{b\oplus s_m}} \qubit{\hat{2}} \qubit{s_m}$. Notice that $\hat{b}=00b$ and $\widehat{b\oplus s_m} = 00b'$ with $b'=b\oplus s_m$.
It is possible to realize $g$ by simply setting $g\equiv SWAP_{7,10} \circ Skip_5[COPY_1]\circ SWAP_{7,10}$ since $COPY_1(\qubit{b}\qubit{s_m}\qubit{w}) = \qubit{b\oplus s_m}\qubit{s_m}\qubit{w}$ for any $w$.
Let $p_{\hat{0}} \equiv I$, $p_{\hat{1}}\equiv HalfSWAP$, and $p_u=I$ for all indices $u\in\{0,1\}^3-\{\hat{0},\hat{1}\}$.
We then define $F$ to be $CFQRec_1[\hat{2},I,g,I| \PP_3,\FF_3]$, where $\PP_3=\{p_u\}_{u\in\{0,1\}^3}$ and $\FF_3=\{f_u\}_{u\in\{0,1\}^3}$ with $f_u=F$ for all $u\in\{0,1\}^3$. Now, our goal is to verify that $F$ indeed matches $BinSearch$.

For any string $s\in\{0,1\}^{+}$ whose length is a power of $2$ and for any  number $m\in[|s|]$, we explain how to find the $m$th bit $s_m$ of $s$.
We first split $s$ into the left part and the right part of $s$ whose lengths are $LH(|s|)$ and $RH(|s|)$, respectively. We assign $0$ to the left part and $1$ to the right part and we call the left part by $s_0$ and the right part by $s_1$. Starting $s_0$ (resp., $s_1$), we further split it into its  left part, called $s_{00}$ (resp., $s_{10}$), and its right part, called $s_{01}$ (resp., $s_{11}$). Inductively, we repeat this process until the target strings become single symbols. In the end, a string $x\in\{0,1\}^{\ilog(|s|)}$ is assigned to the single symbol obtained by the series of the above-described processes. We denote this unique symbol by $s_x$. If $x=bin_{\ilog(|s|)}(m)$, then $s_x$ coincides with the desired bit $s_m$.
We then treat $x$ as the binary representation of an index of the symbol $s_{x}$ in $s$.

Let $X_{k+1}=\hat{\dashv}$ and let $X_i=\widehat{x_i} \widehat{x_{i+1}} \cdots \widehat{x_k} \hat{\dashv}$ for any index $i\in[k]$.
We split $s$ into two parts as $s=s^{(r)}s^{(l)}$ with $|s^{(r)}|=LH(|s|)$ and $|s^{(l)}|=RH(|s|)$.
We set $m_1=m$ and, for each index $i\in[2,k]_{\integer}$, we take the number $m_i$ satisfying $bin_{k-i+1}(m_i)=x_ix_{i+1}\cdots x_k$.
Note that $s_{m_i} = s^{(r)}_{m_{i+1}}$ if $x_i=1$, and
$s_{m_i} = s^{(l)}_{m_{i+1}}$ otherwise.
When $\ell(\qubit{s})=1$, we conclude that $F(\qubit{X_{k+1}}\qubit{\hat{b}\hat{2}}\qubit{s}) = g(\qubit{X_{k+1}}\qubit{\hat{b}\hat{2}}\qubit{s}) = g(\qubit{\hat{\dashv}}\qubit{\hat{b}}\qubit{\hat{2}}\qubit{s}) = \qubit{\hat{\dashv}}\qubit{\widehat{b\oplus s}}\qubit{\hat{2}}\qubit{s} = BinSearch(\qubit{X_{k+1}}\qubit{\hat{b}\hat{2}}\qubit{s})$. Next, we assume that $\ell(\qubit{s})=i\geq2$.
It follows by induction hypothesis  that $F(\qubit{X_{i+1}}\qubit{\hat{b}\hat{2}} \qubit{s^{(r)}}) = \qubit{X_{i+1}} \qubit{\hat{u}}  \qubit{\hat{2}}\qubit{s^{(r)}}$ with $u= b\oplus s^{(r)}_{m_{i+1}}$ if $x_i=1$, and
$F(\qubit{X_{i+1}}\qubit{\hat{b}\hat{2}} \qubit{s^{(l)}}) = \qubit{X_{i+1}} \qubit{\hat{v}}  \qubit{\hat{2}}\qubit{s^{(l)}}$ with $v= b\oplus s^{(l)}_{m_{i+1}}$ otherwise.
In the case of $u=1$, since $\qubit{\psi_{p_{\hat{u}},s}} = HalfSWAP(\qubit{s})$, it follows that
$\qubit{\zeta_{\hat{u},p_{\hat{u}},s}} = \sum_{t:|t|=RH(|s|)} ( F(\qubit{X_{i+1}}\qubit{\hat{b}\hat{2}}\qubit{t}) \otimes \measure{t}{s^{(r)}s^{(l)}})  = \qubit{X_{i+1}}\qubit{\hat{u}}\qubit{\hat{2}} \qubit{s^{(l)}}$.
Therefore, when $x_i=1$, we obtain  $F(\qubit{X_i}\qubit{\hat{b}\hat{2}}\qubit{s}) = \qubit{x_i}\qubit{X_{i+1}} \qubit{\hat{u}}\qubit{\hat{2}}\qubit{s}  = BinSearch(\qubit{X_i}\qubit{\hat{b}\hat{2}}\qubit{s})$. When $x_i=0$, in  contrast, we obtain $F(\qubit{X_i}\qubit{\hat{b}\hat{2}}\qubit{s}) = \qubit{x_i}\qubit{X_{i+1}} \qubit{\hat{v}}\qubit{\hat{2}}\qubit{s} = BinSearch(\qubit{X_i}\qubit{\hat{b}\hat{2}}\qubit{s})$.

As a result, we conclude that $F = BinSearch$, as requested.
\end{proof}


Hereafter, we demonstrate how to use the quantum function $BinSearch$.
For this purpose, we first show the following statement.

\begin{corollary}\label{Bit-function}
Given $\qubit{\phi}\in\HH_{\infty}$ and $x=bin_k(m)$ for $k\in\nat^{+}$ and $m\in[2^k]$, if $\qubit{\phi}$ is of the form  $\sum_{s:|s|=2^k}\alpha_s \qubit{s}$, then we set  $Bit(\qubit{0^5}\qubit{b} \qubit{\widetilde{x}}\qubit{\phi}) = \sum_{s:|s|=2^k}\alpha_s \qubit{0^5}\qubit{b\oplus s_{(m)}}\qubit{\widetilde{x}}\qubit{s}$, where $s_{(m)}$ is the $m$th bit of $s$.
This quantum function $Bit$ is definable within $EQS_0+IV$.
\end{corollary}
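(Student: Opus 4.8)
The plan is to realize $Bit$ as a conjugate of $BinSearch$ (Theorem \ref{bit-search-lemma}) by a reversible, measurement-free \emph{format-conversion} function. Since every quantum function is linear, it suffices to prove the claimed identity on a computational basis state $\qubit{\phi}=\qubit{s}$ with $|s|=2^k$ and then extend by linearity in $\qubit{\phi}$; so I would build $Bit$ so that $Bit(\qubit{0^5}\qubit{b}\qubit{\widetilde{x}}\qubit{s})=\qubit{0^5}\qubit{b\oplus s_{(m)}}\qubit{\widetilde{x}}\qubit{s}$. Recall $\hat{0}=000$, $\hat{b}=00b$, $\hat{2}=111$, and $|\widetilde{x}|=3(k+1)$, whereas $BinSearch$ acts on inputs of shape $\qubit{\widetilde{x}}\qubit{\hat{b}}\qubit{\hat{2}}\qubit{s}$. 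Writing $P$ for the conversion sending $\qubit{0^5}\qubit{b}\qubit{\widetilde{x}}\qubit{s}$ to $\qubit{\widetilde{x}}\qubit{\hat{b}}\qubit{\hat{2}}\qubit{s}$, I would then set $Bit\equiv P^{-1}\circ BinSearch\circ P$. Because $BinSearch$ rewrites only the three-qubit block $\hat{b}$ into $\widehat{b\oplus s_{(m)}}$ and leaves $\widetilde{x},\hat{2},s$ untouched, applying $P^{-1}$ afterwards restores the original layout with $b$ replaced by $b\oplus s_{(m)}$, which is exactly the desired action. Note that $0^5$ supplies precisely the $2+3$ zeros needed to build the $00$-prefix of $\hat{b}$ and the three ones of $\hat{2}$.

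The function $P$ decomposes into three stages. First, the prefix $0^5b$ splits section-wise as $\hat{0}\,\hat{b}$, so flipping its first three qubits by $NOT\circ Skip_1[NOT]\circ Skip_2[NOT]$ (Lemma \ref{Skip-function}) turns $\hat{0}$ into $\hat{2}$, yielding $\qubit{\hat{2}}\qubit{\hat{b}}\qubit{\widetilde{x}}\qubit{s}$. Second, I would swap the two leading three-qubit sections using $SecSWAP^{(3)}_{1,2}$ of Lemma \ref{lemma:special-combi}(1), obtaining $\qubit{\hat{b}}\qubit{\hat{2}}\qubit{\widetilde{x}}\qubit{s}$. Third -- the only stage that is not a fixed-size rearrangement -- I would transport the six-qubit block $\hat{b}\hat{2}$ rightward past the variable-length, $\hat{\dashv}$-terminated block $\widetilde{x}$, reaching $\qubit{\widetilde{x}}\qubit{\hat{b}}\qubit{\hat{2}}\qubit{s}$. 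This transport advances the block one section at a time (each advance a fixed $SecSWAP^{(3)}$), and the number of advances is the section count $k+1$ of $\widetilde{x}$, which grows with the input; I would realize it by the code-controlled fast recursion of Scheme IV exactly as in $SIZE_{r_0}$ (Lemma \ref{SIZE-function}) and $BinSearch$, with separator $\hat{\dashv}$ detecting the right end of $\widetilde{x}$ via $NON_{\hat{\dashv}}$, the recursion driven by the halving of $\qubit{s}$ (so its depth is $\ilog(\ell(\qubit{s}))=k$), and the terminal $\hat{\dashv}$ section absorbed at the base case so that $k$ recursive levels clear all $k+1$ sections.

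Given $P$, membership in $EQS_0+IV$ would follow formally: the $NOT$-flips and $SecSWAP$'s lie in $\widehat{EQS}_0$, the block transport is an application of Scheme IV, and $BinSearch\in EQS_0+IV$ by Theorem \ref{bit-search-lemma}; closure under composition is Scheme II, while the inverse $P^{-1}$ exists in the same system by Lemma \ref{inverse-function}, since every constituent is measurement-free and hence norm- and dimension-preserving. A short bottom-up calculation in the style of Example \ref{example-scheme} would confirm $P(\qubit{0^5}\qubit{b}\qubit{\widetilde{x}}\qubit{s})=\qubit{\widetilde{x}}\qubit{\hat{b}}\qubit{\hat{2}}\qubit{s}$, and the displayed action of $Bit$ then follows by tracing $P^{-1}\circ BinSearch\circ P$ and extending linearly in $\qubit{\phi}$.

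The hard part will be the third stage of $P$: moving a constant-size block across $\widetilde{x}$ cannot be achieved by any bounded composition of the fixed-index functions $SWAP_{i,j}$ or $SecSWAP^{(3)}_{i,j}$, because the number of section swaps depends on $k$. The genuinely delicate point is the length bookkeeping that makes the fast recursion fit, namely verifying that the recursion depth $\ilog(\ell(\qubit{s}))=k$ matches the $k+1$ sections of $\widetilde{x}$ once the terminal $\hat{\dashv}$ is handled at the base case, and that the halving of $\qubit{s}$ during the recursion is faithfully undone on the way back up by the $p_u^{-1}$ factors (as in the inverse construction of Lemma \ref{inverse-function}), so that the bits of $s$ that $BinSearch$ must later read are left intact. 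I expect this coordination between the recursion schedule and the separator-driven block move to be the one place requiring real care; the remaining stages are routine.
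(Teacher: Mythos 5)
Your proposal is correct and follows essentially the same route as the paper: the paper also builds $Bit$ as a conjugation of $BinSearch$ by a format conversion, first turning $\qubit{0^5}\qubit{b}$ into $\qubit{\hat{b}}\qubit{\hat{2}}$ via $NOT$-flips and the section swap (its $h$ equals your $SecSWAP^{(3)}_{1,2}$ composed with the $Skip[NOT]$'s), then transporting the six-qubit block past $\widetilde{x}$ with a section-wise recursive move realized through Proposition \ref{limited-recursion}/Scheme IV, applying $BinSearch$, and undoing the conversion with the inverses. The one point you flag as delicate (the recursion-depth bookkeeping for the block transport) is exactly what the paper delegates to Proposition \ref{limited-recursion}, so your plan is sound as stated.
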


\begin{proof}
Recall the quantum function $Skip_k[g]$ from Lemma \ref{Skip-function}. We first change the quantum state $\qubit{0^5}\qubit{b}\qubit{\widetilde{x}}\qubit{\phi}$ into $\qubit{\hat{b}}\qubit{\hat{2}}\qubit{\widetilde{x}}\qubit{\phi}$ by applying $h\equiv SWAP_{1,4}\circ SWAP_{2,5}\circ SWAP_{3,6} \circ Skip_2[NOT]\circ Skip_1[NOT]\circ NOT$.

Consider the quantum function $f$ defined by $f(\qubit{\phi}) = \sum_{z:|z|=6} \measure{z}{\phi}\otimes \qubit{z}$ for all $\qubit{\phi}\in\HH_{\infty}$. This $f$ satisfies the following recursive property:  $f(\qubit{x}\qubit{\hat{\dashv}}) = SecSWAP^{(3)}_{1,3} \circ SecSWAP^{(3)}_{2,4} ( \qubit{u_1u_2}\otimes f(\measure{u_1u_2}{x{\hat{\dashv}}}))$, where $x=u_1u_2x'$ with $|u_1|=|u_2|=3$.

Proposition \ref{limited-recursion} makes it possible to realize the quantum function $F$ that satisfies $F(\qubit{\hat{b}}\qubit{\hat{2}}\qubit{\widetilde{x}}\otimes \qubit{\phi}) =  f(\qubit{\hat{b}}\qubit{\hat{2}}\qubit{\widetilde{x}})\otimes \qubit{\phi}$. The last term actually equals  $\qubit{\widetilde{x}} \qubit{\hat{b}}\qubit{\hat{2}}\qubit{\phi}$.
We apply $BinSearch$ to $\qubit{\widetilde{x}} \qubit{\hat{b}}\qubit{\hat{2}}\qubit{s}$ to obtain $\qubit{\widetilde{x}} \qubit{\widehat{b\oplus s_{(m)}}}\qubit{\hat{2}}\qubit{s}$.
We further apply $f^{-1}$ and obtain $f^{-1}( \qubit{\widetilde{x}} \qubit{\widehat{b\oplus s_{(m)}}}\qubit{\hat{2}}\qubit{s}) = \qubit{\widehat{b\oplus s_{(m)}}}\qubit{\hat{2}} \qubit{\widetilde{x}} \qubit{s}$. Finally, we apply $h^{-1}$ to $\qubit{\widehat{b\oplus s_{(m)}}}\qubit{\hat{2}} \qubit{\widetilde{x}} \qubit{s}$ and obtain $\qubit{0^5}\qubit{b\oplus s_{(m)}} \qubit{\widetilde{x}} \qubit{s}$. Therefore, $Bit$ can be defined by a finite series of applications of Scheme I--IV.
\end{proof}


As the second application of $BinSearch$,
we wish to ``count'' the number of 0s and 1s in an input string in a quantum-mechanical fashion. It is impossible to do so deterministically in polylogarithmic time.
Fix a constant $\varepsilon\in[0,3/4)$ and consider the \emph{promise decision problem}  $MAJPDP_{\varepsilon}$ in which we  determine whether the total number of $0$s in $x$ is at least $\sqrt{1-\varepsilon}|x|$ or the total number  of $1$s is at least $\sqrt{1-\varepsilon}|x|$.
Formally, $MAJPDP_{\varepsilon}$ is expressed as $(A_{\varepsilon},B_{\varepsilon})$, where $A_{\varepsilon} =\{x\in\{0,1\}^*\mid \#_1(x)\geq \sqrt{1-\varepsilon}|x|\}$ and $B_{\varepsilon} =\{x\in\{0,1\}^*\mid \#_0(x)\geq \sqrt{1-\varepsilon}|x|\}$. We intend to prove the existence of a quantum function in $EQS_0+IV$ that ``solves'' this promise problem $MAJPDP_{\varepsilon}$ in the following sense.

\begin{proposition}\label{access-all-qubits}
Let $\varepsilon$ be any constant in $[0,3/4)$. There exists a quantum function $F$ in $EQS_0+IV$ such that, for any $x\in\{0,1\}^*$, (1) if $x\in A_{\varepsilon}$, then $\|\measure{1}{\psi_{F,x}}\|^2\geq 1-\varepsilon$ and (2) if $x\in B_{\varepsilon}$, then $\|\measure{0}{\psi_{F,x}}\|^2\geq 1-\varepsilon$, where $\qubit{\psi_{F,x}} = F(\qubit{0^{3k}}\qubit{\hat{\dashv}} \qubit{x})$ with $k=\ilog(|x|)$.
\end{proposition}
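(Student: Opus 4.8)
The plan is to realise a quantum counting argument in which the \emph{amplitude} (not the probability) of a designated answer qubit being $\qubit{1}$ is made equal to the frequency $\#_1(x)/|x|$, so that squaring converts the promised count threshold $\sqrt{1-\varepsilon}$ into the required success probability $1-\varepsilon$. Writing $k=\ilog(|x|)$, I would first use $WH$ (Lemma \ref{lemma:special}(3)), routed through $Skip$ and $Branch$ onto the bit-carrying qubit of each of the $k$ three-qubit blocks of $\qubit{0^{3k}}$, to turn the index register $\qubit{0^{3k}}\qubit{\hat{\dashv}}$ into a uniform superposition $\frac{1}{\sqrt{L}}\sum_{m}\qubit{\widetilde{bin_k(m)}}$ over position indices $m\in[|x|]$ (with $L=|x|$; when $|x|=2^k$ this is literally the plain Walsh--Hadamard image). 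Feeding this superposed index together with $\qubit{x}$ into the bit-extraction function $Bit$ of Corollary \ref{Bit-function}, whose construction rests on $BinSearch$ (Theorem \ref{bit-search-lemma}), then produces $\frac{1}{\sqrt{L}}\sum_m \qubit{\cdots}\qubit{x_{(m)}}\qubit{\cdots}$, i.e.\ the $m$-th bit of $x$ copied into a fresh answer qubit. This descent fits the shape of Scheme IV: the register holding $x$ is what gets halved through $p_u\in\{I,HalfSWAP\}$, and the $k$ index sections drive exactly $k=\ilog(\ell(\qubit{\phi}))$ recursive levels before termination.

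Next I would let the index register interfere by applying the inverse of the superposition-creating step and then projecting the index register back onto its initial value using the measurement operator $MEAS$ of Scheme I, item 6, which is legitimately available inside $EQS_0+IV$. A short computation (each retained index component contributes an amplitude $\frac{1}{\sqrt{L}}$, and summing over the $m$ with $x_{(m)}=b$ yields $\#_b(x)/L$) shows that the surviving unnormalised component leaves the answer qubit in the state $\frac{1}{|x|}\bigl(\#_0(x)\qubit{0}+\#_1(x)\qubit{1}\bigr)$, whence $\|\measure{1}{\psi_{F,x}}\|^2=(\#_1(x)/|x|)^2$ and $\|\measure{0}{\psi_{F,x}}\|^2=(\#_0(x)/|x|)^2$. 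Bringing the answer qubit to the front by a sequence of $SWAP_{i,j}$ (Lemma \ref{lemma:special}(10)) then gives precisely the quantities in the statement: if $x\in A_\varepsilon$ then $\#_1(x)\ge\sqrt{1-\varepsilon}\,|x|$ forces $\|\measure{1}{\psi_{F,x}}\|^2\ge 1-\varepsilon$, and symmetrically for $x\in B_\varepsilon$ with the bit $0$. The hypothesis $\varepsilon<3/4$ guarantees $\sqrt{1-\varepsilon}>1/2$, so that $A_\varepsilon\cap B_\varepsilon=\setempty$ and the promise is consistent.

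I expect the main obstacle to be the case in which $|x|$ is \emph{not} a power of $2$. A balanced Walsh--Hadamard at each branch weights the two halves equally, whereas in Scheme IV the halves have sizes $\lceil\ell/2\rceil$ and $\lfloor\ell/2\rfloor$ for $\ell=\ell(\qubit{\phi})$; the recombined amplitude then becomes an equal-weight recursive average rather than the true frequency $\#_1(x)/|x|$, and small instances (e.g.\ $x=110$) show this can fall below $\sqrt{1-\varepsilon}$. The remedy is to replace the balanced Hadamard at a node of current length $\ell$ by the size-aware rotation $ROT_{\theta_\ell}$ with $\cos\theta_\ell=\sqrt{\lceil\ell/2\rceil/\ell}$, which makes the leaf amplitudes uniformly $1/\sqrt{|x|}$ and restores the exact frequency; the length $\ell(\qubit{\phi})$ needed to select $\theta_\ell$ is visible at each recursive level (cf.\ $LengthQ_k$, Lemma \ref{lemma:special}(9)), and one fixes an amplitude set $K$ containing the finitely many angles met along a halving path. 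Confirming that this length-conditioned rotation can be folded into the ground functions of Scheme IV while respecting the dimension- and norm-bookkeeping of Lemma \ref{preserving-property}, and that the final projection via $MEAS$ keeps $F$ inside $EQS_0+IV$, is the delicate step on which I would concentrate the full proof.
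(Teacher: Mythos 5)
Your proposal follows essentially the same route as the paper's proof: build a uniform superposition over position indices by recursive applications of $WH$ (the paper packages this as the quantum function $G$ of Lemma \ref{WH-application}), extract the indexed bit with $Bit$ (Corollary \ref{Bit-function}) built on $BinSearch$ (Theorem \ref{bit-search-lemma}), uncompute the superposition with $G^{-1}$, and observe that the component in which the index register returns to $\qubit{0^{3k}}\qubit{\hat{\dashv}}$ leaves the answer qubit with amplitude $\#_{b}(x)/2^k$, whose square exceeds $1-\varepsilon$ under the promise. Two remarks on where you deviate. First, your explicit projection of the index register back onto its initial value via $MEAS$ is both unnecessary and, taken literally, not a fixed-size construction ($MEAS[a]$ acts on one qubit, so clearing a $3k$-qubit register would require a number of applications growing with the input); the paper avoids this entirely by measuring only the answer qubit and lower-bounding the success probability by the squared amplitude of the relevant component, which gives $\|\measure{b_x}{\psi_{F,x}}\|^2\geq(\#_{b_x}(x)/2^k)^2\geq 1-\varepsilon$ with no intermediate projection. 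Second, your concern about $|x|$ not being a power of two is legitimate---$BinSearch$ and $Bit$ are only stated for power-of-two lengths, and the paper's final estimate silently identifies $|x|$ with $2^k$---but the size-aware rotations you propose are a refinement the paper does not carry out; it simply leaves that case implicit rather than repairing it.
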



To prove this proposition, we first demonstrate how to produce a superposition of all ``indices'' of a given input. This can be done by recursively applying $WH$ to $\qubit{0^{3k}}\qubit{\hat{\dashv}}$ as shown below.

\begin{lemma}\label{WH-application}
There exists a quantum function $G$ in $EQS_0+IV$ satisfying $G(\qubit{0^{3k}}\qubit{\hat{\dashv}}\otimes  \qubit{\phi}) = \frac{1}{\sqrt{2^k}} \sum_{x:|x|=k}\qubit{\widetilde{x}}\qubit{\phi}$ for any $n\in\nat^{+}$ and any $\qubit{\phi}\in\HH_{\infty}$,  provided that $k= \ilog(\ell(\qubit{\phi}))$.
\end{lemma}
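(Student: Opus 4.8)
The plan is to build $G$ by a single application of Scheme IV, with separator $r_0=\hat\dashv$ (so $|r_0|=3$), treating $0^{3k}$ as the first segment $x$ and $\qubit\phi$ as the second segment. First I would observe that the target state factors nicely: since $\hat 0=000$ and $\hat 1=001$ differ only in their third bit, the state $\frac{1}{\sqrt{2^k}}\sum_{x:|x|=k}\qubit{\widetilde x}$ is obtained from $\qubit{0^{3k}\hat\dashv}$ by applying $WH$ (Lemma \ref{lemma:special}(3)) to the third qubit of each of the first $k$ triples while leaving the final separator $\hat\dashv$ intact. The single-triple building block is therefore $h\equiv Skip_2[WH]$, which by Lemma \ref{Skip-function} lies in $EQS_0$ and satisfies $h(\qubit{\hat 0}\qubit v)=\qubit{00}\otimes WH(\qubit 0\qubit v)=\frac{1}{\sqrt2}(\qubit{\hat 0}+\qubit{\hat 1})\otimes\qubit v$ for every $\qubit v$.

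Concretely, I would set $G\equiv CFQRec_1[\hat\dashv,I,I,Skip_2[WH]|\PP_3,\FF_3]$, where $\PP_3=\{p_u\}$ has $p_u\equiv I$ (so each $m_u=LH$ and $\qubit\phi$ is merely halved at each round) and $\FF_3=\{f_u\}$ has $f_u\equiv F$ for every $u$. The base case $g\equiv I$ handles $k=0$: when $\ell(\qubit\phi)\leq 1$ we have $\ilog(\ell(\qubit\phi))=0$ and $x=\lambda$, so $G(\qubit{\hat\dashv}\qubit\phi)=\qubit{\hat\dashv}\qubit\phi=\qubit{\widetilde\lambda}\qubit\phi$, matching the claimed output. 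For the inductive step with $k=\ilog(\ell(\qubit\phi))\geq 1$, the only surviving section of the classical string $0^{3k}$ is $u=\hat 0$, so Scheme IV(ii) reduces to a single recursive call on $\qubit{0^{3(k-1)}\hat\dashv}\otimes\qubit s$ with $|s|=LH(\qubit\phi)$. The crucial arithmetic input is Lemma \ref{HALF-value}(1): a single application of $LH$ decreases $\ilog$ by exactly one, i.e.\ $\ilog(\ell(\qubit s))=k-1$, so the induction hypothesis applies to the recursive call and the recursion runs for exactly $k$ rounds, one per data triple.

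Carrying out the calculation, I would use the induction hypothesis to get $F(\qubit{0^{3(k-1)}\hat\dashv}\otimes\qubit s)=\frac{1}{\sqrt{2^{k-1}}}\sum_{x'':|x''|=k-1}\qubit{\widetilde{x''}}\qubit s$, then assemble $\qubit{\zeta_{\hat 0,I,\phi}}=\sum_{s}(F(\qubit{0^{3(k-1)}\hat\dashv}\qubit s)\otimes\measure{s}{\phi})=\frac{1}{\sqrt{2^{k-1}}}\sum_{x''}\qubit{\widetilde{x''}}\otimes\qubit\phi$, using the reconstruction $\qubit\phi=\sum_s\qubit s\otimes\measure{s}{\phi}$ from the bra-ket conventions. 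Collapsing the outer $v$-sum against $\qubit\zeta$ gives $G(\qubit{0^{3k}\hat\dashv}\qubit\phi)=\frac{1}{\sqrt{2^{k-1}}}\sum_{x''}h(\qubit{\hat 0}\qubit{\widetilde{x''}})\otimes\qubit\phi$, and the identity for $h$ above splits the leading triple into $\frac{1}{\sqrt2}(\qubit{\hat 0}+\qubit{\hat 1})$, yielding $\frac{1}{\sqrt{2^k}}\sum_{x:|x|=k}\qubit{\widetilde x}\otimes\qubit\phi$, as required; the lengths are consistent throughout since $G$ is dimension-preserving (Lemma \ref{preserving-property}). The main obstacle I anticipate is the length bookkeeping around $\qubit\zeta$: one must track that $\ell(\measure{\hat 0}{x\hat\dashv})=3k$ governs the $v$-sum while $|s|=LH(\qubit\phi)$ governs the recursion, and confirm that $\sum_v h(\qubit{\hat 0}\qubit v)\otimes\measure{v}{\zeta}$ faithfully reconstitutes the recursion output before $h$ acts. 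An alternative is to invoke Proposition \ref{limited-recursion} with $h=Skip_2[WH]$, but that route needs $|x|\leq|r_0|\log\ell(\qubit\phi)$, which is delicate when $\ell(\qubit\phi)$ is not a power of two; the direct Scheme IV construction sidesteps the issue because its base-case guard is stated with $\ilog$ rather than $\log$.
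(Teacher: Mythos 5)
Your proof is correct, and in substance it is the same construction as the paper's: the paper also builds $G$ from the recursive identity $\hat{h}(\qubit{0^3}\qubit{w\hat{\dashv}}) = h(\qubit{0^3}\otimes\hat{h}(\qubit{w\hat{\dashv}}))$ with an $h$ that applies $WH$ to the third qubit of the leading triple (it uses $Branch_2[\{g_u\}]$ with $g_{00}=WH$ where you use $Skip_2[WH]$ --- interchangeable here since every triple begins with $00$), and the quantum function ultimately produced is exactly your $CFQRec_1[\hat{\dashv},I,I,h\,|\,\PP_3,\FF_3]$ with $p_u=I$ and $f_u=F$. The one genuine difference is packaging: the paper first verifies the recursive property of $\hat{h}$ and then invokes Proposition \ref{limited-recursion} as a black box, whereas you instantiate Scheme IV directly and check its guard condition by hand. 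Your route buys something real: Proposition \ref{limited-recursion} is stated under the hypothesis $|x|\leq|r_0|\log\ell(\qubit{\phi})$, and here $|x|=3\ilog(\ell(\qubit{\phi}))$ strictly exceeds $3\log\ell(\qubit{\phi})$ whenever $\ell(\qubit{\phi})$ is not a power of two, so the paper's appeal to that proposition is, as literally stated, slightly out of range; the construction still works because the actual guard in Scheme IV is phrased with $\ilog$, and your direct verification (using Lemma \ref{HALF-value}(1) to show that $LH$ decrements $\ilog$ by exactly one, keeping $|x|=|r_0|\cdot\ilog(\ell(\qubit{\phi}))$ invariant through the recursion) makes this explicit rather than leaving it implicit. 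Your base-case and inductive calculations match the paper's.
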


\begin{proof}
Consider the quantum function $\hat{h}(\qubit{0^{3k}}\qubit{\hat{\dashv}}) = \frac{1}{\sqrt{2^k}} \sum_{x:|x|=k}\qubit{\widetilde{x}}$. This function $\hat{h}$ satisfies the following recursive property: $\hat{h}(\qubit{\hat{\dashv}}) = \qubit{\hat{\dashv}}$ and $\hat{h}(\qubit{0^3}\qubit{w\hat{\dashv}}) = h(\qubit{0^3}\otimes \hat{h}(\qubit{w\hat{\dashv}}))$ for any string $w$, where $h\equiv Branch_2[\{g_u\}_{u\in\{0,1\}^2}]$ with $g_{00}=WH$ and $g_u=I$ for all indices $u\in\{0,1\}^2-\{00\}$.
Let us prove this property.
Assume that $\hat{h}(\qubit{0^{3k}}\qubit{\hat{\dashv}}) = \frac{1}{\sqrt{2^k}}\sum_{x:|x|=k}\qubit{\widetilde{x}}$.
For the value $\hat{h}(\qubit{0^{3k+3}}\qubit{\hat{\dashv}})$, we calculate  $h(\qubit{0^3}\otimes \hat{h}(\qubit{0^{3k}}\qubit{\hat{\dashv}}))$ as $\qubit{00}\otimes g(\qubit{0}\otimes \hat{h}(\qubit{0^{3k}}\qubit{\hat{\dashv}})) = \qubit{00}\otimes \frac{1}{\sqrt{2}}(\qubit{0}+\qubit{1})\otimes \hat{h}(\qubit{0^{3k}}\qubit{\hat{\dashv}}) = \frac{1}{\sqrt{2}}(\qubit{\hat{0}}+\qubit{\hat{1}})\otimes \frac{1}{\sqrt{2^k}}\sum_{x:|x|=3k}\qubit{\widetilde{x}} = \frac{1}{\sqrt{2^{k+1}}}\sum_{y:|y|=k+1}\qubit{\widetilde{y}}$.
The last term clearly equals $\hat{h}(\qubit{0^{3k+3}}\qubit{\hat{\dashv}})$.

By Proposition \ref{limited-recursion}, there exists a quantum function $G$ in $EQS_0+IV$ for which $G(\qubit{0^{3k}}\qubit{\hat{\dashv}} \otimes \qubit{\phi}) = \hat{h}(\qubit{0^{3k}}\qubit{\hat{\dashv}}) \otimes \qubit{\phi}$, where $k= \ilog(\ell(\qubit{\phi}))$. This completes the lemma's proof.
\end{proof}


With the help of Lemma \ref{WH-application}, the proof of  Proposition \ref{access-all-qubits} easily follows.

\begin{proofof}{Proposition \ref{access-all-qubits}}
Consider $MAJPDP_{\varepsilon} = (A_{\varepsilon},B_{\varepsilon})$ defined above. We wish to construct a quantum function, say, $F$ in $EQS_0+IV$ that ``solves'' $MAJPDP_{\varepsilon}$  in the proposition's sense. Recall the quantum function $G$ of Lemma \ref{WH-application}. Since $G$ is norm-preserving and thus in $\widehat{EQS}$, Lemma \ref{inverse-function} ensures that $G^{-1}$ exists.

Let $x$ denote any string. We then set $b_x=1$ if $x\in A_{\varepsilon}$ and $b_x=0$ if $x\in B_{\varepsilon}$.
We start the desired computation with the quantum state $\qubit{\phi_x}= \qubit{0^6} \qubit{0^{3k}}\qubit{\hat{\dashv}} \qubit{x}$, where $k=\ilog(|x|)$, and apply $Skip_6[G]$ to generate $\frac{1}{\sqrt{2^k}}\sum_{u:|u|=k}\qubit{\widetilde{u}}\qubit{x}\qubit{0^6}$.
We then move $\qubit{0^6}$ to the front and obtain  $\frac{1}{\sqrt{2^k}}\sum_{u:|u|=k}\qubit{0^6}\qubit{\widetilde{u}}\qubit{x}$. We further apply $Bit$ to the resulting quantum state and generate  $\frac{1}{\sqrt{2^k}}\sum_{u:|u|=k} \qubit{0^5}\qubit{x_{(m(u))}}\qubit{\widetilde{u}}\qubit{x}$, where $m(u)$ denotes a unique number satisfying that  $u=bin_k(m(u))$ and $x_{(m(u))}$ is the $m(u)$-th bit of $x$. We change it to $\qubit{\gamma_x} = \frac{1}{\sqrt{2^k}}\sum_{u:|u|=k} \qubit{\widetilde{u}}\qubit{x}\qubit{0^5} \qubit{x_{(m(u))}}$ and  then apply $G^{-1}$. We denote the resulting quantum state by $\qubit{\beta_x}$.
Letting $\qubit{\xi}= \qubit{0^{3k}}\qubit{\hat{\dashv}}$, we ish to calculate $\measure{\xi}{\beta_x}$.
Note that $\qubit{\beta_x} = G^{-1}(\qubit{\gamma_x})$ is equivalent to  $G(\qubit{\beta_x}) = \qubit{\gamma_x}$.
It thus follows that  $\measure{\xi}{\beta_x}$ equals $\measure{\psi_{G,\xi}}{\gamma_x}$, which is further calculated to  $(\frac{1}{\sqrt{2^k}}\sum_{v:|v|=k}\bra{\widetilde{v}})\cdot (\frac{1}{\sqrt{2^k}}\sum_{u:|u|=k} \ket{\widetilde{u}}\qubit{x}\qubit{0^5}\qubit{x_{(m(u))}}) = \frac{1}{2^k}\qubit{x}\qubit{0^5}\qubit{x_{(m(u))}}$, where $\qubit{\psi_{G,\xi}}=G(\qubit{\xi})$.

After removing the last qubit to the front, we obtain a unique quantum state, say, $\qubit{\eta_x}$.
Finally, we measure the first qubit of $\qubit{\eta_x}$ in the computational basis. If $x\in A_{\varepsilon}\cup B_{\varepsilon}$, then $b_x$ satisfies that $\|\measure{b_x}{\eta_x}\|^2 \geq (\frac{1}{2^{k}}\sum_{u}\measure{b_x}{x_{(m(u))}})^2 =  (\frac{\#_{b_x}(x)}{2^k})^2 \geq (\sqrt{1-\varepsilon})^2 = 1-\varepsilon$ because of the promise given by $(A_{\varepsilon},B_{\varepsilon})$.
\end{proofof}

\section{Relationships to Quantum Computability}\label{sec:relationships}

Throughout Section \ref{sec:definitions}, we have studied basic properties of quantum functions in $EQS$. In this section, we will look into relationships of these quantum functions to other platforms of limited computability, in particular, a model of polylogarithmic-time (or polylogtime) Turing machine.

\subsection{Runtime-Restricted Quantum Turing Machines}\label{sec:polylogtime-QTM}

We wish to describe a computational model of \emph{quantum Turing machine} (or QTM, for short) that runs particularly in polylogarithmic time. For this purpose, we need to modify a standard model of QTM defined in  \cite{BV97,ON00,Yam99} structurally and behaviorally. This new model also expands the classical model of (poly)logtime Turing machine (TM) discussed in \cite{BIS90}. Notice that quantum polylogtime computability was already discussed in, e.g., \cite{RT22} based on uniform quantum circuit families. For more information, refer to \cite{RT22} and references therein.

A \emph{(random-access) QTM}\footnote{This model is also different from the ``log-space QTMs'' of \cite{Yam22}, which are equipped with ``garbage'' tapes onto which any unwanted information is discarded to continue their quantum computation.} (or just a \emph{QTM} in this work) is equipped with a random-access read-only input tape, multiple rewritable work tapes of $O(\log^k{n})$ cells, and a rewritable index tape of exactly $\ilog(n)+1$ cells, where $k$ is a constant in $\nat^{+}$ and $n$ refers to the length of an input. The index tape indicates  the cell location (or address) of the input tape, specifying which qubit of a given input we wish to access. This QTM $M$ is formally expressed as $(Q,\Sigma,\{\rhd,\lhd\},\delta, q_0,Q_{acc},Q_{rej})$, where $Q$ is a finite set of inner states, $\Sigma=\{0,1\}$ is an alphabet,
$\rhd$ and $\lhd$ are endmarkers,
$\delta$ is a quantum transition function, $q_0$ ($\in Q)$ is the initial (inner) state, and $Q_{acc}$ (resp., $Q_{rej}$) ($\subseteq Q$) is a set of accepting (resp., rejecting) states.
We use an additional convention that the input tape and the index tape, and all the work tapes have two endmarkers to mark the usable areas of these tapes. This in fact helps the machine understand the ``size'' of a given input.
The  QTM $M$ begins with an input qustring $\qubit{\phi}$ given on the input tape marked by the endmarkers.
Let $\qubit{\phi} = \sum_{x\in\Sigma^n}\alpha_x\qubit{x}$ with $n\in\nat^{+}$ and $\alpha_x\in\complex$ for all $x$'s.

Recall from Section \ref{sec:numbers} the binary encoding of natural numbers.
Let $k_n=\ilog(n)$. To access the tape cell indexed by $m$, $M$ first produces the binary string $bin_{k_n}(m)$ on the index tape  with an auxiliary bit $b$ and then enters a designated \emph{query state}, say, $q_{query}$ in $Q$.
If the index tape contains $\qubit{bin_{k_n}(m)}\qubit{b}$, then this quantum state  becomes $\qubit{bin_{k_n}(m)}\qubit{b\oplus x_{(m)}}$ as the immediate consequence of the query, where $x_{(m)}$ is the $m$th input symbol of an  input $x\in\Sigma^+$.
With the proper use of work tapes, we assume that, while writing $\qubit{bin_{k_n}(m)}\qubit{b}$ until entering a query state, the tape head never moves to the left and, whenever it writes a non-blank symbol, it must move to the right. We remark that the number of queries and their timings may vary on all computation paths of $M$ on $\qubit{\phi}$.

Concerning a random access to an input, the classical polylogtime TMs take the following convention \cite{BIS90,Vol99}.  When the machine enters a query state with index-tape content $bin_{k_n}(m)$,
the input-tape head instantly jumps to the target bit $x_{(m)}$ of the input $x$ and reads it. After this query process, the index tape remains unerased
and the corresponding tape head does not automatically return to the start cell. Therefore, for the next query, the machine can save time to rewrite the same query word but it must overwrite a different query word over the previous query word on the index tape.

The quantum transition function $\delta$ takes a quantum transition of the form $\delta(q,\sigma,\tau_1,\tau_2,\ldots,\tau_c) = \sum_{r}\alpha_r \qubit{p,\xi,\eta_1,\eta_2,\ldots,\eta_c, d,d'_1,d'_2,\ldots,d'_c}$, where $r=(p,\xi,\eta_1,\eta_2,\ldots,\eta_c, d,d'_1,d'_2,\ldots,d'_c)$, which indicates that, if $M$ is in inner state $q$ reading $\sigma$ on the index tape and $(\tau_1,\tau_2,\ldots,\tau_c)$ on the $c$ work tapes, then, in a single step, with transition amplitude $\alpha_r$, $M$ changes $q$ to $p$, writes $\xi$ over $\sigma$ moving the input-tape head in direction $d\in\{-1,+1\}$, and writes $\eta_i$ over $\tau_i$ moving the $i$th work-tape head in direction $d'_i\in\{-1,+1\}$.
For practicality, we can limit the scope of transition amplitudes of QTMs. In this work, we allow only the following two forms of quantum transitions: $\delta(q,\sigma,\tau_1,\tau_2,\ldots,\tau_c) =
\cos\theta \qubit{p_1,\xi_1,\eta_{11},\eta_{12},\ldots,\eta_{1c}, d_1,d'_{11},d'_{12},\ldots,d'_{1c}}
+ \sin\theta \qubit{p_2,\xi_2,\eta_{21},\eta_{22},\ldots,\eta_{2c}, d_2,d'_{21},d'_{22},\ldots,d'_{2c}}$
and  $\delta(q,\sigma,\tau_1,\tau_2,\ldots,\tau_c) = e^{\imath\theta} \qubit{p,\xi,\eta_1,\eta_2,\ldots,\eta_c, d,d'_1,d'_2, \ldots,d'_c}$ (based on the universality of a set of quantum gates of \cite{BBC+95}).

A QTM is required to satisfy the so-called ``well-formedness condition'' (see, e.g., \cite{Yam99,Yam03}) to guarantee that the behaviors of the QTM obeys the laws of quantum physics.
A \emph{surface configuration} of $M$ on an input $x$ of length $n$ is an element $(q,u,r,w_1,s_1,w_2,s_2,\ldots,w_c,s_c)$ of the surface-configuration set $Q\times \{\triangleright u\triangleleft\mid u\in\{0,1,B\}^{k_n+1}\} \times [0,k_n+2]_{\integer} \times ( \{\triangleright w\triangleleft \mid w\in\{0,1,B\}^{k_n}\} \times [0,k_n+1]_{\integer})^c$, which depicts the circumstance where $M$ is in inner state $q$, scanning the $r$th bit of the index tape content $u$, and the $s_i$th bit of the $i$th work tape content $w_i$.
We call the space spanned by this set of surface configurations the \emph{surface configuration space} of $M$ on $x$.
The \emph{time-evolution operator} of $M$ on the input $x$ is a map from superpositions of surface configurations of $M$ on $x$ to other superpositions of surface configurations resulting by a single application of $\delta$ of $M$. A QTM $M$ is said to be \emph{well-formed} if its time-evolution operator of $M$ preserves the $\ell_2$-norm in the surface configuration space of $M$ on all inputs.

At every step, we first apply $\delta$ to a superposition of surface configurations and then perform a measurement in the halting (inner) states (i.e., either accepting states or rejecting states). If $M$ is not in a halting state, we move to the next step with the quantum state obtained by tracing out all halting surface configurations. Generally, QTMs can ``recognize'' not only sets of classical strings but also sets of qustrings.

For convenience, we modify $M$ slightly so that $M$ produces $1$ (resp., $0$) in the first cell of the first work tape when $M$ enters a designated final state (in place of accepting/rejecting states). We say that $M$ \emph{produces $b$ with probability $\gamma$} if, after $M$ halts, we observe $b$ as an outcome of $M$ with probability $\gamma$.

We call $M$ \emph{polylogarithmic time} (or \emph{polylogtime}) if we force $M$ to stop its application of $\delta$ after $O(\log^k{n})$ steps for a certain fixed constant $k\in\nat^{+}$, not depending on the choice of inputs. We do not require all computation paths to terminate within the specified time.


Let us consider a language $L$ over $\{0,1\}$ that satisfies the following condition:  there are a constant $\varepsilon\in[0,1/2)$ and a polylogtime QTM $M$ whose amplitude set is $K$ such that (i) for any input $x\in L$,  $M$ accepts $x$ with probability at least $1-\varepsilon$ and (ii) for any $x\notin L$,  $M$ rejects $x$ with probability at least $1-\varepsilon$. These conditions are referred to as \emph{bounded-error probability}. The notation $\bqpolylogtime_{K}$
denotes the collection of all such languages $L$.

\subsection{Computational Complexity of Polylogtime QTMs}\label{sec:BQLOGTIME}

We begin with a discussion on the computational complexity of polylogtime QTMs.
Remember that input tapes of these machines are read-only and accessed by way of writing cell locations onto index tapes.

In the classical setting, the notation $\mathrm{DLOGTIME}$ was used in \cite{BIS90}  to express the family of all languages recognized by logtime deterministic TMs (or succinctly, DTMs). Similarly, we denote the nondeterministic variant of $\mathrm{DLOGTIME}$ by $\nlogtime$. With the use of classical probabilistic TMs (or  PTMs) in place of DTMs,
we say that a PTM $M$ recognizes a language $L$ \emph{with unbounded-error probability} if, for any $x\in L$, $M$ accepts it with probability more than $1/2$ and, for any $x\notin L$, $M$ rejects with probability at least $1/2$. We further define $\ppolylogtime$ by unbounded-error polylogtime PTMs.

\begin{theorem}\label{class-inclusion}
$\bqpolylogtime_{\bar{\rational}} \subsetneqq \ppolylogtime$ and $\nlogtime\nsubseteq \bqpolylogtime_{\complex}$.
\end{theorem}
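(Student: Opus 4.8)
The plan is to exhibit a \emph{single} language that simultaneously witnesses both separations, namely the ``unstructured search'' (OR) language $L_{OR} = \{x\in\{0,1\}^* : \#_1(x)\geq 1\}$ of all strings containing at least one $1$. I would (a) prove the inclusion $\bqpolylogtime_{\bar{\rational}} \subseteq \ppolylogtime$, (b) show $L_{OR}\in\nlogtime\cap\ppolylogtime$, and (c) show $L_{OR}\notin\bqpolylogtime_{\complex}$. Items (c) and (b) together immediately give $\nlogtime\nsubseteq\bqpolylogtime_{\complex}$; and since $\bar{\rational}\subseteq\complex$ forces $\bqpolylogtime_{\bar{\rational}}\subseteq\bqpolylogtime_{\complex}$, item (c) also yields $L_{OR}\notin\bqpolylogtime_{\bar{\rational}}$, so that (a) and (b) make $L_{OR}$ a witness for the strictness of $\bqpolylogtime_{\bar{\rational}}\subsetneqq\ppolylogtime$.

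For the inclusion (a)---which I expect to be the main obstacle---I would adapt the Adleman--DeMarrais--Huang argument that $\bqp\subseteq\pp$ to the random-access polylogtime setting. Fix a bounded-error polylogtime QTM $M$ with amplitudes in $\bar{\rational}$ running in $T(n)=\order{\log^k n}$ steps. Because every transition takes one of the two permitted forms (a $\cos\theta/\sin\theta$ two-way branch or an $e^{\imath\theta}$ phase), a computation path is specified by $\order{T(n)}$ binary choices and its amplitude is a product of $T(n)$ fixed rationals, hence a rational of $\order{\log^k n}$ bit-length. Writing the probability of observing output $1$ as a sum of squared amplitudes over accepting halting surface configurations, and expanding each square as $|\alpha_c|^2=\sum_{P,P'\to c}\mathrm{amp}(P)\overline{\mathrm{amp}(P')}$, I would construct a PTM $M'$ that guesses a pair $(P,P')$ of length-$\le T(n)$ paths (using $\order{T(n)}$ random bits), simulates both while issuing at most $T(n)$ input queries each, verifies that they survive the intermediate halting measurements up to the same time $t$ and terminate in the same accepting surface configuration, and finally encodes the real part of $\mathrm{amp}(P)\overline{\mathrm{amp}(P')}$ into its accept/reject bias by auxiliary coin flips. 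The imaginary contributions cancel because the total is the real number $p_x$, so $\mathrm{Prob}[M'\text{ accepts}]-\tfrac12$ carries the same sign as $p_x-\tfrac12$; the bounded error of $M$ then guarantees that $M'$ decides $L(M)$ with unbounded error. The delicate points are verifying that guessing and comparing the $\order{\log^k n}$-size surface configurations, together with the $\order{\log^k n}$-bit rational arithmetic, all fit within polylogarithmic time, and that intermediate measurements are correctly handled (equivalently, deferred to the end).

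For (b), $L_{OR}\in\nlogtime$ because a nondeterministic logtime machine guesses an address $bin_{k_n}(m)$ on its index tape, queries $x_{(m)}$, and accepts iff it reads $1$; and $L_{OR}\in\ppolylogtime$ because a PTM can pick a uniformly random $\ilog(n)$-bit address $m$, query $x_{(m)}$ (treating an out-of-range address as $0$), accept outright if $x_{(m)}=1$, and otherwise flip a fair coin, yielding acceptance probability $\tfrac12+\tfrac{\#_1(x)}{2\cdot 2^{\ilog(n)}}$, which exceeds $\tfrac12$ exactly when $x\in L_{OR}$. For (c), I would invoke the optimality of Grover search (the quantum query lower bound of Bennett, Bernstein, Brassard, and Vazirani): deciding $\mathrm{OR}$ on $n$ bits with bounded error requires $\Omega(\sqrt n)$ queries to the oracle $\qubit{m}\qubit{b}\mapsto\qubit{m}\qubit{b\oplus x_{(m)}}$. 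In our model each input query consumes at least one step, so any polylogtime QTM---with arbitrary complex amplitudes and arbitrary polylog-size workspace, and whose intermediate measurements may be deferred without increasing the query count---issues only $\order{\log^k n}=o(\sqrt n)$ queries, contradicting the lower bound for all large $n$. Hence $L_{OR}\notin\bqpolylogtime_{\complex}$, which completes both separations.
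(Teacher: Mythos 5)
Your proposal is correct and follows essentially the same route as the paper: the inclusion $\bqpolylogtime_{\bar{\rational}}\subseteq\ppolylogtime$ via a path-pair/GapP-style simulation of the QTM by an unbounded-error PTM (the paper outsources this to Lemma~8 of the cited one-tape QTM simulation result rather than re-deriving the Adleman--DeMarrais--Huang argument), the membership $OR\in\nlogtime\subseteq\ppolylogtime$ by guessing an address and querying one bit, and the exclusion $OR\notin\bqpolylogtime_{\complex}$ by reducing a polylogtime QTM to a black-box quantum query algorithm with only polylogarithmically many queries and invoking the $\Omega(\sqrt{n})$ query lower bound for $OR$ (the paper states this reduction as a separate claim and phrases the lower bound primarily for parity, handling $OR$ ``similarly''). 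The delicate points you flag---deferring intermediate measurements, polylog-bit rational arithmetic within polylog time, and counting one step per query---are exactly the ones the paper's cited lemma and its query-model claim are meant to absorb, so no genuine gap remains.
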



For the proof of Theorem \ref{class-inclusion}, nevertheless, we first verify the following impossibility result of the parity function and the OR function by polylogtime QTMs, where the parity function,  $Parity$, is defined by $Parity(x) = \bigoplus_{i=1}^{n}x_i$ and the OR function, $OR$, is defined by $OR(x)=\max\{x_i\mid i\in[n]\}$ for any number $n\in\nat^{+}$ and any $n$-bit string $x=x_1x_2\ldots x_n$.

\begin{lemma}\label{polylogtime-QTM-parity-OR}
The parity function and the OR function cannot be computed by any polylogtime QTM  with bounded-error probability.
\end{lemma}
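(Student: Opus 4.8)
The plan is to reduce a polylogtime QTM to a bounded-query quantum algorithm and then invoke the \emph{polynomial method} together with known approximate-degree lower bounds for $Parity$ and $OR$. First I would observe that, by definition, a polylogtime QTM $M$ on inputs of length $n$ applies its transition function for at most $\order{\log^k{n}}$ steps for some fixed $k$. By the query mechanism described in Section \ref{sec:polylogtime-QTM}, each access to the input is realized by entering the query state with $\qubit{bin_{k_n}(m)}\qubit{b}$ on the index tape, which is replaced by $\qubit{bin_{k_n}(m)}\qubit{b\oplus x_{(m)}}$; this is exactly the standard bit-flip oracle $O_x:\qubit{m}\qubit{b}\mapsto \qubit{m}\qubit{b\oplus x_{(m)}}$. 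Since at most one query is issued per step, $M$ issues at most $T=\order{\log^k{n}}$ queries to the input $x$, while all remaining transitions act only on the (polylog-sized) index and work tapes and are therefore $x$-independent unitaries.

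Second, I would turn $M$ into a clean $T$-query unitary algorithm so that the polynomial method applies. The only delicate point is that the model performs a halting measurement at every step; I would first pad the computation to a fixed number of steps and then apply the principle of deferred measurement, recording each intermediate halting outcome in a fresh ancilla so that all measurements are postponed to the end. The result is a purely unitary algorithm alternating $x$-independent unitaries with at most $T$ applications of $O_x$, followed by a single final measurement whose ``accept'' probability equals the original acceptance probability of $M$. The polynomial method of Beals, Buhrman, Cleve, Mosca, and de Wolf then gives that this acceptance probability is a real multilinear polynomial $p(x_1,\dots,x_n)$ in the input bits of degree at most $2T=\order{\log^k{n}}$; crucially, this degree bound is insensitive to the dimension of the ancilla/workspace, so the superpolynomial size $2^{\order{\log^k{n}}}$ of the surface-configuration space causes no difficulty.

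Third, I would close the argument by contradiction using approximate degree. If $M$ computes $f\in\{Parity,OR\}$ with bounded error, then by the bounded-error condition ($\varepsilon<1/2$) the polynomial $p$ satisfies $|p(x)-f(x)|\le\varepsilon$ for every $x\in\{0,1\}^n$, whence $\deg p\ge \wdeg{f}$ and therefore $2T\ge \wdeg{f}$. I would then invoke the standard facts $\wdeg{Parity}=n$ (parity has full exact degree, and flipping any bit flips the output, so no low-degree polynomial can $\varepsilon$-approximate it) and $\wdeg{OR}=\Omega(\sqrt{n})$ (the Grover/BBBV lower bound). Since $2T=\order{\log^k{n}}$ grows more slowly than both $n$ and $\sqrt{n}$, the inequality $2T\ge\wdeg{f}$ fails for all sufficiently large $n$, a contradiction. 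The main obstacle I anticipate is not the query lower bounds themselves, which are classical, but the faithful reduction in the second step: justifying that the per-step halting measurements and the index-tape addressing convention of the model collapse to exactly $T$ standard oracle queries, so that the acceptance probability is genuinely a degree-$\le 2T$ polynomial rather than a ratio introduced by renormalizing subnormalized post-measurement states.
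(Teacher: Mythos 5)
Your proposal is correct and follows essentially the same route as the paper: the paper likewise reduces a polylogtime QTM to a black-box quantum query algorithm making $O(\log^k n)$ queries (its Claim~\ref{black-box-query}) and then invokes the $n/2$ query lower bound for $Parity$ (and the $\Omega(\sqrt{n})$ bound for $OR$) from Beals et al., which is exactly the polynomial-method bound you unwind explicitly. Your added care about deferring the per-step halting measurements is a worthwhile refinement of a point the paper's proof glosses over, but it does not change the argument's structure.
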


\begin{proof}
This proof comes from a result on the quantum query complexity gap between quantum and deterministic query complexities of the parity function. Assume that a polylogtime QTM, say, $M$ computes the parity function with bounded-error probability.

We encode $M$'s surface configuration $conf$ into a ``single'' quantum state $\qubit{\phi}$. As done in Lemma \ref{WH-application}, it is possible to  produce in polylog time a superposition of all locations of the input-tape cells by repeatedly applying $WH$ to $\qubit{0^{\ilog(n)}}$. This helps us access all input bits quantumly at once with the equal probability. Since the input tape is read-only, this type of input access can be realized as a \emph{(black-box) quantum query model}\footnote{This model is sometimes called a quantum network. See, e.g., \cite{BBC+01}.}
used in the study of quantum query complexity. Refer to, e.g., \cite{Amb02,BBC+01,NY04}.

In such a (black-box) quantum query model, we run the following quantum algorithm on a binary input $x$ of length $n$. We prepare a series of unitary transformations $U_0,U_1,\ldots,U_{t-1},U_t$ and a special oracle transformation\footnote{In \cite{Amb02}, for example, $Q_x$ is defined to change $\qubit{bin_k(m)}\qubit{\phi}$ to $(-1)^{x_{(m)}} \qubit{bin_k(m)}\qubit{\phi}$. This model is in essence equivalent to our current definition by a simple computation shown as follows. Let $\qubit{\xi}=\qubit{bin_k(m)}$. Starting with $\qubit{\xi}\qubit{b}$, swap between $\qubit{\xi}$ and $\qubit{b}$, apply $WH$, apply $CQ_x$ (Controlled-$Q_x$), apply $WH$, and swap back the registers, where $CQ_x(\qubit{0}\qubit{\phi}) =\qubit{0}\qubit{\phi}$ and $CQ_x(\qubit{1}\qubit{\phi}) = (-1)^{x_{(m)}}\qubit{1}\qubit{\phi}$. We then obtain $\qubit{\xi}\qubit{b\oplus x_{(m)}}$.}
$Q_x$ that changes $\qubit{bin_{k_n}(m)}\qubit{b}\qubit{\phi}$ to $\qubit{bin_{k_n}(m)}\qubit{b\oplus x_{(m)}}\qubit{\phi}$, where $k_n=\ilog(n)$ and $x_{(m)}$ is the $m$th bit of $x$.
We start with the initial quantum state $\qubit{\psi_0} = \qubit{0^m}$. We then compute
$U_tQ_xU_{t-1}Q_x\cdots U_1Q_xU_0\qubit{\psi_0}$. Finally, we measure the resulting quantum state in the computational basis.
The number $t$ indicates the total number of queries made by this algorithm on each computation path.

\begin{claim}\label{black-box-query}
Each polylogtime QTM can be simulated by a (black-box) quantum query model with $O(\log^k{n})$ queries for an appropriate constant $k\in\nat^{+}$.
\end{claim}

\begin{proof}
Recall that a QTM has an read-only input tape, which holds an input string.
Whenever a QTM makes a query on the $i$th position by entering a unique query state $q_{query}$, the machine instantly receives the information on the $i$th bit of a given input string written on the input tape.
We view this input tape as an \emph{oracle} of a (black-box) quantum query model and we further view this entire query process of the QTM as a procedure of forming a superposition of query words indicating input-bit positions and receiving their answers from the oracle.

We first construct a unitary transformation to simulate a single non-query transition of $M$. Recall that, when $M$ enters $q_{query}$, it changes $\qubit{bin_{k_n}(m)}\qubit{b}\qubit{\phi}$ to $\qubit{bin_{k_n}(m)}\qubit{b\oplus x_{(m)}}\qubit{\phi}$ in a single step.  To translate $M$'s query process, we generate  $\qubit{bin_{k_n}(1)}\qubit{0}$ in an extra register. If $M$ is in the inner state $q_{query}$, then we swap between this register and the register containing the content of $M$'s index tape. Otherwise, we do nothing.
We then apply $Q_x$ to change $\qubit{bin_{k_n}(m)}\qubit{b}\qubit{\phi}$ to $\qubit{bin_{k_n}(m)}\qubit{b\oplus x_{(m)}}\qubit{\phi}$.
Notice that this process does not alter the inner state of $M$. After applying $Q_x$, we swap back the two registers exactly when $M$'s inner state is $q_{query}$ and then we follow the transition of $M$'s inner state.

Note that the given QTM makes only $O(\log^k{n})$ queries because it runs in $O(\log^k{n})$ time. Therefore, we can transform this QTM to a query model of $O(\log^k{n})$ queries.
\end{proof}

By Claim \ref{black-box-query}, the parity function requires only $O(\log{n})$ queries in the (black box) quantum query model. However, it is shown in \cite{BBC+01} that, for the parity function of $n$ Boolean variables, $n/2$ queries are necessary in the bounded-error quantum query model (while $n$ queries are necessary in the deterministic query model). This is obviously a contradiction. The case of the OR function can be similarly handled.
\end{proof}

Let us return to Theorem \ref{class-inclusion}. Using Lemma \ref{polylogtime-QTM-parity-OR}, we can prove the theorem as described below. The core of its proof is founded on a simulation result of one-tape linear-time QTMs in \cite[Section 8]{TYL10}.

Shown in \cite[Lemma 8]{TYL10} is how to simulate a one-tape well-formed stationary QTM running in linear time on an appropriate one-tape probabilistic Turing machine (or a PTM) in linear time. In a similar vein, we can simulate polylogtime QTMs on polylogtime PTMs.

\begin{proofof}{Theorem \ref{class-inclusion}}
Let us take an arbitrary language $L$ in $\bqpolylogtime_{\bar{\rational}}$ and consider a polylogtime QTM $M$ that recognizes $L$ with bounded-error probability. We intend to show that $L$ falls in $\ppolylogtime$.
We first modify $M$ in the following way.
We prepare two extra work tapes. One of them is used as an \emph{internal clock} by moving a tape head always to the right. To avoid any unwanted interference after a computation halts prematurely, we use the other extra tape as a ``garbage tape'', to which $M$ dumps all information produced at the time of entering halting states, so that $M$ continues its operation without actually halting.
Lemma 8 of \cite{TYL10} shows the existence of a constant $d\in\nat^{+}$ and an NTM $N$ such that $d^{\mathrm{Time}_{M}(x)}\cdot p_{M}(x) = \# N(x) - \#\overline{N}(x)$ for every $x$, where $\#N(x)$ (resp., $\#\overline{N}(x)$) denotes the total number of accepting (resp., rejecting) computation paths of $N$ on input $x$. This equality holds for polylogtime machines.
The desired polylogtime PTM is obtained from $N$ by assigning an equal probability to all nondeterministic transitions.

The OR function can be computed by the polylogtime NTM that nondeterministically writes a number, say $i$ on an index tape, makes a query for the $i$th bit $x_{(i)}$ of an input $x$, and accepts exactly when $x_{(i)}$ is $1$. Thus, the OR function belongs to $\nlogtime$.
The separation between $\bqpolylogtime_{\bar{\rational}}$ and $\mathrm{NLOGTIME}$ comes from Lemma \ref{polylogtime-QTM-parity-OR}.
Since $\ppolylogtime$ includes $\nlogtime$, we obtain the desired separation between $\bqpolylogtime_{\bar{\rational}}$ and $\ppolylogtime$.
\end{proofof}

\subsection{Comparison between EQS and BQPOLYLOGTIME}\label{sec:simulation}

In what follows, we discuss a close relationship between quantum functions definable within $EQS$ and quantum functions computable by polylogtime QTMs despite numerous differences between $EQS$ and polylogtime QTMs. One such difference is that input tapes of QTMs are read-only and thus inputs are not changeable, whereas quantum functions in $EQS$ can freely modify their inputs.
In the following two theorems (Theorems \ref{QTM-simulation} and \ref{converse-simulation}), however,  we can establish the ``computational'' equivalence between  polylogtime QTMs and quantum functions in $EQS$.

To make the later simulation process simpler, we first modify a polylogtime QTM  so that it uses the binary alphabet on  an index tape as well as all work tapes by way of encoding each non-binary tape symbol into a binary one using an appropriately chosen encoding scheme.
This modification makes it possible to assume that the QTM should hold superpositions  $\qubit{\phi}$ of binary strings  on its input tape and its work tapes.
For convenience, a QTM that satisfies these conditions is called \emph{normalized}.

\begin{theorem}\label{QTM-simulation}
Any normalized polylogtime QTM $M$ with $c$ work tapes can be simulated by an appropriate quantum function $F$ in $EQS$ in the sense that, for any $b\in\{0,1\}$, $\alpha\in[0,1]$, and $\qubit{\phi}\in\Phi_{\infty}$,
$M$ takes  $\qubit{\phi}$ as an input and finally produces $b$ with probability $\alpha$ exactly when $\|\measure{b}{\psi_{F,\xi_{\phi}}}\|^2=\alpha$ holds, where $\qubit{\xi_{\phi}} = \qubit{\hat{S}} \qubit{\widetilde{B^k}}^{\otimes (c+1)} \qubit{\hat{2}} \otimes  \qubit{\phi}$ with $k=\ilog(\ell(\qubit{\phi}))$ and  $\qubit{\psi_{F,\xi_{\phi}}} = F(\qubit{\xi_{\phi}})$.
\end{theorem}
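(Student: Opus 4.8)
The plan is to regard the qustring $\qubit{\xi_\phi}=\qubit{\hat S}\,\qubit{\widetilde{B^k}}^{\otimes(c+1)}\,\qubit{\hat 2}\otimes\qubit{\phi}$ as a single-register encoding of a surface configuration of $M$ together with its read-only input. Here the leading $\hat S$ marks the start of the configuration block, the $c+1$ encoded blanks $\widetilde{B^k}$ play the roles of the index tape and the $c$ work tapes (each of length $k=\ilog(\ell(\qubit{\phi}))$, initialised to blanks and equipped with a head marker), the $\hat 2$ closes the block, and $\qubit{\phi}$ carries the input exactly as it sits on $M$'s read-only input tape. Because $c$ is a constant and each tape block has length $O(k)=O(\log\ell(\qubit{\phi}))$, the whole configuration block is ``short'' relative to $\qubit{\phi}$, so the realization results of the previous section (in particular Proposition~\ref{limited-recursion} and the $\hat{T}$-repetition proposition) apply. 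I would then build a single quantum function $U_{step}\in\widehat{EQS}$ that enacts one application of $M$'s time-evolution operator on the encoded configuration, iterate it the correct polylogarithmic number of times, and prove by induction on the step count that the $EQS$ state after $t$ iterations is precisely the encoding of $M$'s superposition of surface configurations after $t$ steps. The claimed identity $\|\measure{b}{\psi_{F,\xi_\phi}}\|^2=\alpha$ then falls out by reading the designated output cell.

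\textbf{Simulating one step.} The step function decomposes into three pieces, all definable by Schemes~I--V. First, the \emph{finite control}: the inner state and the symbols currently scanned on the index and work tapes occupy a constant number of fixed cells of the configuration block (head markers keep them in canonical position), so $\delta$ acts only on finitely many leading qubits; this is realised by $Branch_k$-style case analysis together with the initial functions $ROT_\theta$, $PHASE_\theta$ and $GPS_\theta$, which reproduce the two permitted transition forms $\cos\theta\,\qubit{\cdot}+\sin\theta\,\qubit{\cdot}$ and $e^{\imath\theta}\qubit{\cdot}$ exactly. Second, \emph{head motion}: moving a tape head by one cell is a boundary shift of an $O(\log\ell(\qubit{\phi}))$-length block, which I would implement with the section-wise functions $SecSWAP^{(k)}_{i,j}$ and $SecMOVE^{(k)}_{i,j}$ of Lemma~\ref{lemma:special-combi}. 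Third, the \emph{query}: when the control is in $q_{query}$, the step must replace the auxiliary bit $b$ by $b\oplus x_{(m)}$, where $m$ is the number written on the index tape, which is exactly the behaviour of $BinSearch$ and of the derived function $Bit$ (Corollary~\ref{Bit-function}), reading the $m$-th symbol of the input held in $\qubit{\phi}$ without disturbing it. Following the garbage-tape/clock device used in the proof of Theorem~\ref{class-inclusion}, I would first modify $M$ so that it never halts early but dumps any would-be output into a dedicated cell and idles thereafter; the evolution is then purely unitary for its whole running time, so $U_{step}$ needs no use of $MEAS[\cdot]$ and lives in $\widehat{EQS}$, and only the final output cell is measured, matching the $\measure{b}{\cdot}$ on the right-hand side.

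\textbf{Iterating for the right number of steps.} Since $M$ runs for $O(\log^{k_0}\ell(\qubit{\phi}))$ steps for some constant $k_0$ fixed by $M$, I would obtain $\Theta(\log^{k_0}\ell(\qubit{\phi}))$ applications of $U_{step}$ by nesting $k_0$ levels of logarithmic repetition: one level of Scheme~V (the $LCompo[\cdot]$ construction), or equivalently the $\hat{T}$-repetition proposition, yields a factor of $\ilog(\ell(\qubit{\phi}))$ applications, and composing $k_0$ such levels yields the desired $\log^{k_0}$ iterations. As $k_0$ is a constant, this is a finite construction history, so the resulting $F$ is a bona fide $EQS$ function. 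Correctness is then a routine induction: using Proposition~\ref{limited-recursion} to justify that $U_{step}$ acts on the short configuration block while leaving $\qubit{\phi}$ available for future queries, one checks that after $t$ steps the amplitude on each encoded surface configuration equals the corresponding amplitude in $M$'s state, whence the output-bit statistics agree.

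\textbf{Main obstacle.} I expect the crux to be the faithful \emph{unitary} realisation of a single QTM step, specifically the simultaneous bookkeeping of head positions on the $\log$-length tape blocks together with the finite-control transition, carried out so that the amplitudes are reproduced \emph{exactly} rather than approximately; coupling this with the query step, which must read $\qubit{\phi}$ through $BinSearch$ while keeping $\qubit{\phi}$ intact for later queries and keeping the configuration block short enough for Proposition~\ref{limited-recursion} to apply, is the delicate part. Getting the iteration depth and the non-halting (clock/garbage) modification to interact correctly with the length-driven conditionals of Schemes~IV and~V is the remaining bookkeeping hurdle.
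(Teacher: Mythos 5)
Your proposal is correct and follows essentially the same route as the paper's proof: encode the index/work tapes and finite control as a short configuration block prefixed to $\qubit{\phi}$, realize one QTM step via section-wise swaps and $Branch$-style case analysis on the control (with $Bit$/$BinSearch$ handling queries), invoke Proposition~\ref{limited-recursion} to operate on the short block, and iterate polylogarithmically many times via nested applications of Scheme~V. The only cosmetic difference is your explicit clock/garbage-tape modification to keep the evolution unitary until the end, which the paper handles implicitly.
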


\begin{proof}
Let $M$ denote any polylogtime QTM equipped with a read-only input tape, a rewritable index tape, and multiple rewritable work tapes.
We further assume that $M$ is normalized.
For readability, we hereafter deal with the special case where $M$ has a single work tape. A general case of $c$ work tapes can be handled in a similar but naturally extended way.
Assume that $M$'s input tape holds a superposition $\qubit{\phi}$ of binary inputs with $\ell(\qubit{\phi}) \geq 4$ and that $M$ runs in time at most $\ilog(\ell(\qubit{\phi}))^{e}$ for a fixed constant $e\geq1$.
For convenience, let  $k=\ilog(\ell(\qubit{\phi}))$.
We denote by $Q$ the set of all inner states of $M$. Since $Q$ is finite, without loss of generality, $Q$ assumed to have the form $\{bin_{3t'}(i)\mid i\in[2^{3t'}]\}$ for an appropriate constant $t'\in\nat^{+}$. We set the initial inner state $q_0$ to be $bin_{3t'}(1)$. For each inner state $q\in Q$, since $q$ is expressed as a binary string, we can encode it into the string $\widetilde{q}^{(-)}$ defined in Section \ref{sec:different-object}.

We treat the content of each tape (except for the input tape) as a code block of the desired quantum function $F$.  We maintain the contents of the index tape and of the work tape as a part of two appropriate qustrings. We intend to simulate each move of $M$ on $\qubit{\phi}$ by applying an adequately defined quantum function.

Since $M$'s computation is a series of surface configurations of $M$, we thus need to ``express''  such a surface configuration using a single quantum state.
Initially, $M$'s index tape holds $B^{k}{\lhd}$ and $M$'s single work tape holds $B^{kt}{\lhd}$ for a fixed constant $t\in\nat^{+}$.
Let $w$ and $z$ respectively denote the contents of the index tape and of the work tape without the right endmarker ${\lhd}$ and let $q$ be any inner state of $M$.
Associated with $(q,w,z)$, we describe $M$'s current surface configuration as $q\# w_1H w_2\# z_1 H z_2$ with $w=w_1w_2$ and $z=z_1z_2$ by including a designated symbol $H$ and $M$'s inner state $q$, where $H$ is used to indicate  the locations of the index-tape head and of the work-tape head, which are scanning the leftmost symbols of $w_2$ and $z_2$, respectively.
Assume that these two tape heads are respectively scanning tape symbols $\eta$ and $\sigma$ on the index tape and the work tape, that is, $w_2=\eta x_2$ and $z_2=\sigma y_2$. For technical reason, we slightly modify the above description of a surface configuration and express it as
$w_1H B x_2\# z_1 H B y_2\# q\eta\sigma$ by inserting the extra symbol $B$.
To refer to this special form, we call it a \emph{modified (surface) configuration}. In particular, the suffix $q\eta\sigma$ is called a \emph{transition status}.

By Lemma \ref{code-skip-application}, it suffices for us to focus on each block of encoded tape content.
The modified configuration $w_1H B x_2\# z_1 H B y_2\# q\eta\sigma$ is encoded into the quantum state $\qubit{\widetilde{w_1}^{(-)} \hat{H} \hat{B} \widetilde{x}}
\qubit{\widetilde{z_1}^{(-)} \hat{H} \hat{B}  \widetilde{y}}\qubit{\widetilde{q}^{(-)} \hat{\eta}\hat{\sigma}}\qubit{\hat{\dashv}}$. We conveniently refer to it as the \emph{encoded (surface) configuration}. Note that $|\widetilde{q}^{(-)}\hat{\eta}\hat{\sigma}|=6t'+6=6(t'+1)$.
In fact, the modified initial configuration is of the form $H B^k\# H B^{kt} \# q_0BB$ and its encoding is of the form $\qubit{\hat{H}\widetilde{B^k}} \qubit{\hat{H} \widetilde{B^{kt}}} \qubit{\widetilde{q_0}^{(-)} \hat{B}\hat{B}}\qubit{\hat{\dashv}}$.

A \emph{run} of $M$, which covers from the initial surface configuration to certain halting surface configurations, can be simulated  using  the fast quantum recursion. To explain this simulation, for convenience, we split each move of $M$ into three separate ``phases'': (1) a tape content change, (2) an input access by a query, and (4) an output production.
We consider these three different phases of $M$ separately.  In phase (3),  in particular, we will use Scheme V to repeat phases (1) and (2) $\ilog(\ell(\qubit{\phi}))$ times.
In the end, we will combine (1)--(4) into a single quantum function.

(1) The first case to consider is that $M$ modifies multiple tapes
(except for the input tape) by a single move.
We begin with paying our attention to the modification of the index-tape symbol and describe how to simulate this tape-symbol modification.
In a single step, as our convention, a tape head firstly changes a tape symbol and secondly moves to an adjacent cell.
In other words, $M$ locally changes $w_1 H w_2\# z_1H z_2 \# q\eta\sigma$ to
its successor $w'_1Hw'_2 \# z'_1H z'_2 \# q'\eta'\sigma'$ by applying $\delta$.
This process can be expressed by a single quantum function defined as follows.

Let us consider $M$'s single transition of the form $\delta(q,\eta_2,\sigma_2) = \sum_{r}\alpha_r \qubit{p,\xi_2,\tau_2,d,d'}$, where $r$ refers to  $(q,\eta_2,\sigma_2,p,\xi_2,\tau_2,d,d')$.
This transition means that the index-tape head changes $\eta_2$ to $\xi_2$ and moves in direction $d$ and that the work-tape head changes $\sigma_2$ to $\tau_2$ and moves in direction $d'$.
To simulate this transition, it suffices to focus on four consecutive cells whose second cell is being scanned by the tape head. For simplicity, we call such a series an \emph{$H$-block}.
Let $\eta_1H B\eta_3$ and $\sigma_1H B\sigma_3$ denote two $H$-blocks and let $q\eta_2\sigma_2$ denote the current transition status.

(a) We make an application of Scheme IV in the following fashion. Let $v=(q,\eta_2,\sigma_2,p,\xi_2,\tau_2,d,d')$. We first change $\qubit{\widetilde{q}^{(-)}\hat{\eta_2}\hat{\sigma_2}}$ to $\qubit{\widetilde{p}^{(-)}\hat{B}\hat{B}}$ by remembering $(\xi_2,\tau_2,d,d')$ in the form of different quantum functions $g_u$, which are controlled by $Branch[\{g_u\}_{u}]$.
We then search for an $H$-block of the form  $\qubit{\hat{\eta_1}\hat{H}\hat{B}\hat{\eta_3}}$ and
change it to $\qubit{\hat{H}\hat{\eta_1}\hat{\xi_2}\hat{\eta_3}}$ if $d=-1$ and to $\qubit{\hat{\eta_1}\hat{\xi_2}\hat{H}\hat{\eta_3}}$ if $d=+1$.
Similarly, we change $\qubit{\hat{\sigma_1}\hat{H}\hat{B}\hat{\sigma_3}}$ according to the value of $d'$.
These changes can  be made by an appropriate quantum function, say, $F_{v,d}$.
This quantum function $F_{v,d}$ is realized as follows.

In the case of $\tau_2\in\{0,1\}$, we introduce $f_{B,\tau_2}$ that satisfies $f_{B,\tau_2}(\qubit{\hat{B}}\qubit{\hat{H}}) = \qubit{\hat{\tau_2}}\qubit{\hat{H}}$.
This quantum function $f_{B,\tau_2}$ is constructed as
$f_{B,\tau_2}\equiv Branch_3[\{g'_u\}_{u\in\{0,1\}^3}]\circ SecSWAP^{(3)}_{1,2} \circ Branch_3[\{g_u\}_{u\in\{0,1\}^3}]$, where $g_{010}(\qubit{100})=\qubit{00\tau_2}$ and $g_u=I$ for all other $u$'s,  $g'_{000}=g'_{001}$, $g'_{00}(\qubit{010})=\qubit{100}$, $g'_{000}(\qubit{100})=\qubit{010}$, and $g'_{000}(\qubit{x})=\qubit{x}$ for all other $x$'s, and $g'_u=I$ for all other $u$'s.
The remaining cases are similarly handled.
Now, let us define $\hat{G}$ to be
$SecSWAP^{(3)}_{1,3}$, which transforms  $\qubit{\alpha}\qubit{\hat{H}} \qubit{\beta}$ to $\qubit{\beta}  \qubit{\hat{H}}\qubit{\alpha}$ for any $\alpha,\beta\in\{0,1\}^3$.
Finally, when $d=-1$, we define $F_{v,d}$ to be $SecSWAP^{(3)}_{2,3} \circ SecSWAP^{(3)}_{1,2} \circ f_{B,\tau_2}  \circ \hat{G}$, which satisfies  $F_{v,d}(\qubit{\hat{\sigma}_1}\qubit{\hat{H}} \qubit{\hat{B}} \qubit{\hat{\sigma}_3}) = \qubit{\hat{H}} \qubit{\hat{\sigma}_1}\qubit{\hat{\tau}_2} \qubit{\hat{\sigma_3}}$.
In a similar way, when $d=+1$, we define
$F_{v,d} \equiv SecSWAP^{(3)}_{1,2} \circ SecSWAP^{(3)}_{2,3}\circ f_{B,\tau_2}  \circ \hat{G}$, which transforms $\qubit{\hat{\sigma}_1}\qubit{\hat{H}} \qubit{\hat{B}} \qubit{\hat{\sigma}_3}$ to $\qubit{\hat{\sigma}_1}\qubit{\hat{\tau}_2} \qubit{\hat{H}} \qubit{\hat{\sigma_3}}$.
A similar treatment works for the simulation of the work-tape head.

Notice that $M$ uses only two forms of quantum transitions. These transitions can be correctly simulated by Items 1)--3) of Scheme I.
Let $r_0=\hat{\dashv}$.
Proposition \ref{limited-recursion} makes it possible,  under a certain condition, to make a quantum function definable in a recursive fashion. We first define a quantum function $K$ by setting  $K(\qubit{r_0}) = \qubit{r_0}$ and $K(\qubit{xr_0}) = \sum_{u:|u|=9} h(\qubit{u}\otimes K(\measure{u}{xr_0}))$, where $h(\qubit{u}\qubit{wr_0}) = \sum_{v,d}\alpha_{v,d} F_{v,d}(\qubit{u}\qubit{wr_0})$ if $u=\hat{\sigma}_1\hat{H}\hat{\sigma}_2$, and $h(\qubit{u}\qubit{wr_0}) = \qubit{u}\qubit{wr_0}$ otherwise.  The proposition then guarantees the existence of a quantum function that mimics $K$ in the presence of the large-size qustring $\qubit{\phi}$. In the proof of the proposition, such a quantum function is constructed with the use of Scheme IV.
It is important to note that, its ground (quantum) functions are all query-independent. Thus, Scheme IV used here is also query-independent.

(b) Secondly, we apply $CodeREP^6$ to move the last six qubits  $\qubit{\hat{B}\hat{B}}$ obtained by (a) to the front.

(c) We then make the second application of Scheme IV.  We change $\qubit{\hat{H}\hat{\eta_1}\hat{\xi_2}\hat{\eta_3}}$ (resp., $\qubit{\hat{\eta_1}\hat{\xi_2}\hat{H}\hat{\eta_3}}$) to
$\qubit{\hat{H}\hat{B} \hat{\xi_2} \hat{\eta_3}}$ (resp., $\qubit{\hat{\eta_1}\hat{\xi_2}\hat{H} \hat{B}}$) by remembering $\eta_1$ (resp., $\eta_3$).
This change is handled in essence similarly to (a).
Moreover, a similar construction deals with the case of $\qubit{\hat{H}\hat{\sigma_1}\hat{\tau_2}\hat{\sigma_3}}$ (resp., $\qubit{\hat{\sigma_1}\hat{\tau_2}\hat{H}\hat{\sigma_3}}$).

Toward the end, we change the first six qubits $\qubit{\hat{B}\hat{B}}$ to $\qubit{\hat{\xi}\hat{\tau}}$ for symbols $\xi\in\{\xi_2,\eta_3\}$ and $\tau\in\{\tau_2,\sigma_3\}$.

(d) Finally, we apply $CodeREMOVE^6$ to move $\qubit{\hat{\xi}\hat{\tau}}$ back to the end.

(2) Next, we simulate $M$'s query access to its input qubits. Assume that the current encoded surface configuration contains  $\qubit{\widetilde{q_{query}}^{(-)} \hat{\eta_2}\hat{\sigma_2}}$.
Assume that $\qubit{\phi} = \sum_{s:|s|=2^k}\alpha_s\qubit{s}$ is written on the input tape and that the index tape contains $\qubit{bin_k(m)}\qubit{a}$, where $a$ is an auxiliary bit. When entering the query state $q_{query}$, $M$ changes $q_{query}$ to another inner state, say, $p$ and $\qubit{bin_k(m)}\qubit{a}$ to  $\qubit{bin_k(m)}\qubit{a\oplus s_{(m)}}$, where $s_{(m)}$ is the $m$th bit of $s$. We need to build a quantum function that simulates this entire query process.
As the first step, we change $\qubit{\widetilde{q_{query}}^{(-)}}$ to $\qubit{\widetilde{p}^{(-)}}$.
Since $\qubit{bin_k(m)}\qubit{a}$ is encoded into  $\qubit{\widetilde{bin_k(m)}}\qubit{\widetilde{a}}$, we can transform it to $\qubit{\widetilde{a}}\qubit{\widetilde{bin_k(m)}}$ and then to $\qubit{0^5}\qubit{a}\qubit{\widetilde{bin_k(m)}}$. Finally, we apply $Bit$ (defined in Corollary \ref{Bit-function}) to $\qubit{0^5}\qubit{a}\qubit{\widetilde{bin_k(m)}}$
and obtain $\sum_{s}\alpha_s\qubit{0^5}\qubit{a\otimes s_{(m)}}\qubit{\widetilde{bin_k(m)}}\qubit{s}$.
Since $Bit$ is query-dependent, Scheme IV used here is also query-dependent.

(3) We then combine the above two types of moves into one and express it by a single application of an appropriate quantum function. Note that $M$ accesses only the first $O(\log{\ell(\qubit{\phi})})$ cells of the work tape. We compose (1)--(2) by applying $Compo[\cdot,\cdot]$. We call by $F'$ the obtained quantum function.
We repeatedly apply it $\ilog(\ell(\qubit{\phi}))^{e}$ times to complete the simulation of the entire computation of $M$ until $M$ enters a halting (either accepting or rejecting) inner state. This repetition procedure is realized by the $e$ applications of $LCompo[\cdot]$ to $F'$.

(4) When $M$ finally enters a halting inner state, $M$ produces an output bit, say, $b$ on the first cell of the first work tape. By (1)--(2) described above, the encoded configuration has the form $\qubit{\hat{b}\widetilde{w_1}^{(-)}\hat{H} \widetilde{w_2}}  \qubit{\widetilde{z_1}^{(-)}\hat{H} \widetilde{z_2}} \qubit{\widetilde{q_{halt}}\hat{\eta}\hat{\sigma}} \qubit{\hat{\dashv}}$. We then change $\hat{b}$ ($=00b$) to $b00$ by applying $SWAP_{1,3}$ to prepare the ``correct'' output qubit. We combine  this quantum function with $F'$ to obtain the desired quantum function $F$.

This completes the entire simulation of $M$.
\end{proof}

The converse of Theorem \ref{QTM-simulation} is stated as Theorem \ref{converse-simulation}, which is given below. Recall that, at the start of a QTM $M$, its index tape and all work tapes hold the blank symbol $B$  (except for the right endmarker) in their tape cells. From this fact, we assume that inputs of quantum functions must be of the form $\qubit{\gamma_{\phi}} = \qubit{\widetilde{B^k}} \qubit{r_0} \otimes \qubit{\phi}$ with the designated separator $r_0 =\hat{2}$ and $k=\ilog(\ell(\qubit{\phi}))$ for any qustring $\qubit{\phi}\in \Phi_{\infty}$.

\begin{theorem}\label{converse-simulation}
For any quantum function $F$ defined by Schemes I--V, $F$ is computed by a certain polylogtime QTM $M$ in the following sense: for any $b\in\{0,1\}$, for any $\alpha\in[0,1]$,  and  $\qubit{\phi}\in\Phi_{\infty}$,   if $\ell(\qubit{\phi})$ is sufficiently large, then
$M$ on input $\qubit{\phi}$ produces $b$ with probability $\alpha$ iff
$\|\measure{b}{\psi_{F,\gamma_{\phi}}}\|=\alpha$ holds, where $\qubit{\psi_{F,\gamma_{\phi}}} = F(\qubit{\gamma_{\phi}})$.
\end{theorem}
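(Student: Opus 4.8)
The plan is to argue by structural induction on the construction history of $F$, showing that each of Schemes I--V can be carried out by a polylogtime QTM while the measurement statistics of the designated output qubit are preserved. This is the exact reverse of the simulation in Theorem \ref{QTM-simulation}: there a QTM was encoded into an $EQS$ function, whereas here we read off a construction history of $F$ as a blueprint for a QTM program. The invariant I would maintain is that, at every stage, the polylog-length ``active'' first segment $\qubit{xr_0}$ together with all auxiliary qubits resides on the index and work tapes (which have $O(\log^k n)$ cells), while the large second segment $\qubit{\phi}$ is never physically rewritten but is touched only through queries to the read-only input tape. This is legitimate precisely because the conclusion concerns only $\|\measure{b}{\psi_{F,\gamma_{\phi}}}\|$, the statistics of the first output qubit, so a read-only input tape loses no relevant information.

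For the base cases (Scheme I) I would note that the initial quantum functions act only on the first one or two qubits of the first segment; each is realized in a single QTM step using one of the two admissible transition forms $\cos\theta/\sin\theta$ and $e^{\imath\theta}$, whose angles come from the same amplitude set $K$, and $MEAS[a]$ corresponds directly to the QTM's measurement. Schemes II and III are immediate: $Compo[g,h]$ runs the QTM for $h$ and then for $g$ (times add), while $Branch[g,h]$ reads the leading qubit and conditionally dispatches, at constant overhead. By Lemma \ref{preserving-property} the unitary part of $F$ is norm-preserving, so the simulating transitions assemble into a well-formed time-evolution operator, the only non-reversible step being the measurement of Item 6).

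The substance of the argument is Schemes IV and V, where the running time must be controlled. The fast quantum recursion $CFQRec_t[r_0,d,g,h| \PP_{|r_0|},\FF_{|r_0|}]$ unfolds into at most $\ceilings{\log \ell(\qubit{\phi})}$ recursive levels, at each of which $HalfSWAP$ selects which half of the remaining qubits survives. Since the input tape is read-only, the QTM cannot physically perform $HalfSWAP$ on $\qubit{\phi}$; instead it tracks the induced address remapping on its index tape, computing bit-by-bit which input cell to query at each level -- the same binary-search bookkeeping exploited in Theorem \ref{bit-search-lemma} and Corollary \ref{Bit-function}. Because the construction history is a fixed finite tree of some constant depth $d$, and each nested Scheme IV or Scheme V multiplies the step count by a $\log \ell(\qubit{\phi})$ factor, the total running time is $O((\log n)^{d})$, hence polylogarithmic; this also explains the hypothesis that $\ell(\qubit{\phi})$ be sufficiently large, since the code-controlled guards compare $|x|$ against $|r_0|\ilog(\ell(\qubit{\phi}))$. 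Scheme V is then handled by literally iterating the QTM for $g$ exactly $k=\ilog(\ell(\qubit{\phi}))$ times, using the internal clock introduced in the proof of Theorem \ref{class-inclusion} to count the iterations.

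The main obstacle is reconciling the read-only input tape with the fact that an $EQS$ function may permute and entangle qubits of $\qubit{\phi}$ through $HalfSWAP$ and the recursion. The crux is to verify that, along the recursion, every access the QTM needs to make to $\qubit{\phi}$ can be routed through a single query whose address is computable in polylog time from the current recursion level, and that the amplitudes produced by the simulated transitions agree \emph{exactly} with those prescribed by Scheme IV, so that the final probability $\alpha$ matches. The code-controlled convention $f^{*}$ -- under which a function leaves $\qubit{\phi}$ intact unless $|x|$ lies in the admissible range -- is precisely what keeps these address computations bounded, and making this rigorous for the nested $CodeSKIP_{\pm}$ and $Bit$ subroutines is where most of the care will be required.
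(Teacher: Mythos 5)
Your proposal follows essentially the same route as the paper: a structural induction on the construction history of $F$, keeping the polylog-length first segment $\qubit{xr_0}$ on the work tapes while $\qubit{\phi}$ is accessed only through queries to the read-only input tape, tracking the $HalfSWAP$-induced address remapping on the index tape for Scheme IV, iterating $M_g$ logarithmically many times for Scheme V, and bounding the runtime by the (constant) nesting depth of Schemes IV/V times $\log\ell(\qubit{\phi})$. The paper packages the induction as a slightly strengthened statement (Lemma \ref{lemma-converse}, with arbitrary $\qubit{xr_0}$ on the first work tape) and is somewhat more explicit about the ``history log'' needed so that the composed machine $M_g$ can consult modifications of $\qubit{\phi}$ made by $M_h$, but your invariant amounts to the same thing.
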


In the description of the above theorem, we need to use the norm $\|\cdot\|$ instead of $|\cdot|$ because the superposition of $M$'s final configurations may contain not only the value $F(\qubit{\gamma_{\phi}})$ but also additional ``garbage'' information, which might possibly be a quantum state of large dimension, and we may need to ignore
it when making a measurement.

We wish to prove Theorem \ref{converse-simulation} by induction on the construction process of $F$. To make this induction work, we slightly modify the theorem into the following key lemma.

\begin{lemma}\label{lemma-converse}
For any quantum function $F$ defined by Schemes I--V, $F$ is  computed by a certain polylogtime QTM $M$ in the following sense: for any $b\in\{0,1\}$, any  $\qubit{\phi}\in\Phi_{\infty}$, and any $x$ in $NON_{r_0}(\qubit{\phi}) \cap \{0,1\}^{|r_0|k}$,
if $\ell(\qubit{\phi})$ is sufficiently large, then
$\|\measure{\psi_{F,\overline{\gamma}_{\phi,x}}} {\xi_{M,\overline{\gamma}_{\phi,x}}}\|=1$ holds, where  $\qubit{\overline{\gamma}_{\phi,x}} = \qubit{xr_0}\qubit{\phi}$,  $k=\ilog(\ell(\qubit{\phi}))$, $\qubit{\psi_{F,\overline{\gamma}_{\phi,x}}} = F(\qubit{\overline{\gamma}_{\phi,x}})$, and  $\qubit{\xi_{M,\overline{\gamma}_{\phi,x}}}$ is the superposition of final configurations of $M$ that starts with $\qubit{\phi}$ on the input tape and $\qubit{xr_0}$ on the first work tape.
\end{lemma}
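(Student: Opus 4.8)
The plan is to argue by structural induction on the \emph{construction history} of $F$, exhibiting for each $F$ a polylogtime QTM $M$ that maintains the following invariant: the read-only input tape always holds the qustring $\qubit{\phi}$, while the ``code part'' $\qubit{xr_0}$ together with any intermediate register is kept on the work tapes. Since the lemma assumes $x\in NON_{r_0}(\qubit{\phi})\cap\{0,1\}^{|r_0|k}$ with $k=\ilog(\ell(\qubit{\phi}))$, the work-tape content has length $O(\log\ell(\qubit{\phi}))$ throughout, so everything fits within the polylog space and time budget. At the end, a designated block of work-tape cells holds $F(\qubit{xr_0}\qubit{\phi})$ up to ``garbage'' stored in auxiliary cells that are never touched again; taking the overlap of $M$'s final superposition with $\qubit{\psi_{F,\overline{\gamma}_{\phi,x}}}$ on the non-garbage register then yields $\|\measure{\psi_{F,\overline{\gamma}_{\phi,x}}}{\xi_{M,\overline{\gamma}_{\phi,x}}}\|=1$, which is exactly why the statement is phrased with a norm rather than an equality.

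For the base case I would realize each of the six initial functions of Scheme I directly through $M$'s transition function. Items 1)--5) are local unitaries on the first one or two qubits of $\qubit{xr_0}$, and each is implemented by reading the relevant work-tape cells and applying one of the two permitted transition forms (the $\cos\theta$/$\sin\theta$ form and the $e^{\imath\theta}$ form). Item 6), $MEAS[a]$, is the only non-unitary initial function; I would simulate it by the standard deferred-measurement trick, copying the first qubit into a fresh garbage cell and never revisiting it, so that the component of $M$'s final superposition consistent with outcome $a$ carries exactly $\qubit{a}\measure{a}{\phi}$. This garbage convention is precisely what forces the norm formulation and the appearance of $\qubit{\xi_{M,\overline{\gamma}_{\phi,x}}}$.

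The composition, branching, and logarithmic-composition schemes follow from the induction hypothesis. For $Compo[g,h]$ I would run $M_h$ and then $M_g$ in sequence, and polylog plus polylog is polylog. For $Branch[g,h]$, $M$ reads the leading qubit of $\qubit{xr_0}$ and, controlled on it, coherently runs $M_g$ on the $0$-branch and $M_h$ on the $1$-branch, which is legitimate because QTM transitions may be conditioned on the scanned symbol. For Scheme V, $LCompo[g]$, the machine reads $k=\ilog(\ell(\qubit{\phi}))$ off the index-tape length and iterates $M_g$ exactly $k$ times using a logarithmic-size counter; since $k$ is logarithmic and $M_g$ is polylog, the iteration stays polylog.

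The main obstacle is Scheme IV, the fast quantum recursion $F\equiv CFQRec_t[r_0,d,g,h\mid \PP_{|r_0|},\FF_{|r_0|}]$. The difficulty is conceptual rather than computational: each recursive step both shortens the code part by $|r_0|$ (via $d$ and stripping the block $u$) and \emph{halves} the $\qubit{\phi}$ part (via $p_u\in\{I,HalfSWAP\}$, which selects $LH$ or $RH$), whereas $M$'s input tape is read-only and cannot literally be cut in half. My resolution is to represent the ``current half of $\qubit{\phi}$'' not by discarding input qubits but by maintaining on the index tape an \emph{active sub-interval} of absolute input-cell addresses; a choice $p_u=HalfSWAP$ (resp.\ $p_u=I$) narrows this interval to its right (resp.\ left) half, exactly as in the addressing underlying $BinSearch$ (Theorem \ref{bit-search-lemma}), and every query issued deep in the recursion is redirected to the correct absolute cell by composing these halving maps. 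Because Lemma \ref{HALF-value} bounds the recursion depth by $\ceilings{\log\ell(\qubit{\phi})}$, any single computation branch traverses a root-to-leaf path of length $O(\log\ell(\qubit{\phi}))$, at each node applying the polylog sub-QTMs for $d$, $g$, $h$ (available by the induction hypothesis) together with $O(\log\ell(\qubit{\phi}))$-time interval updates, so the overall running time is polylog. The reassembly on the way back up---the applications of $h$ and $p_u^{-1}$---is simulated by running the corresponding sub-QTMs and inverse-half maps in reverse order, and coherence is preserved because all of these maps are unitary (the only non-unitary ingredient, a $MEAS$ inside $h$, is again absorbed into the garbage convention). The one point demanding genuine care is verifying that this interval bookkeeping, expressed over $M$'s binary-addressed index tape, reproduces the $LH$/$RH$ splitting of Scheme IV level by level; this is where I expect the bulk of the technical work to lie, and it leans directly on Lemma \ref{HALF-value} and on the encoded-configuration machinery already established in the forward direction of Theorem \ref{QTM-simulation}.
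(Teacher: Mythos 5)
Your proposal follows essentially the same route as the paper's proof: structural induction on the construction history, with Items 1)--5) of Scheme I realized directly by the two permitted transition forms, $MEAS[\cdot]$ absorbed into a garbage convention (which is indeed why the lemma is phrased with a norm against the full final-configuration superposition), sequential simulation for Scheme II, control on the leading qubit of $\qubit{xr_0}$ for Scheme III, $k$-fold iteration of $M_g$ for Scheme V, and --- for Scheme IV --- recording the sequence of $LH$/$RH$ choices so that queries issued deep in the recursion are redirected to absolute input-tape addresses. The paper does precisely this last step by writing a $0$ or $1$ on an index tape for each choice $p_u=I$ or $p_u=HalfSWAP$ in order to trace the start of each recursively halved input, so your ``active sub-interval'' bookkeeping is the same device.

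The one point you leave unaddressed, and which the paper treats explicitly, is that quantum functions in $EQS$ may genuinely modify qubits of the $\qubit{\phi}$-part (the base-case ground function $g$ of Scheme IV accesses up to $t$ input locations, and $Bit$/$BinSearch$ are query-dependent), whereas $M$'s input tape is immutable. Your invariant that the input tape ``always holds $\qubit{\phi}$'' is therefore not sufficient for $Compo[g,h]$: the logical input of $M_g$ is $h(\qubit{\phi})$, not $\qubit{\phi}$, so $M$ must keep a work-tape log of which input qubits have been modified and consult that history on every subsequent query (the paper also notes the related need to keep the work tapes of $M_g$ and $M_h$ disjoint, since they cannot be erased). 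Relatedly, your claim that a designated work-tape block ``holds $F(\qubit{xr_0}\qubit{\phi})$'' cannot be taken literally: that state has length $\Theta(\ell(\qubit{\phi}))$ while the work tapes have only polylogarithmically many cells; only the code part and the constantly many modified input positions live on the work tapes, the rest being carried by the input-tape component of the final configuration. Both points are repairable with the paper's history-log device and do not change your overall architecture.
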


Theorem \ref{converse-simulation} follows immediately from Lemma \ref{lemma-converse} by setting $xr_0$ in the lemma to be $\widetilde{B^k}$. The remaining task is to verify the lemma.

\begin{proofof}{Lemma \ref{lemma-converse}}
Let $F$ denote any quantum function in $EQS$. For any qustring $\qubit{\phi}\in \Phi_{\infty}$ and any string $x\in NON_{r_0}(\qubit{\phi}) \cap \{0,1\}^{|r_0|k}$ with $k=\ilog(\ell(\qubit{\phi}))$, let $\qubit{\overline{\gamma}_{\phi,x}} = \qubit{xr_0}\qubit{\phi}$ denote a qustring given to $F$ as an input.
Assuming that $\ell(\qubit{\phi})$ is sufficiently large, we first focus on $F$ and simulate the outcome of $F$ by an appropriate QTM that takes an input of the form $\qubit{\overline{\gamma}_{\phi,x}}$.
The desired polylogtime QTM $M$ reads $\qubit{\phi}$ on its input tape and $\qubit{xr_0}$ on its first work tape.

As seen later in (3) of this proof, Scheme IV may allow $F$ to access at most a constant number of locations of the input $\qubit{\phi}$, whereas $M$ does not. To circumvent this difficulty in simulating $F$ on the QTM, whenever $F$ modifies any qubit of $\qubit{\phi}$, $M$ remembers this qubit modification using its work tape as a reference to the future access to it since $M$ cannot alter any qubit of $\qubit{\phi}$.

We intend to prove the lemma by induction on the descriptive complexity of $F$.
We assume that the work-tape head is scanning the cell that contains the first qubit on the first work tape before each series of applications of the schemes of $EQS$.

(1) We first assert that all quantum functions $F$ defined by Items 1)--6) of Scheme I are computable by appropriate polylogtime QTMs, say, $M$  because the target qubits of these items lie in $xr_0$, which are written on the work tape, not on the input tape.
To verify this assertion, let us consider $PHASE_{\theta}$ of Item 2). Starting with  $\qubit{\phi}$ as well as $\qubit{xr_0}$, if the first bit of $xr_0$ is $1$, then we use the QTM's quantum transition function $\delta$ to make a phase shift of $e^{\imath \theta}$. Otherwise, we do nothing.
A similar treatment works for Items 3)--4).
For $SWAP$ of Item 5), we simply swap between the content of the cell currently scanned by $M$'s work-tape head and the content of its right adjacent cell.
For Item 6), it suffices to ``observe'' the first qubit on the first work tape in the computational basis $\qubit{a}$.


(2) We next show by way of induction on the construction process of the target quantum function $F$ by Scheme II.
Let us consider the quantum function  $F$ of the form $Compo[g,h]$ for two ground (quantum) functions $g$ and $h$. By induction hypothesis, there are two polylogtime QTMs $M_g$ and $M_h$ that  respectively compute $g$ and $h$ in the lemma's sense.
The desired QTM $M$ first checks whether $\ell(\qubit{\phi})\leq 1$.
This part is called the \emph{first phase} and it can be done by searching for the location of the right endmarker on the index tape.
We then run $M_h$ on $\qubit{\phi}$ as well as $\qubit{xr_0}$. After $M_h$ halts, we wish to run $M_g$ in the \emph{second phase}.

Now, there are two issues to deal with. Unlike the classical case of ``composing'' two TMs, we need to distinguish work tapes of $M_g$ and those of $M_h$ since we may not be able to erase the contents of $M_g$'s work tapes freely at the start of the simulation of $M_h$ in the second phase. Since we want to use $M_h$'s index tape as the index tape of $M$, we need to rename  $M_g$'s index tape to one of the work tapes of $M$.
Since the original input of $M_g$ is the qustring $h(\qubit{\phi})$, we also need to mimic $M_g$'s access to $h(\qubit{\phi})$ using only $\qubit{\phi}$. For this purpose, we need to remember the ``history'' of how we have modified  qubits of $\qubit{\phi}$ so far and, whenever $M_g$ accesses its input, we first consult this history log to check whether or not
the accessed qubit has already been modified.

(3) To simulate Scheme III, let $F\equiv Branch[g,h]$ for two ground   functions $g$ and $h$. By induction hypothesis, we take two polylogtime QTMs $M_g$ and $M_h$ respectively for $g$ and $h$ working with $\qubit{\phi}$ written on their input tapes and $\qubit{zr_0}$ written on their first work tapes.
Let us design the desired QTM $M$ to simulate $F$ on  $\qubit{\phi}$ and $\qubit{xr_0}$ as follows.
We first check if $\ell(\qubit{\phi})\leq1$. If so, we do nothing.
Hereafter, we assume otherwise. Since $Branch[g,h](\qubit{xr_0}\qubit{\phi}) = \ket{0}\otimes g(\measure{0}{xr_0}\otimes \qubit{\phi}) + \ket{1}\otimes h(\measure{1}{xr_0}\otimes \qubit{\phi})$, we scan the first qubit of $\qubit{xr_0}$ by a tape head and determine which machine (either $M_g$ or $M_h$) to run with the rest of the input. Since $M_g$ and $M_h$ correctly simulate $g$ and $h$, respectively, this new machine $M$ correctly simulates $F$.

(4) Concerning Scheme IV, let $F\equiv CFQRec_t[r_0,d,g,h|\PP_{|r_0|},\FF_{|r_0|}]$ and assume that
$M$'s input tape holds $\qubit{\phi}$ and its first work tape holds $\qubit{xr_0}$.
We first calculate the length $\ell(\qubit{\phi})$ by checking the size of the available area of the index tape in logarithmic time.  If either $\ell(\qubit{\phi})<t$ or $x=\lambda$, then we run $M_g$ on $\qubit{\phi}$ as well as $\qubit{xr_0}$ until it eventually halts. Now, we assume that $\ell(\qubit{\phi})\geq t$ and $x\neq \lambda$.

If $h$ is defined using none of $CodeSKIP_{+}$ and $CodeSKIP_{-}$, then the induction hypothesis guarantees the existence of a polylogtime QTM $M_h$ for $h$. In the case where $h$ is constructed using $CodeSKIP_{\tau}$ for a certain sign $\tau\in\{+,-\}$, we first build a QTM that simulates $CodeSKIP_{\tau}[r_0,g',h']$ for certain ground functions $g'$ and $h'$ without requiring ``logarithmic'' runtime.
It is important to note that such a QTM reads target qubits written on the work tape, not on the input tape.
The QTM $M$ searches for the first appearance of $r_0$ and then runs the corresponding QTMs $M_{g'}$ and $M_{h'}$ \emph{in parallel}.
Since the input length is $\ell(\qubit{\phi})$, the QTM halts within $O(\log{\ell(\qubit{\phi})})$ steps.

Similarly, we can handle $CodeREMOVE$ and $CodeREP$.

Since $x\neq\lambda$, $F(\qubit{xr_0}\qubit{\phi})$ is calculated as $\sum_{u:|u|=|r_0|} \sum_{v:|v|=\ell(\measure{u}{xr_0})} (h(\qubit{u}\qubit{v})\otimes p^{-1}_u(\measure{v}{\zeta_{u,p_u,\phi}^{(x'r_0)}})$, where $\qubit{\zeta_{u,p_u,\phi}^{(x'r_0)}} = \sum_{s:|s|=H(\qubit{\phi})}  (f_u(\measure{u}{x'r_0}\qubit{s}) \otimes \measure{s}{\psi_{p_u,\phi}})$ and $d(\qubit{xr_0})=\qubit{x'r_0}$.
Starting with $\qubit{\phi}$ as well as $\qubit{xr_0}$, we first move a work-tape head, passing through at most $\ilog(\ell(\qubit{\phi}))$ blocks of size $|r_0|$. Recursively, we move back the tape head to the start cell (i.e., cell $0$) and run $M_h$.
For this purpose, we write $0$ and $1$ on an index tape whenever we choose $p_u=I$ and $p_u=HalfSWAP$, respectively, because we need to trace the location of the start of each recursively halved input until we reach  $x=\lambda$ or $|x|>|r_0|k$. The entire algorithm thus requires $O(\log{\ell(\qubit{\phi})})$ steps.

By the definition of Scheme IV, the quantum function $F$ can
make a direct access to $\qubit{\phi}$ when $g$ is finally called to compute the value of $F$.
Notice that $g$ accesses at most $t$ locations of $\qubit{\phi}$. Therefore, during the simulation of $F$, $M$ makes the same number of queries to its input $\qubit{\phi}$.

(5) Finally, let us consider Scheme V. Assume that $F$ is defined to be $LCompo[g]$ for a ground function $g$. By induction hypothesis, we take a polylogtime QTM $M_g$ that simulates $g$. Assume further that, for a fixed constant $t\in\nat^{+}$, $M_g$ runs in $O(\log^t{\ell(\qubit{\phi})})$ time for any input $\qubit{\phi}$. Let $k=\ilog(\ell(\qubit{\phi}))$.
To simulate $F$, $M$ repeats a run of $M_g$ $k$ times since $F(\qubit{xr_0}\qubit{\phi}) = g^k(\qubit{xr_0}\qubit{\phi})$. The total runtime of $M$ is at most $k\cdot O(\log^t{\ell(\qubit{\phi})})$, which equals $O(\log^{t+1}{\ell(\qubit{\phi})})$.
\end{proofof}

We have shown in Section \ref{sec:BQLOGTIME} that the parity function, $Parity$, cannot be computed by polylogtime QTMs. It is possible to  generalize $Parity$ and treat it as a quantum function defined on $\HH_{\infty}$.
Lemma \ref{polylogtime-QTM-parity-OR} together with Lemma \ref{lemma-converse} then leads to the following conclusion.

\begin{proposition}\label{parity-notin-EQS}
The parity function is not definable within $EQS$.
\end{proposition}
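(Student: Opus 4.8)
The plan is to argue by contradiction, reusing the converse-simulation machinery rather than attacking the parity function directly. Suppose that the parity function, suitably lifted to a quantum function on $\HH_{\infty}$, were definable within $EQS$; call such a quantum function $F$. The first step is to pin down precisely this lift: I would take $F$ to act on computational-basis inputs so that, under the input convention of Theorem~\ref{converse-simulation}, the designated (first) output qubit of $F(\qubit{\gamma_{\phi}})$ records $\bigoplus_{i=1}^{|x|} x_i$ whenever $\qubit{\phi}=\qubit{x}$ is a classical string. Concretely, $F$ should satisfy $\|\measure{Parity(x)}{\psi_{F,\gamma_{x}}}\|=1$ for every classical $x$, where $\qubit{\psi_{F,\gamma_{x}}}=F(\qubit{\gamma_{x}})$; this is the natural dimension- and norm-preserving $\HH_{\infty}$-realization of the Boolean parity function.

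Next I would apply Theorem~\ref{converse-simulation} (whose proof is carried out through Lemma~\ref{lemma-converse}) to this $F\in EQS$. The theorem furnishes a polylogtime QTM $M$ such that, for all sufficiently long $\qubit{\phi}\in\Phi_{\infty}$ and all $b\in\{0,1\}$, $M$ on input $\qubit{\phi}$ produces $b$ with probability $\alpha$ exactly when $\|\measure{b}{\psi_{F,\gamma_{\phi}}}\|=\alpha$. Specializing $\qubit{\phi}$ to a classical string $\qubit{x}$ and reading off the measured output qubit, the exactness $\|\measure{Parity(x)}{\psi_{F,\gamma_{x}}}\|=1$ forces $\alpha\in\{0,1\}$: the machine $M$ outputs $Parity(x)$ with probability $1$ on every sufficiently long classical input. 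Hence $M$ is a polylogtime QTM that computes the parity function with zero error, and in particular with bounded-error probability. This contradicts Lemma~\ref{polylogtime-QTM-parity-OR}, which asserts that no polylogtime QTM computes parity with bounded error, and the contradiction yields the proposition.

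I expect the main obstacle to lie not in the logical skeleton, which is a clean contradiction routed through the simulation theorem, but in the bookkeeping needed to align the two notions of ``computing parity.'' Specifically, I must (i) fix the exact input/output register conventions of Theorem~\ref{converse-simulation} (inputs of the form $\qubit{\gamma_{\phi}}=\qubit{\widetilde{B^k}}\qubit{r_0}\otimes\qubit{\phi}$, and the output qubit placed as in part~(4) of the proof of Lemma~\ref{lemma-converse}) and verify that the output qubit of the $\HH_{\infty}$-parity function lands in precisely the position the theorem measures; and (ii) confirm that the hypothesis ``$\ell(\qubit{\phi})$ sufficiently large'' is harmless, since the behaviour of $M$ on the finitely many short inputs cannot rescue it against the asymptotic query lower bound invoked in Lemma~\ref{polylogtime-QTM-parity-OR}. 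Once these two alignments are checked, the deterministic ($\alpha\in\{0,1\}$) behaviour of the simulating QTM is immediate, and the bounded-error query lower bound for parity closes the argument.
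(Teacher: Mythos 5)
Your proof is correct and follows essentially the same route as the paper: the paper derives Proposition \ref{parity-notin-EQS} directly by combining Lemma \ref{lemma-converse} (equivalently, Theorem \ref{converse-simulation}) with the impossibility result of Lemma \ref{polylogtime-QTM-parity-OR}, exactly as you do. Your additional care about the input/output register conventions and the ``sufficiently large'' hypothesis is reasonable bookkeeping that the paper leaves implicit.
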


\section{The Divide-and-Conquer Scheme and EQS}\label{sec:divide-and-conquer}

We have formulated the system $EQS$ in Section \ref{sec:definitions} by the use of recursion schematic definition and
discussed in Section \ref{sec:polylogtime-QTM} the $EQS$-characterization of quantum polylogtime computing.
In what follows, we intend to strengthen the system $EQS$ by appending an extra scheme. As such a scheme, we particularly consider the \emph{divide-and-conquer strategy}, which is one of the most useful algorithmic strategies in
solving many practical problems. We further show that the divide-and-conquer strategy cannot be ``realized'' within $EQS$.
This implies that the addition of this strategy as a new scheme truly strengthens the expressing power of $EQS$.

\subsection{Multi-Qubit Divide-and-Conquer Scheme}

A basic idea of the \emph{divide-and-conquer strategy} is to continue splitting an input of a given combinatorial problem into two (or more) smaller parts until each part is small enough to handle separately and efficiently and then to combine all the small parts in order to solve the problem on the given input.

To define the scheme that expresses this divide-and-conquer strategy, we first introduce a useful scheme called the \emph{half division scheme}.
Given two quantum functions $g$ and $h$ and any input quantum state $\qubit{\phi}$ in $\HH_{\infty}$, we simultaneously apply $g$ to the left half of $\qubit{\phi}$ and $h$ to the right half of $\qubit{\phi}$ and then obtain the new quantum function denoted by $HalfD[g,h]$.

{\it
\begin{enumerate}\vs{-2}
  \setlength{\topsep}{-2mm}%
  \setlength{\itemsep}{1mm}%
  \setlength{\parskip}{0cm}%

\item[*)] The \emph{half division scheme}.
From $g$ and $h$, we define $HalfD[g,h]$
as follows:
\vs{1}

\n\hs{10}{\rm (i)}  $HalfD[g,h](\qubit{\phi}) =
\qubit{\phi}$ \hs{48}if $\ell(\qubit{\phi})\leq 1$, \vs{1} \\
\n\hs{10}{\rm (ii)} $HalfD[g,h](\qubit{\phi}) =
\sum_{s:|s|=LH(\qubit{\phi})} ( g(\qubit{s}) \otimes h(\measure{s}{\phi}) )$ \hs{6}otherwise.

\end{enumerate}
}

Here is a quick example of how this scheme works.

\begin{example}
Let us consider $F_1\equiv HalfD[g,h]$ with $g=NOT$ and $h=WH$. Given an input $\qubit{0^n}\qubit{0^n}$, we obtain $F(\qubit{0^n}\qubit{0^n}) = g(\qubit{0^n}) \otimes h(\qubit{0^n}) = \frac{1}{\sqrt{2}} \qubit{10^{n-1}}\otimes (\qubit{0^n}+\qubit{10^{n-1}})$.
Similarly, consider $F_2\equiv HalfD[I,h]$. We then obtain $F_2(\qubit{0^n}\qubit{0^n}) = \frac{1}{\sqrt{2}}\qubit{0^n}\otimes (\qubit{0^n}+\qubit{10^{n-1}})$.
For the quantum function $F'\equiv HalfD[F_1,F_2]$, if $\qubit{0^{2n}}\qubit{0^{2n}}$ is an input to $F'$, then we obtain $F'(\qubit{0^{2n}}\qubit{0^{2n}}) = \frac{1}{2} (\qubit{10^{2n-1}} + \qubit{10^{n-1}10^{n-1}}) \otimes (\qubit{0^{2n}}+\qubit{0^n10^{n-1}})$.
\end{example}

We intend to formulate the \emph{multi-qubit divide-and-conquer scheme}  (Scheme DC) using $HalfD[g,h]$. Recall the quantum function $SWAP_{i,j}$ given in Lemma \ref{lemma:special}(10). We expand it by allowing its parameter $j$ to take a non-constant value. In particular, we intend to take  $LH(\ell(\qubit{\phi}))+1$ for $j$ and then define $midSWAP_1(\qubit{\phi})$ to be $SWAP_{2,LH(\ell(\qubit{\phi}))+1}(\qubit{\phi})$ for any $\qubit{\phi}\in\HH_{\infty}$.
More generally, for a constant $k\in\nat^{+}$, we set  $midSWAP_k(\qubit{\phi})$ to be $SWAP_{2k,m+k}\circ SWAP_{2k-1,m+k-1}\circ \cdots \circ SWAP_{k+2,m+2}\circ SWAP_{k+1,m+1}$, where $m=LH(\ell(\qubit{\phi}))$. With the use of $midSWAP_k$, for any given quantum function $h$, we introduce another scheme $MidApp_k[h]$ by setting $MidApp_k[h]\equiv midSWAP_k^{-1}\circ h \circ midSWAP_k$.


Let us quickly examine the behavior of $MidApp_k[\cdot]$ with a concrete example.

\begin{example}\label{example-midapp}
For a later argument, we consider the case of quantum function $h_0 \equiv SWAP^{-1}\circ CNOT\circ SWAP$, which  obviously belongs to $EQS_0$.
Let $\qubit{\phi}$ denote $\qubit{x_1x_2\cdots x_n}$ for $n\in\nat^{+}$ and $x_1,x_2,\ldots,x_n\in\{0,1\}$. The quantum function $MidApp_1[h_0]$  satisfies  that $MidApp_1[h_0](\qubit{\phi}) = \qubit{x_1}$ if $n=1$, and $MidApp_1[h_0](\qubit{\phi}) = \qubit{x_1\oplus x_{LH(n)+1}} \qubit{x_2x_3\cdots x_n}$ if $n\geq 2$. If we are allowed to use extra qubits, then $MidApp_1[h_0]$ can be realized by an appropriate quantum function, say, $K$ in $EQS$ obtained with $Bit$ given in Corollary \ref{Bit-function} in the following sense: $K(\qubit{0^{3k+8}}\qubit{x_1x_2\cdots x_n}) = \qubit{0^{3k+8}}\otimes MidApp_1[h_0](\qubit{x_1x_2\cdots x_n})$ if $k\in\nat^{+}$ satisfies $n=2^k$.
\end{example}


Now, we formally introduce Scheme DC, which ``expresses'' the divide-and-conquer strategy.

\begin{definition}
We express as $DC$ the following scheme.

\begin{enumerate}\vs{-3}
  \setlength{\topsep}{-2mm}%
  \setlength{\itemsep}{1mm}%
  \setlength{\parskip}{0cm}%

\item[(DC)] The \emph{multi-qubit divide-and-conquer scheme}.
From $g$, $h$, and $p$, and $k\in\nat^{+}$, (where $p$ is not defined using $MEAS[\cdot]$ and $g$, $h$, and $p$ are not defined using Scheme DC), we define
$F\equiv DivConq_k[g,h,p|f_1,f_2]$ as:
\vs{1}

\n\hs{10}(i) $F(\qubit{\phi}) = g(\qubit{\phi})$
\hs{50} if $\ell(\qubit{\phi})\leq k$, \\
\n\hs{9}(ii) $F(\qubit{\phi}) =
MidApp_k[h]( HalfD[ f_1,f_2] ( p(\qubit{\phi})  ))$ \hs{6}otherwise, \\
where $f_1,f_2 \in \{F,I\}$.
\end{enumerate}
The notation $EQS+DC$ denotes the smallest set including the quantum functions of Scheme I and being closed under Schemes II--V and DC.
\end{definition}


Hereafter, we discuss the usefulness of Scheme DC. Recall the parity function $Parity$ from Section \ref{sec:BQLOGTIME}. We have shown in Corollary \ref{parity-notin-EQS} that $EQS$ is not powerful enough to include $Parity$. In sharp contrast, we argue that $Parity$ is in fact definable by applying Schemes I--V and DC.

\begin{proposition}\label{Parity-computable}
There exists a quantum function $f$ in $EQS+DC$ that simulates $Parity$ in the following sense: for any $x\in\{0,1\}^*$ and any $b\in\{0,1\}$, $Parity(x)=b$ iff $\|\measure{b}{\psi_{f,x}}\|^2=1$, where $\qubit{\psi_{f,x}}= f(\qubit{x})$.
\end{proposition}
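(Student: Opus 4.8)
The plan is to realize $Parity$ by a single application of Scheme DC that splits the input into two halves, computes the parity of each half recursively, and then XORs the two partial parities together. Concretely, I would set
\[
f \equiv F \equiv DivConq_1[I,\, h_0,\, I \mid F, F],
\]
taking $k=1$, ground functions $g=I$ and $p=I$, recursion selectors $f_1=f_2=F$, and $h_0 \equiv SWAP^{-1}\circ CNOT \circ SWAP$ exactly as in Example \ref{example-midapp}. Since $CNOT$ lies in $EQS_0$ by Lemma \ref{lemma:special}(1) and $SWAP$ is an initial function of Scheme I, the function $h_0$ belongs to $EQS_0\subseteq EQS$, while $g=p=I$ trivially do; none of $g,h_0,p$ is defined using Scheme DC and $p$ uses no $MEAS[\cdot]$, so the side conditions of Scheme DC are satisfied and $f$ indeed lies in $EQS+DC$.

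The heart of the argument is the invariant that, for every classical string $x$ with $n=|x|\geq 1$, the quantum function $F$ sends the basis state $\qubit{x}$ to a basis state $\qubit{y}$ whose first bit $y_1$ equals $Parity(x)$. I would prove this by induction on $n$. Observe first that each constituent operation ($HalfD[F,F]$, $midSWAP_1$, and $h_0$) carries computational basis states to computational basis states, so $F(\qubit{x})$ is always a single basis state and we need only track its leading bit. For the base case $n=1$, clause (i) of Scheme DC gives $F(\qubit{x})=I(\qubit{x})=\qubit{x}$, and $Parity(x)=x_1$, so the invariant holds.

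For the inductive step $n\geq 2$, clause (ii) gives $F(\qubit{x}) = MidApp_1[h_0]\bigl(HalfD[F,F](\qubit{x})\bigr)$ because $p=I$. Writing $x=x^{(L)}x^{(R)}$ with $|x^{(L)}|=LH(n)$ and $|x^{(R)}|=RH(n)$, the half-division scheme collapses on the basis state $\qubit{x}$ to $F(\qubit{x^{(L)}})\otimes F(\qubit{x^{(R)}})$; by the induction hypothesis this deposits $Parity(x^{(L)})$ at global position $1$ and $Parity(x^{(R)})$ at global position $LH(n)+1$. By Example \ref{example-midapp}, $MidApp_1[h_0]$ replaces the first bit by its XOR with the bit at position $LH(n)+1$ and leaves everything else undisturbed, so the new leading bit becomes $Parity(x^{(L)})\oplus Parity(x^{(R)})=Parity(x)$, closing the induction. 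Finally, since $\qubit{\psi_{f,x}}=F(\qubit{x})=\qubit{y}$ is a basis state, the bra-ket conventions of Section \ref{sec:bra-ket-notation} give $\|\measure{b}{\psi_{f,x}}\|^2=1$ precisely when $y_1=b$, that is, precisely when $Parity(x)=b$, which is the assertion.

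I expect the only delicate point to be the bookkeeping inside the inductive step: one must check that $HalfD[F,F]$ really deposits the two partial parities at positions $1$ and $LH(n)+1$ even when $n$ is odd (so the two halves have different lengths $LH(n)\neq RH(n)$), and that $MidApp_1[h_0]$, built from the non-constant swap $SWAP_{2,LH(n)+1}$, targets exactly that second position regardless of the ``garbage'' bits that the recursive calls leave in the other cells. Both facts follow directly from the definitions of $LH$, $RH$, and $MidApp_1$ together with the computation already carried out in Example \ref{example-midapp}, so no essentially new calculation is required beyond that example.
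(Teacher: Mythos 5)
Your proposal is correct and follows essentially the same route as the paper: the identical construction $DivConq_1[I,h_0,I\mid F,F]$ with $h_0\equiv SWAP^{-1}\circ CNOT\circ SWAP$, and the same induction showing that $F(\qubit{x})$ is a basis state whose leading bit carries $\bigoplus_i x_i$ while the remaining bits are garbage. Your added remarks on the odd-length split and on basis states mapping to basis states are sound and only make explicit what the paper's proof leaves implicit.
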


\begin{proof}
Let us recall the quantum function $h_0$ described in Example \ref{example-midapp} and define $F$ to be  $DivConq_1[g,h,p|f_1,f_2]$ with $g=p=I$ and $f_1=f_2=F$.
For any $n$-bit string $x=x_1x_2\cdots x_n$, we want to show by induction on $n\in\nat^{+}$ that (*) there exist $y_2,y_3,\ldots,y_n\in\{0,1\}$ for which  $F(\qubit{x}) = \qubit{\bigoplus_{i=1}^{n}x_i} \qubit{y_2}\cdots \qubit{y_n}$.

If $n=1$, then we instantly obtain $F(\qubit{\phi})=\qubit{\phi}$. Assume that $n=2$. Since $HalfD[f_1,f_2](\qubit{x_1x_2}) = f_1(\qubit{x_1})\otimes f_2(\qubit{x_2})$, it follows that $F(\qubit{x_1x_2}) = MidApp_1[h_0](F(\qubit{x_1})\otimes F(\qubit{x_2})) = MidApp_1[h_0](\qubit{x_1x_2}) = \qubit{x_1\oplus x_2}\qubit{x_2}$.
Let $n\geq3$ and assume by induction hypothesis that (*) is true for all indices $k\in[n-1]$; namely, $F(\qubit{x_1x_2\cdots x_k}) = \qubit{\bigoplus_{i=1}^{k}x_i} \qubit{y_2}\cdots \qubit{y_k}$ for certain suitable bits $y_2,y_3,\ldots,y_k\in\{0,1\}$. Let us concentrate on the case of $x\in\{0,1\}^{n}$ with $x=x_1x_2\cdots x_n$.
For simplicity, we write $m$ in place of $LH(n)$. Let $x'=x_1x_2\cdots x_{m}$ and $x''=x_{m+1}\cdots x_n$ so that $x=x'x''$. It thus follows that $HalfD[f_1,f_2](\qubit{x}) = f_1(\qubit{x'})\otimes f_2(\qubit{x''})$. Since $F(\qubit{x'}) = \qubit{\bigoplus_{i=1}^{m}x_i}\qubit{y_2\cdots y_{m}}$ and $F(\qubit{x''}) = \qubit{\bigoplus_{i=m+1}^{n}x_i} \qubit{y_{m+2}\cdots y_n}$ by induction hypothesis, we conclude that  $MidApp_1[h_0](F(\qubit{x'})\otimes F(\qubit{x''}))  = MidApp_1[h_0]( \qubit{\bigoplus_{i=1}^{m}x_i}\qubit{y_2\cdots y_{m}} \otimes \qubit{\bigoplus_{i=m+1}^{n}x_i} \qubit{y_{m+2}\cdots y_n} ) = \qubit{\bigoplus_{i=1}^{n}x_i}\qubit{y_2\cdots y_n}$.
This implies that (*) holds for all $n\in\nat^{+}$.
\end{proof}

\subsection{Approximately Admitting}

By Proposition \ref{Parity-computable}, we may anticipate that the multi-qubit divide-and-conquer scheme (Scheme DC) cannot be ``realized'' or even ``approximated'' within the system $EQS$. We formalize this latter notion under the new terminology of ``approximately admitting''.

Let us consider an arbitrary scheme (such as composition and fast quantum recursion) whose construction requires a series of quantum functions. Recall the notion of ground (quantum) functions from Section \ref{sec:elementary-scheme}.
Let $\SSS$ denote a class of quantum functions and assume that $\RR$ is a scheme requiring a series of $k$ ground functions taken from $\SSS$.
We say that $\SSS$ \emph{approximately admits} $\RR$ if, for any  series  $\GG=(g_1,g_2,\ldots,g_k)$ with $g_1,g_2,\ldots,g_k\in \SSS$, there exists a quantum function $f\in \SSS$ and a constant $\varepsilon\in[0,1/2)$ such that, for any $\qubit{\phi}\in\HH_{\infty}$,  $\|\measure{\psi_{f,\phi}}{\xi_{\RR,\GG,\phi}}\|^2\geq 1-\varepsilon$ holds, where $\qubit{\psi_{f,\phi}} = f(\qubit{\phi})$ and $\qubit{\xi_{\RR,\GG,\phi}} = \RR(g_1,g_2,\ldots,g_k)(\qubit{\phi})$.
With this new terminology, we can claim that all schemes listed in Lemma \ref{various-schemes}, for example, are indeed approximately admitted by $EQS$.

\begin{theorem}\label{no-admit-DC}
$EQS$ does not approximately admit the multi-qubit divide-and-conquer scheme.
\end{theorem}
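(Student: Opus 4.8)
The plan is to argue by contradiction, exploiting the tension between Proposition \ref{Parity-computable}, which places the parity function inside $EQS+DC$ through a single application of Scheme DC, and Lemma \ref{polylogtime-QTM-parity-OR}, which forbids polylogtime QTMs from computing parity even with bounded error. Suppose, toward a contradiction, that $EQS$ approximately admits Scheme DC. I would apply the definition of approximate admission to the particular series of ground functions $\GG=(I,h_0,I)$ used in Proposition \ref{Parity-computable}, where $h_0\equiv SWAP^{-1}\circ CNOT\circ SWAP$ of Example \ref{example-midapp} lies in $EQS_0\subseteq EQS$, with the instantiation $k=1$ and $f_1=f_2=F$. This yields a quantum function $f\in EQS$ and a constant $\varepsilon\in[0,1/2)$ with $\|\measure{\psi_{f,\phi}}{\xi_{DC,\GG,\phi}}\|^2\geq 1-\varepsilon$ for every $\qubit{\phi}\in\HH_{\infty}$, where $\qubit{\xi_{DC,\GG,\phi}}=DivConq_1[I,h_0,I|F,F](\qubit{\phi})$ is exactly the parity-computing function of that proposition.

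Next I would extract a bounded-error parity computation from $f$ by restricting to classical basis inputs. Fix $x\in\{0,1\}^n$ and set $b=Parity(x)$. By Proposition \ref{Parity-computable} the first qubit of $\qubit{\xi_{DC,\GG,x}}$ equals $\qubit{b}$ with certainty, so $\qubit{\xi_{DC,\GG,x}}=\qubit{b}\otimes\qubit{\gamma_x}$ for a (classical) garbage state $\qubit{\gamma_x}$. Writing $\qubit{\psi_{f,x}}=\qubit{0}\otimes\qubit{\psi_0}+\qubit{1}\otimes\qubit{\psi_1}$, the bra--ket conventions give $\measure{\xi_{DC,\GG,x}}{\psi_{f,x}}=\measure{\gamma_x}{\psi_b}$, whence by Cauchy--Schwarz the probability of observing $b$ when the first qubit of $\qubit{\psi_{f,x}}$ is measured satisfies $\|\qubit{\psi_b}\|^2\geq |\measure{\gamma_x}{\psi_b}|^2=\|\measure{\xi_{DC,\GG,x}}{\psi_{f,x}}\|^2\geq 1-\varepsilon$. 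Thus measuring the designated output qubit of $f(\qubit{x})$ returns $Parity(x)$ with probability at least $1-\varepsilon>1/2$, uniformly in $n$.

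Finally I would transport this into the QTM world while honoring the two differing input conventions, which is where the argument demands the most care. Since $f\in EQS$, Theorem \ref{converse-simulation} (via Lemma \ref{lemma-converse}) yields a polylogtime QTM $M$ whose output-bit statistics on data $\qubit{\phi}$ reproduce the designated-qubit statistics of $f(\qubit{\gamma_\phi})$, where $\qubit{\gamma_\phi}=\qubit{\widetilde{B^k}}\qubit{r_0}\otimes\qubit{\phi}$ and $k=\ilog(\ell(\qubit{\phi}))$. The subtlety is that approximate admission applies $f$ directly to $\qubit{\phi}$, whereas $M$ realizes $f$ on the encoded input $\qubit{\gamma_\phi}$. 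I would resolve this by instantiating the admission bound at $\qubit{\phi}=\qubit{w}$ for a classical string $w$: then the first qubit of $\qubit{\xi_{DC,\GG,\gamma_w}}$ equals $\qubit{Parity(\widetilde{B^k}r_0\,w)}$, so by the previous paragraph $M$ on input $w$ returns $Parity(\widetilde{B^k}r_0\,w)$ with probability at least $1-\varepsilon$. Because the prefix $\widetilde{B^k}r_0$ is a fixed pattern whose length and parity depend only on $n=|w|$, the machine can compute $Parity(\widetilde{B^k}r_0)$ deterministically within polylogarithmic time and exclusive-or it into the output, thereby deciding $Parity(w)$ with bounded error $1-\varepsilon>1/2$ for every $n$. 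This contradicts Lemma \ref{polylogtime-QTM-parity-OR}, and the contradiction establishes the theorem. The expected main obstacle is precisely this last reconciliation: ensuring that the fidelity estimate on the whole output state (including arbitrary garbage) downgrades cleanly to a single-bit bounded-error guarantee, and that the encoding mismatch between direct application and the QTM's initialized tape contributes only a deterministically correctable, polylog-time-computable shift.
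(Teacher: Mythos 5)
Your proposal is correct and follows essentially the same route as the paper's proof: assume approximate admission, invoke Proposition \ref{Parity-computable} to get an $EQS$ function approximating parity, pass to a polylogtime QTM via the $EQS$--QTM equivalence, and contradict Lemma \ref{polylogtime-QTM-parity-OR}. The paper's version is only three sentences and silently skips the fidelity-to-measurement-probability conversion and the input-encoding mismatch, both of which you work out explicitly and correctly.
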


\begin{proof}
Assume that $EQS$ approximately admits Scheme DC.  Since the parity function is realized in $EQS+DC$ (Proposition \ref{Parity-computable}), there exists a polylogtime QTM that computes $Parity$ due to the characterization theorem  (Theorem \ref{QTM-simulation}) of $EQS$ in terms of polylogtime QTMs. This clearly contradicts the fact that  no polylogtime QTM can compute the parity function  (Lemma \ref{polylogtime-QTM-parity-OR}).
\end{proof}

\section{Further Discussion and Future Directions}

The schematic approach toward quantum computability was initiated in \cite{Yam20} and made a great success to precisely capture quantum polynomial-time computability using the exquisite scheme of \emph{multi-qubit quantum recursion}. The use of such recursion schemes to characterize quantum computability further leads us to a study on the expressibility of the schemes for quantum computations rather than the more popular algorithmic complexity of quantum computations.
In this work, we have made an additional step toward an introduction of a more elementary form of the recursion schematic definition than the one in \cite{Yam20}.
In particular, we have investigated the scheme of \emph{(code-controlled) fast quantum recursion} in Section \ref{sec:different-object} as a basis to the class $EQS$ of ``elementary'' quantum functions and we have demonstrated the usefulness of various quantum functions based on this new scheme in connection to ``parallel'' computability in Section \ref{sec:relationships}.
An additional scheme, \emph{multi-qubit divide-and-conquer}, has been examined and shown not to be approximately admitted within the framework of $EQS$ in Section \ref{sec:divide-and-conquer}.

To promote the future research on the schematic definability of quantum computations, we wish to list seven natural open questions that have been left unanswered throughout this work.
We expect that fruitful research toward the answers to the questions would make significant progress in the near future on the descriptional aspects of quantum computing.

\begin{enumerate}\vs{-2}
  \setlength{\topsep}{-2mm}%
  \setlength{\itemsep}{1mm}%
  \setlength{\parskip}{0cm}%

\item It is quite important to discuss what recursion schemes must be chosen as a basis of $EQS$. In our formulation, Scheme IV in particular looks rather complicated compared to  other schemes. Therefore, we still need  to find more ``natural'' and ``simpler'' schemes needed to define $EQS$ precisely.

\item Scheme V looks quite different from the other schemes. For instance, the recursive application of $f_u$ to compute $F$ in Scheme IV is controlled by ``internal'' conditions, whereas the repeated application of $g$ to compute $F$ in Scheme V is controlled by an ``external''  condition. It is thus desirable to remove Scheme V from the definition of $EQS$ by simply modifying the definitions of Schemes I--IV.  How can we modify them to achieve this goal?

\item We have shown in Theorem \ref{no-admit-DC} that $EQS+DC$ has more expressing power than $EQS$ alone. However, we do not know the exact computational complexity of $EQS+DC$. It is of great importance to determine its exact complexity.

\item Concerning recursion-schematic characterizations of quantum computing, we have focussed our attention only on ``runtime-restricted'' quantum computations. Any discussion on ``space-restricted'' quantum computing has eluded from our attention so far. How can we characterize such computations in terms of recursion schemes?

\item We have discussed the relative complexity of $\bqpolylogtime$ in Section \ref{sec:BQLOGTIME} in comparison to $\nlogtime$ and $\ppolylogtime$. On the contrary, we still do not know whether $\bqpolylogtime_{\bar{\rational}}$ differs from $\mathrm{BPPOLYLOGTIME}$, which is the bounded-error analogue of $\ppolylogtime$. Are they truly different?

\item It is well-known that the choice of (quantum) amplitudes of QTMs affects their computational complexity. In the polynomial-time setting, for example, the bounded-error quantum polynomial-time class $\bqp_{\complex}$ differs from $\bqp_{\tilde{\complex}}$, where $\tilde{\complex}$ is the set of polynomial-time approximable complex numbers. On the contrary, the nondeterministic variant $\nqp_{\complex}$ collapses to  $\nqp_{\tilde{\complex}}$ \cite{YY99}. In the polylogtime setting, is it true that $\bqpolylogtime_{\complex} \neq \bqpolylogtime_{\tilde{\complex}}$?

\item It is desirable to develop a general theory of descriptional complexity based on recursion schematic definitions of quantum functions for a better understanding of quantum computability.
\end{enumerate}


\let\oldbibliography\thebibliography
\renewcommand{\thebibliography}[1]{%
  \oldbibliography{#1}%
  \setlength{\itemsep}{-2pt}%
}

\end{document}